\documentclass{lmcs} 
\pdfoutput=1

\usepackage{lastpage}
\lmcsdoi{15}{4}{8}
\lmcsheading{}{\pageref{LastPage}}{}{}%
{May~01,~2018}{Nov.~07,~2019}{}

\keywords{Mixed Distributive Laws, Coalgebra Modalities, Linear Categories, Bimonads, Differential Categories}

\usepackage{hyperref}
\usepackage{cmll}
\usepackage[all]{xy}
\usepackage{amsmath, amssymb}
\usepackage{dirtytalk}
\usepackage[export]{adjustbox}
\usepackage{stmaryrd} 
\theoremstyle{plain} 


\begin{document}

\title[Lifting Coalg. Mod. and $\mathsf{MELL}$ Model Structure to E.-M. Categories]{Lifting Coalgebra Modalities and $\mathsf{MELL}$ Model Structure to Eilenberg-Moore Categories}

\author[J-S.\,P. Lemay]{Jean-Simon Pacaud Lemay}	
\address{University of Oxford, Computer Science Department, Oxford, UK}	
\email{jean-simon.lemay@kellogg.ox.ac.uk}  
\thanks{The author would like to thank Kellogg College, the Clarendon Fund, and the Oxford-Google DeepMind Graduate Scholarship for financial support.}	





\begin{abstract}   \noindent A categorical model of the multiplicative and exponential fragments of intuitionistic linear logic ($\mathsf{MELL}$), known as a \emph{linear category}, is a symmetric monoidal closed category with a monoidal coalgebra modality (also known as a linear exponential comonad). Inspired by Blute and Scott's work on categories of modules of Hopf algebras as models of linear logic, we study categories of algebras of monads (also known as Eilenberg-Moore categories) as models of $\mathsf{MELL}$. We define a $\mathsf{MELL}$ lifting monad on a linear category as a Hopf monad -- in the Brugui{\`e}res, Lack, and Virelizier sense --  with a special kind of mixed distributive law over the monoidal coalgebra modality. As our main result, we show that the linear category structure lifts to the category of algebras of $\mathsf{MELL}$ lifting monads. We explain how groups in the category of coalgebras of the monoidal coalgebra modality induce $\mathsf{MELL}$ lifting monads and provide a source for such groups from enrichment over abelian groups. Along the way we also define mixed distributive laws of symmetric comonoidal monads over symmetric monoidal comonads and lifting differential category structure. 
\end{abstract}

\maketitle

\subsection*{Acknowledgements:} The author would first like to thank the reviewers of LMCS for their very useful editorial comments and suggestions for this paper, as well as the reviewers of FSCD for their comments on the extended abstract version of this paper \cite{lemay2018lifting}. The author would also like to thank Jamie Vicary for useful discussions, editorial comments, and suggestions.

\section{Introduction}\label{intro}

Linear logic, as introduced by Girard \cite{girard1987linear}, is a resource-sensitive logic which due to its flexibility admits multiple different fragments and a wide range of applications. A categorical model of the multiplicative fragment of intuitionistic linear logic ($\mathsf{MILL}$) \cite{bierman1995categorical, girard1987linear2} is a symmetric monoidal closed category. Categories of modules of Hopf algebras \cite{majid2000foundations, sweedler1969hopf} (over a commutative ring) are important and of interest especially in representation theory \cite{wischnewsky1975linear} due in part as they are monoidal closed categories \cite{Bruguieres2011hopf, majid2000foundations}. Blute \cite{blute1996hopf} and Scott \cite{blute2004category} studied the idea of interpreting categories of modules of Hopf algebras as models of $\mathsf{MILL}$ with negation and its non-commutative variant. If one looks at the more general setting, categories of modules of cocommutative Hopf monoids in arbitrary symmetric monoidal closed categories are again symmetric monoidal closed categories and therefore models of $\mathsf{MILL}$. But for what kind of monoids in categorical models of the multiplicative and exponential fragments of intuitionistic linear logic ($\mathsf{MELL}$), is their categories of modules again a categorical model of $\mathsf{MELL}$?
 
The exponential fragment of $\mathsf{MELL}$ adds in the exponential modality which is a unary connective $\oc$, read as either ``of course'' or ``bang'', admitting four structural rules \cite{bierman1995categorical, mellies2009categorical}: promotion, dereliction, contraction, and weakening. In terms of the categorical semantics, the exponential modality $\oc$ is interpreted as a monoidal coalgebra modality \cite{blute2015cartesian} (see Definition \ref{moncoalgmod}), also known as a linear exponential comonad \cite{schalk2004categorical}, which in particular is a symmetric monoidal comonad (see Definition \ref{SMComonadef}), capturing the promotion and dereliction rules, such that for each object $A$, the exponential $\oc(A)$ comes equipped with a natural cocommutative comonoid structure (see Definition \ref{coalgmod}), capturing the contraction and weakening rules. Categorical models of $\mathsf{MELL}$ are known as linear categories \cite{bierman1995categorical, mellies2003categorical, mellies2009categorical}, which are symmetric monoidal closed categories with monoidal coalgebra modalities. We can now restate the question we seek to answer in this paper: 
 \begin{itemize}
\item  \textbf{Question 1:} \say{For what kind of monoid $A$ in a linear category, is the category of modules of $A$ also a linear category?}
\end{itemize}
Part of the answer regarding the symmetric monoidal closed structure of a linear category is already known: the monoid $A$ needs to be a cocommutative Hopf monoid (see Definition \ref{hopf}). What remains to be answered is how to `extend', or better yet `lift', the monoidal coalgebra modality to the category of modules of $A$. By `lifting' we mean that the forgetful functor from the Eilenberg-Moore category to base category preserves the required structure strictly. Observing that if $A$ is a monoid, then the endofunctor $A \otimes -$ is a monad (see Example \ref{monoidmonad}) and so the category of modules of $A$ corresponds precisely to the Eilenberg-Moore category of the monad $A \otimes -$. Therefore, we can further generalize the question we want to be answered: 
 \begin{itemize}
\item \textbf{Question 2:} \say{For what kind of monad on a linear category, is the Eilenberg-Moore category of algebras of that monad also a linear category?}
\end{itemize}
This now becomes a question of how to lift a comonad to the Eilenberg-Moore category of a monad and this brings us into the realm of distributive laws \cite{beck1969distributive, wisbauer2008algebras}.  Specifically, one requires a natural transformation known as a mixed distributive law \cite{harmer2007categorical,wisbauer2008algebras} (see Definition  \ref{mixeddef}) to obtain a lifting of a comonad to the Eilenberg-Moore category of a monad. In order to lift the symmetric monoidal closed structure of a linear category, the monad in question is required to be a symmetric Hopf monad (see Definition \ref{Hopfmonaddef}) in the sense of Brugui{\`e}res, Lack, and Virelizier \cite{Bruguieres2011hopf,Bruguieres2007hopf}, which are the monad analogue of cocommutative Hopf monoids (see Example \ref{hopfmonadex}). Finally to lift the monoidal coalgebra modality of a linear category the Eilenberg-Moore category of a symmetric Hopf monad one requires a symmetric monoidal mixed distributive law (Definition \ref{mixedmondef}), which is a special kind of mixed distributive law which we introduce in this paper. 

\subsection{Main Definitions and Results:} The main new technical definition of this paper is that of a symmetric monoidal mixed distributive law (Definition \ref{mixedmondef}) of a symmetric comonoidal monad over a symmetric monoidal comonad. Symmetric monoidal mixed distributive laws are mixed distributive laws between the underlying monads and comonads but with extra coherences regarding the symmetric (co)monoidal endofunctor structure. Proposition \ref{liftsymmix} shows that symmetric monoidal mixed distributive laws are in bijective correspondence to liftings of the symmetric monoidal comonad to the Eilenberg-Moore of the symmetric comonoidal monad and vice-versa, which partially answers Question 2. Proposition \ref{symmonresult} studies when a symmetric comonoidal monad of the form $A \otimes -$ admits a symmetric mixed distributive law over a symmetric monoidal comonad, which is related to answering Question 1, and proves that $A$ must be a cocommutative bimonoid in the Eilenberg-Moore category of the comonad. Two new definitions regarding the answer of Question 2 are that of exponential lifting monads (Definition \ref{expliftdef}) and $\mathsf{MELL}$ lifting monads (Definition \ref{mellliftdef}). An exponential lifting monad is a symmetric comonoidal monad with a symmetric monoidal mixed distributive law over a monoidal coalgebra modality, while a $\mathsf{MELL}$ lifting monad is an exponential lifting monad on a linear category which is also a symmetric Hopf monad. Proposition \ref{liftmoncoalg} provides a partial answer to Question 2, while Theorem \ref{mellthm} provides the full answer. We summarize these two main results as follows: 
\begin{itemize}
\item The Eilenberg-Moore category of an exponential lifting monad admits a monoidal coalgebra modality (Proposition \ref{liftmoncoalg}).
\item The Eilenberg-Moore category of a $\mathsf{MELL}$ lifting monad is a linear category (Theorem \ref{mellthm}).
\end{itemize}
Proposition \ref{bigresultprop} and Theorem \ref{bigresultthm} are dedicated to answering Question 1, which we summarize as (by abusing notation slightly): 
\begin{itemize}
\item Monoids $A$ in the Eilenberg-Moore category of a monoidal coalgebra modality induce exponential lifting monads $A \otimes -$, and therefore the category of modules over $A$ admits a monoidal coalgebra modality  (Proposition \ref{bigresultprop}).
\item Groups $H$ in the Eilenberg-Moore category of a monoidal coalgebra modality of a linear category induce $\mathsf{MELL}$ lifting monads $H \otimes -$ and therefore the category of modules over $H$ is a linear category (Theorem \ref{bigresultthm}).
\end{itemize}
We note that since the Eilenberg-Moore category of a monoidal coalgebra modality in a Cartesian monoidal category \cite{schalk2004categorical}, it follows that groups in the Eilenberg-Moore category are in fact also cocommutative Hopf monoids in the base category (see Definition \ref{cart} and Definition \ref{groupdef} for more details). We also address the special case of Lafont categories \cite{mellies2009categorical}, that is when the monoidal coalgebra modality is, in fact, a free exponential modality \cite{mellies2017explicit,slavnov2015banach}. In this case, Corollary \ref{Lafontcor} states that the category of modules over a cocommutative Hopf monoid is again a Lafont category. In the process of studying liftings of monoidal coalgebra modalities, we also discuss liftings of the strictly weaker notion of coalgebra modalities (Definition \ref{coalgmix}) in order to also discuss lifting differential category structure (Definition \ref{difflift}). In discussing differential categories, we also explain how additive structure and negatives provide a source of the desired monoids for answering Question 2 (Proposition \ref{!bimonoid} and Theorem \ref{!hopf}). 

\subsection{Organization of this paper} Section \ref{monoidsec} is dedicated to reviewing the familiar concepts of symmetric monoidal categories, (co)monoids, bimonoids, and Hopf monoids. Section \ref{monadsec} reviews the notions of (co)monads, symmetric comonoidal monads, symmetric monoidal comonads, and Hopf monads, where in particular we also discuss in detail the structures on their Eilenberg-Moore categories. In Section \ref{coalgsec} we review the categorical semantics of $\mathsf{MELL}$ by defining coalgebra modalities, monoidal coalgebra modalities, linear categories, and Lafont categories. Section \ref{SMCmixedsec} begins by reviewing the notion of a mixed distributive law and later also introduces and studies symmetric monoidal mixed distributive laws. Section \ref{coalgmixsec} is dedicated to answering Question 1 and Question 2 by introducing exponential lifting monads and $\mathsf{MELL}$ lifting monads and stating the main results of this paper. Finally, Section \ref{diffsec} addresses lifting differential category structure and how additive structure induces a source of exponential lifting monads and $\mathsf{MELL}$ lifting monads. 

\section{Monoids, Comonoids, Bimonoids, and Hopf Monoids}\label{monoidsec}

In this section, we review the notions of monoids, comonoids, bimonoids, and Hopf monoids. We refer the reader to \cite{kelly1964maclane, mellies2009categorical} for a more detailed introduction to monoids and comonoids, and to \cite{blute2004category, blute1996hopf} for details on bimonoids and Hopf monoids. We conclude this section with examples of Hopf monoids (and therefore also examples of monoids, comonoids, and bimonoids) in various examples of well known symmetric monoidal closed categories (Example \ref{HEX}). 

\subsection{Symmetric Monoidal (Closed) Categories}

We begin by first recalling the definitions of symmetric monoidal categories, Cartesian monoidal categories, and symmetric monoidal closed categories. The axiomatization of these concepts can be given in various equivalent ways. In this paper, we have chosen to express them as found in \cite{mellies2009categorical}. For a more in-depth introduction to symmetric monoidal categories, we refer to reader to the following introductory sources \cite{barr1990category,mac2013categories,mellies2009categorical}

\begin{defi}\label{SMC} A \textbf{symmetric monoidal category} \cite{mellies2009categorical} is a septuple 
\[(\mathbb{X}, \otimes, K, \alpha, \ell, \rho, \sigma)\] consisting of a category $\mathbb{X}$, a functor $\otimes: \mathbb{X} \times \mathbb{X} \to \mathbb{X}$ called the \textbf{monoidal product}, an object $K$ of $\mathbb{X}$ called the \textbf{monoidal unit}, and four natural isomorphisms:
\begin{align*}
\alpha_{A,B,C}&: A \otimes (B \otimes C) \xrightarrow{\cong} (A \otimes B) \otimes C & \ell_{A}&: K \otimes A \xrightarrow{\cong}  A\\
 \sigma_{A, B}&: A \otimes B \xrightarrow{\cong}  B \otimes A &  \rho_{A}&: A \otimes K \xrightarrow{\cong}  A
\end{align*}
such that the following diagrams commute: 

\begin{equation}\label{SMCaxioms}\begin{gathered} 
 \xymatrixcolsep{5pc}\xymatrix{
   A \otimes (B \otimes (C \otimes D))  \ar[r]^-{\alpha_{A, B, C\otimes D}} \ar[d]_{1_A \otimes \alpha_{B,C,D}} &  (A \otimes B) \otimes (C \otimes D)  \ar[dd]^-{\alpha_{A \otimes B, C, D}}\\
      A \otimes ((B \otimes C) \otimes D)  \ar[d]_-{\alpha_{A, B \otimes C, D}}    \\ 
          (A \otimes (B \otimes C)) \otimes D \ar[r]_-{\alpha_{A, B, C} \otimes 1_D} &  ((A \otimes B) \otimes C) \otimes D      
   } \\
    \xymatrixcolsep{5pc}\xymatrix{
          A \otimes (B \otimes C) \ar[r]^-{\alpha_{A,B,C}} \ar[d]_-{1_A \otimes \sigma_{B,C}} & (A \otimes B) \otimes C \ar[d]^-{\sigma_{A \otimes B,C}} \\ 
          A \otimes (C \otimes B) \ar[d]_-{\alpha_{A,C,B}}   & C \otimes (A \otimes B) \ar[d]^-{\alpha_{C,A,B}} \\
           (A \otimes C) \otimes B \ar[r]_-{\sigma_{A,C} \otimes 1_B} & (C \otimes A) \otimes B
   } \\
          \xymatrixcolsep{5pc}\xymatrix{
            A \otimes (K \otimes B) \ar[r]^-{\alpha_{A, K,B}} \ar[dr]_-{1_A \otimes \ell_B} &    (A \otimes K) \otimes B    \ar[d]^-{\rho_A \otimes 1_B}   \\
           & A \otimes B
   }\\
       \xymatrixcolsep{5pc}\xymatrix{
            A \otimes B  \ar[r]^{\sigma_{A,B}}  \ar@{=}[dr]   &  B \otimes A \ar[d]^-{\sigma_{B,A}} \\
            & A \otimes B
   }  \end{gathered}\end{equation}      
 \end{defi}
 
 There are several notions of ``morphisms'' of symmetric monoidal categories, that is, functors between symmetric monoidal categories which are compatible with the symmetric monoidal structures in a certain sense. In this paper we discuss  two such as examples: symmetric comonoidal functors (Definition \ref{SCmonaddef}.i) and symmetric monoidal functors (Definition \ref{SMComonadef}.i). From now on (to avoid overloading definitions with notation), symmetric monoidal categories will simply be denoted as triples $(\mathbb{X}, \otimes, K)$ when there is no confusion about the natural isomorphisms $\alpha$, $\ell$, $\rho$, and $\sigma$. This should hopefully not cause any confusion as this paper only works with symmetric monoidal categories and none of the weaker alternatives (such as monoidal categories or braided monoidal categories \cite{mellies2009categorical}). 
 
A particularly useful natural isomorphism in a symmetric monoidal category is the \textbf{interchange map} $\tau_{A,B,C,D}: (A \otimes B) \otimes (C \otimes D) \xrightarrow{\cong} (A \otimes C) \otimes (B \otimes D)$ which is defined as the following composite: 

\begin{equation}\label{interchange}\begin{gathered} \xymatrixcolsep{5pc} \xymatrixrowsep{1pc}\xymatrix{ (A \otimes B) \otimes (C \otimes D) \ar[dddddd]_-{\tau_{A,B,C,D}} \ar[r]^-{\alpha_{A \otimes B, C, D}} & ((A \otimes B) \otimes C) \otimes D \ar[dd]^-{\alpha^{-1}_{A,B,C} \otimes 1_D} \\
&\\
&  (A \otimes (B \otimes C)) \otimes D \ar[dd]^-{(1_A \otimes \sigma_{B,C}) \otimes 1_D}  \\
& \\ 
& (A \otimes (C \otimes B)) \otimes D \ar[dd]^-{\alpha_{A,C,B} \otimes 1_D}\\
& \\
(A \otimes C) \otimes (B \otimes D)& ((A \otimes C) \otimes B) \otimes D \ar[l]^-{\alpha^{-1}_{A \otimes C, B, D}}  
  } \end{gathered}\end{equation}
The interchange map appears in the coherences of a bimonoid (Definition \ref{bimonoiddef}) and of a monoidal coalgebra modality (Definition \ref{moncoalgmod}). 

Special examples of symmetric monoidal categories are those whose monoidal structure is, in fact, a finite product structure. 

\begin{defi}\label{cart} A \textbf{Cartesian monoidal category} \cite{mellies2009categorical} is a symmetric monoidal category $(\mathbb{X}, \otimes, K)$ such that $K$ is a \textbf{terminal object} \cite{mac2013categories}, that is, for each object $A$ there exists a unique map ${\mathsf{t}_A: A \to K}$; and $A \otimes B$ is a \textbf{product} \cite{mac2013categories} of $A$ and $B$ with projection maps:  
  \[  \xymatrixcolsep{3pc}\xymatrix{A \otimes B \ar[r]^-{1_A \otimes \mathsf{t}_B}  & A \otimes K \ar[r]^-{\rho_A} & A
  } \quad \quad \quad \xymatrixcolsep{3pc}\xymatrix{A \otimes B \ar[r]^-{\mathsf{t}_B \otimes 1_B}  & K \otimes A \ar[r]^-{\ell_B} & B  } \]
that is, for every pair of maps ${f: C \to A}$ and $g: C \to B$, there exists a \emph{unique} map ${\langle f, g \rangle: C \to A \otimes B}$ such that the following diagram commutes: 
\begin{equation}\label{}\begin{gathered} \xymatrixcolsep{5pc}\xymatrix{ & & C \ar@{-->}[d]^-{\exists\oc \langle f, g \rangle} \ar@/^0.5pc/[drr]^-{g} \ar@/_0.5pc/[dll]_-{f} \\
A & \ar[l]^-{\rho_A} A \otimes K & \ar[l]^-{1_A \otimes \mathsf{t}_B} \ar[r]_-{\mathsf{t}_B \otimes 1_B} A \otimes B & K \otimes A \ar[r]_-{\ell_B} & B 
  } \end{gathered}\end{equation}
 The map $\langle f, g \rangle$ is called the pairing of $f$ and $g$.  
\end{defi}

Any category with chosen products and terminal object is a Cartesian symmetric monoidal category where the natural isomorphisms $\alpha$, $\ell$, $\rho$, and $\sigma$ are all induced by the universal property of the product \cite{selinger2010survey}. It is also worth mentioning that a category with chosen coproducts is also a symmetric monoidal category and therefore its opposite category is a Cartesian monoidal category. 

\begin{defi} A \textbf{symmetric monoidal closed category} \cite{mellies2009categorical} is a symmetric mon- oidal category $(\mathbb{X}, \otimes, K)$ such that for each pair of objects $A$ and $B$, there is an object $A \multimap B$, called the \textbf{internal hom} of $A$ and $B$, and a map $\mathsf{ev}_{A,B}: (A \multimap B) \otimes A \to B$, called the \textbf{evaluation map}, such that for every map $f: C \otimes A \to B$, there exists a \emph{unique} map $\overline{f}: C \to A \multimap B$ such that the following diagram commutes: 
\begin{equation}\label{closedeq}\begin{gathered} \xymatrixcolsep{5pc}\xymatrix{ C \otimes A \ar[d]_-{\overline{f}} \ar[r]^-{f} & B \\
(A \multimap B) \otimes B \ar[ur]_-{~~~\mathsf{ev}_{A,B}}
  } \end{gathered}\end{equation}
A \textbf{Cartesian closed category} \cite{barr1990category} is a Cartesian monoidal category which is also a symmetric monoidal closed category.   
\end{defi}

 \begin{exas}\label{SMCex} \normalfont Here are some well-known examples of symmetric monoidal closed categories: 
\begin{enumerate}
\item Let $\mathsf{SET}$ be the category of sets and functions between them. Then $(\mathsf{SET}, \times, \lbrace \ast \rbrace)$ is a Cartesian monoidal category where the product is given by the Cartesian product of sets $\times$, the terminal object is a chosen singleton $\lbrace \ast \rbrace$. $(\mathsf{SET}, \times, \lbrace \ast \rbrace)$ is also a Cartesian closed category where in particular $X \multimap Y$ is the set of all functions from $X$ to $Y$.  
\item Let $\mathsf{REL}$ be the category of sets and relations, where recall that a relation from a set $X$ to a set $Y$, denoted $R: X \to Y$, is a subset $R \subseteq X \times Y$. Then $(\mathsf{REL}, \times, \lbrace \ast \rbrace)$ is a symmetric monoidal category where the monoidal product and monoidal unit are again given by the Cartesian product of sets and a chosen singleton respectively. It is important to note that the Cartesian product of sets is not a product (in the categorical sense) for $\mathsf{REL}$, and therefore $(\mathsf{REL}, \times, \lbrace \ast \rbrace)$ is not a Cartesian monoidal category. $(\mathsf{REL}, \times, \lbrace \ast \rbrace)$ is also a symmetric monoidal closed category with $X \multimap Y := X \times Y$. 
\item Let $\mathbb{K}$ be a field and $\mathsf{VEC}_\mathbb{K}$ the category of $\mathbb{K}$-vector spaces and $\mathbb{K}$-linear maps between them. Then $(\mathsf{VEC}_\mathbb{K}, \otimes, \mathbb{K})$ is a symmetric monoidal closed category where the monoidal product is given by the algebraic tensor product of vector spaces $\otimes$, and the monoidal unit is the field $\mathbb{K}$. $(\mathsf{VEC}_\mathbb{K}, \otimes, \mathbb{K})$ is also symmetric monoidal closed category where $V \multimap W$ is the $\mathbb{K}$-vector space of all $\mathbb{K}$-linear maps from $V$ to $W$. 
\end{enumerate}
\end{exas}

\subsection{Monoids and Comonoids}

Here we review the notion of monoids and comonoids. While monoids and comonoids are dual notions (in the categorical sense), we take the pain of defining both concepts explicitly. 

\begin{defi} Let $(\mathbb{X}, \otimes, K)$ be a symmetric monoidal category. A \textbf{monoid} \cite{mellies2009categorical} in $(\mathbb{X}, \otimes, K)$ is a triple $(A, \nabla, \mathsf{u})$ consisting of an object $A$, a map ${\nabla: A \otimes A\to A}$ called the \textbf{multiplication}, and a map $\mathsf{u}: K \to A$ called the \textbf{unit}, and such that the following diagrams commute: 

      \begin{equation}\label{monoideq}\begin{gathered} \xymatrixcolsep{5pc}\xymatrix{
        A \ar[d]_-{\rho^{-1}_A}  \ar@{=}[ddrr]^-{} \ar[r]^-{\ell^{-1}_A} & K \otimes A   \ar[r]^-{\mathsf{u} \otimes 1_A} & A \otimes A \ar[dd]^-{\nabla}  \\ 
       A \otimes K  \ar[d]_-{1_A \otimes \mathsf{u}} \\       
        A \otimes A \ar[rr]_-{\nabla} & & A }\\
         \xymatrixcolsep{5pc}\xymatrix{
        A \otimes (A \otimes A) \ar[r]^-{\alpha_{A, A, A}} \ar[dd]_-{1_A \otimes \nabla} & (A \otimes A) \otimes A \ar[r]^-{\nabla \otimes 1_A} & A \otimes A \ar[dd]^-{\nabla}  \\ \\
         A \otimes A \ar[rr]_-{\nabla} && A  }\end{gathered}\end{equation}
A monoid $(A, \nabla, \mathsf{u})$ is said to \textbf{commutative} if the following diagram commutes: 
 \begin{equation}\label{}\begin{gathered} \xymatrixcolsep{5pc}\xymatrix{ A \otimes A \ar[dr]_-{\nabla} \ar[r]^-{\sigma_{A,A}} & A \otimes A \ar[d]^-{\nabla} \\
& A
  } \end{gathered}\end{equation}  
A \textbf{monoid morphism} $f: (A, \nabla, \mathsf{u}) \to (B, \nabla^\prime, \mathsf{u}^\prime)$ between monoids $(A, \nabla, \mathsf{u})$ and $(B, \nabla^\prime, \mathsf{u}^\prime)$ is a map $f: A \to B$ which preserves the multiplication and the unit, that is, the following diagrams commute: 
 \begin{equation}\label{}\begin{gathered} \xymatrixcolsep{5pc}\xymatrix{ A \otimes A \ar[d]_-{\nabla} \ar[r]^-{f \otimes f} & B \otimes B \ar[d]^-{\nabla^\prime} & K \ar[dr]_-{\mathsf{u}^\prime} \ar[r]^-{\mathsf{u}} & A \ar[d]^-{f} \\
 A \ar[r]_-{f} & B & & B 
  } \end{gathered}\end{equation} 
\end{defi}

\begin{defi} Let $(\mathbb{X}, \otimes, K)$ be a symmetric monoidal category. A \textbf{comonoid} \cite{mellies2009categorical} in $(\mathbb{X}, \otimes, K)$ is a triple $(C, \Delta, \mathsf{e})$ consisting of an object $C$, a map $\Delta: C \to C \otimes C$ called the \textbf{comultiplication}, and a map $\mathsf{e}: C \to K$ called the \textbf{counit} such that the following diagrams commute:
\begin{equation}\label{comonoid}\begin{gathered} \xymatrixcolsep{5pc}\xymatrix{C \ar@{=}[drr]^-{} \ar[d]_-{\Delta} \ar[r]^-{\Delta} & C \otimes C \ar[r]^-{\mathsf{e} \otimes 1_C} & K \otimes C \ar[d]^-{\ell_C}   \\
C \otimes C \ar[r]_-{1_C \otimes \mathsf{e}} & C \otimes K \ar[r]_-{\rho_C} & C
 } \\
 \xymatrixcolsep{5pc}\xymatrix{C \ar[r]^-{\Delta} \ar[d]_-{\Delta} & C \otimes C \ar[r]^-{1_C \otimes \Delta} & C \otimes (C \otimes C) \ar[d]^-{\alpha_{C,C,C}} \\
C \otimes C \ar[rr]_-{\Delta\otimes 1_C} && (C \otimes C) \otimes C
 }\end{gathered}\end{equation}
A comonoid $(C, \Delta, \mathsf{e})$ is said to be \textbf{cocommutative} if the following diagram commutes: 
\begin{equation}\label{}\begin{gathered} \xymatrixcolsep{5pc}\xymatrix{ C \ar[dr]_-{\Delta} \ar[r]^-{\Delta} & C \otimes C \ar[d]^-{\sigma_{C,C}} \\
& C \otimes C
  } \end{gathered}\end{equation}       
A \textbf{comonoid morphism} $f: (C, \Delta, \mathsf{e})\to (D, \Delta^\prime, \mathsf{e}^\prime)$ between comonoids $(C, \Delta, \mathsf{e})$ and $(D, \Delta^\prime, \mathsf{e}^\prime)$ is a map $f: C\to D$ which preserves the comultiplication and counit, that is, the following diagrams commute: 
   \begin{equation}\begin{gathered} \xymatrixcolsep{5pc}\xymatrix{
        C   \ar[r]^-{\Delta} \ar[d]_-{f} & C \otimes C \ar[d]^-{f \otimes f} & C  \ar[r]^-{f} \ar[dr]_-{\mathsf{e}} & D \ar[d]^{\mathsf{e}^\prime}\\
        D  \ar[r]_-{\Delta^\prime} &  D \otimes D & & K}\end{gathered}\end{equation}
\end{defi}

Every object of a Cartesian monoidal category comes equipped with a unique cocommutative comonoid structure using the diagonal map and terminal map. In the categorical semantics of linear logic, this is of particular interests when considering the Eilenberg-Moore category of a monoidal coalgebra modality (Section \ref{moncoalgsec}).   

\begin{lemC}[\cite{porst2008categories}]\label{Cartcom} Let $(\mathbb{X}, \otimes, K)$ be a Cartesian monoidal category. Then every object $A$ admits a unique cocommutative comonoid structure $(A, \Delta_A, \mathsf{t}_A)$ where the counit $\mathsf{t}_A: A \to K$ is the unique map to the terminal object $K$, and $\Delta_A := \langle 1_A, 1_A \rangle$, that is, the unique map which makes the following diagram commute: 
\begin{equation}\label{}\begin{gathered} \xymatrixcolsep{5pc}\xymatrix{ & & A \ar@{-->}[d]^-{\exists\oc \Delta_A} \ar@{=}[drr]^-{} \ar@{=}[dll]_-{} \\
A & \ar[l]^-{\rho_A} A \otimes K & \ar[l]^-{1_A \otimes \mathsf{t}_B} \ar[r]_-{\mathsf{t}_B \otimes 1_B} A \otimes A & K \otimes A \ar[r]_-{\ell_A} & A 
  } \end{gathered}\end{equation}
  Furthermore, for every map $f: A \to B$, $f: (A, \Delta_A, \mathsf{t}_A) \to (B, \Delta_B, \mathsf{t}_B)$ is a comonoid morphism. 
\end{lemC}

\subsection{Bimonoids and Hopf Monoids}

Bimonoids are both monoids and comonoids such that monoid and comonoid structures are compatible.

\begin{defi}\label{bimonoiddef} In a symmetric monoidal category $(\mathbb{X}, \otimes, K)$, a \textbf{bimonoid} \cite{porst2008categories} is a quintuple $(A, \nabla, \mathsf{u}, \Delta, \mathsf{e})$ such that $(A, \nabla, \mathsf{u})$ is a monoid, $(A, \Delta, \mathsf{e})$ is a comonoid, and the following diagrams commute: 
\begin{equation}\label{bimonoideq}\begin{gathered}\xymatrixcolsep{3pc}\xymatrix{K \ar@{=}[dr] \ar[r]^-{\mathsf{u}} & A \ar[d]^-{\mathsf{e}} & A \otimes A \ar[dd]_-{\nabla} \ar[rr]^-{\Delta \otimes \Delta}&& (A \otimes A) \otimes (A \otimes A) \ar[d]^-{\tau_{A,A,A,A}}   \\ 
& K  & && (A \otimes A) \otimes (A \otimes A) \ar[d]^-{\nabla \otimes \nabla} \\
&&   A \ar[rr]_-{\Delta} && A \otimes A 
  } \\
  \xymatrixcolsep{5pc}\xymatrix{A \otimes A \ar[d]_{\mathsf{e} \otimes \mathsf{e}} \ar[r]^-{\nabla} & A  \ar[d]^-{\mathsf{e}}& K \ar[d]_-{\ell_K} \ar[r]^-{\mathsf{u}} & A \ar[d]^-{\Delta} \\ 
K \otimes K \ar[r]_-{\ell_K}&K& K \otimes K \ar[r]_-{\mathsf{u} \otimes \mathsf{u}} & A \otimes A
  }\end{gathered}\end{equation}
where $\tau$ is the interchange map as defined in (\ref{interchange}). A bimonoid is said to be (co)commutative if the underlying (co)monoid is (co)commutative.   
\end{defi}

For a Cartesian monoidal category, every monoid induces a cocommutative bimonoid. This is because the bimonoid identities are equivalent to requiring that the multiplication and unit be comonoid morphisms, and dually that the comultiplication and counit be monoid morphisms. 

\begin{lemC}[\cite{porst2008categories}]\label{Cartbim} Let $(\mathbb{X}, \otimes, K)$ be a Cartesian monoidal category and $(A, \nabla, \mathsf{u})$ a monoid in $(\mathbb{X}, \otimes, K)$. Then $(A, \nabla, \mathsf{u}, \Delta_A, \mathsf{t}_A)$ is a cocommutative bimonoid, where $(A,\Delta_A, \mathsf{t}_A)$ is the unique cocommutative comonoid structure on $A$ from Lemma \ref{Cartcom}. \end{lemC}

\begin{defi}In a symmetric monoidal category $(\mathbb{X}, \otimes, K)$, a \textbf{Hopf monoid} \cite{blute2004category, blute1996hopf} is a sextuple $(H, \nabla, \mathsf{u}, \Delta, \mathsf{e}, \mathsf{S})$ consisting of a bimonoid $(H, \nabla, \mathsf{u}, \Delta, \mathsf{e})$ and a map $\mathsf{S}: H \to H$ called the antipode such that the following diagram commutes: 
\begin{equation}\label{hopf}\begin{gathered}\xymatrix{& H \otimes H \ar[rr]^-{1 \otimes \mathsf{S}} & & H \otimes H \ar[dr]^-{\nabla}  \\
H \ar[ur]^-{\Delta} \ar[dr]_-{\Delta} \ar[rr]^-{\mathsf{e}} &   & K \ar[rr]^-{\mathsf{u}} & & H \\
&H \otimes H \ar[rr]_-{\mathsf{S} \otimes 1} & & H \otimes H \ar[ur]_-{\nabla}
  } \end{gathered}\end{equation}
A Hopf monoid is said to be (co)commutative if its underlying bimonoid is (co)commutative.   
\end{defi}

Here are a few well known properties about Hopf monoids: 

\begin{lemC}[\cite{blute1996hopf, majid2000foundations}]\label{lemmahopf}  Let $(H, \nabla, \mathsf{u}, \Delta, \mathsf{e}, \mathsf{S})$ be a Hopf monoid in a symmetric monoidal category. 
\begin{enumerate}[{\em (i)}]
\item The antipode is unique, that is, if $(H, \nabla, \mathsf{u}, \Delta, \mathsf{e}, \mathsf{S}^\prime)$ is also a Hopf monoid then $\mathsf{S} = \mathsf{S}^\prime$.
\item If $(H, \nabla, \mathsf{u}, \Delta, \mathsf{e}, \mathsf{S})$ is commutative (resp. cocommutative) then the antipode $\mathsf{S}: (H, \nabla, \mathsf{u}) \to (H, \nabla, \mathsf{u})$ is a monoid morphism (resp. $\mathsf{S}: (H, \Delta, \mathsf{e}) \to (H, \Delta, \mathsf{e})$ is a comonoid morphism).  
\end{enumerate}
\end{lemC} 

In Cartesian monoidal categories, Hopf monoids are better known as groups: 

\begin{defi}\label{groupdef} In a Cartesian monoidal category $(\mathbb{X}, \otimes, K)$, a \textbf{(abelian) group} \cite{hasegawa2010bialgebras} is a quadruple $(G, \nabla, \mathsf{u}, \mathsf{S})$ consisting of a (commutative) monoid $(G, \nabla, \mathsf{u})$ and a map $\mathsf{S}: G \to G$ such that $(G, \nabla, \mathsf{u}, \Delta_A, \mathsf{t}_A, \mathsf{S})$ is a (commutative and) cocommutative Hopf monoid, where $(A,\Delta_A, \mathsf{t}_A)$ is the unique cocommutative comonoid structure on $A$ from Lemma \ref{Cartcom}. 
\end{defi}

\begin{exas} \label{HEX} \normalfont As cocommutative Hopf monoids will be of particular interest for this paper, we conclude this section with some examples of Hopf monoids in the symmetric monoidal categories from Example \ref{SMCex}. 
 \begin{enumerate}
\item Since $(\mathsf{SET}, \times, \lbrace \ast \rbrace)$ is a Cartesian monoidal category, by Lemma \ref{Cartcom}, every set admits a unique cocommutative comonoid structure. Monoids in $(\mathsf{SET}, \times, \lbrace \ast \rbrace)$ coincide with the classical notion of monoids, and by Lemma \ref{Cartbim} are therefore cocommutative bimonoids. Similarly, groups in $(\mathsf{SET}, \times, \lbrace \ast \rbrace)$ correspond to the classical notion of groups. Explicitly if $G$ is a group in the classical sense with binary operation $\cdot$, unit element $u$, and inverse operation $(-)^{-1}$, then $(G, \nabla, \mathsf{u}, \Delta, \mathsf{e}, \mathsf{S})$ is a cocommutative Hopf monoid in $(\mathsf{SET}, \times, \lbrace \ast \rbrace)$ where: 
\[\xymatrixcolsep{5pc}\xymatrixrowsep{0.1pc}\xymatrix{ G \times G \ar[r]^-{\nabla} & G && \lbrace \ast \rbrace \ar[r]^-{\mathsf{u}} & G  \\
(g,h) \ar@{|->}[r]^-{} & g \cdot h && \ast \ar@{|->}[r]^-{} & u  
  } \]
  
  \[\xymatrixcolsep{5pc}\xymatrixrowsep{0.1pc}\xymatrix{ G  \ar[r]^-{\Delta} & G\times G && G \ar[r]^-{\mathsf{e}} & \lbrace \ast \rbrace  \\
g \ar@{|->}[r]^-{} & (g,g) && g \ar@{|->}[r]^-{} & \ast 
  } \]
  
  \[\xymatrixcolsep{5pc}\xymatrixrowsep{0.1pc}\xymatrix{G \ar[r]^-{\mathsf{S}} & G \\
g \ar@{|->}[r]^-{} & g^{-1} 
  } \]
\item Monoids in $(\mathsf{REL}, \times, \lbrace \ast \rbrace)$ are studied in \cite{jenvcova2017monoids}, while bimonoids and Hopf monoids in $(\mathsf{REL}, \times, \lbrace \ast \rbrace)$ are studied in \cite{hasegawa2010bialgebras}. In particular, every group in $(\mathsf{SET}, \times, \lbrace \ast \rbrace)$ induces a cocommutative Hopf monoid in $(\mathsf{REL}, \times, \lbrace \ast \rbrace)$. Explicitly, if $G$ is a group (in the classical sense with binary operation, unit, and inverse operation as in the previous example), then the following relations provide a cocommutative Hopf monoid structure on $G$ in $(\mathsf{REL}, \times, \lbrace \ast \rbrace)$: 
\[ \nabla := \lbrace \left((g,h), g\cdot h \vert~ g, h \in G \right) \rbrace \subseteq (G \times G) \times G \quad \quad \mathsf{u} := \lbrace (\ast, u) \rbrace \subseteq  \lbrace \ast \rbrace \times G \]

\[\Delta := \lbrace \left(g, (g,g) \right)  \rbrace \subseteq G \times (G \times G) \quad \quad \quad \mathsf{e} := \lbrace (g, \ast) \vert~ g \in G \rbrace \subseteq G \times \lbrace \ast \rbrace \]

\[\mathsf{S} := \lbrace (g, g^{-1}) \vert~ g \in G \rbrace \subseteq G \times G\]
Note that since $(\mathsf{REL}, \times, \lbrace \ast \rbrace)$ is not a Cartesian monoidal category, there are examples of comonoids in $(\mathsf{REL}, \times, \lbrace \ast \rbrace)$ whose comultiplication and counit are not given as the diagonal relation or terminal relation as above. See Example \ref{linex}.2 for such an example. 

\item Monoids, comonoids, bimonoids, and Hopf monoids in $(\mathsf{VEC}_\mathbb{K}, \otimes, \mathbb{K})$ are more commonly referred to as $\mathbb{K}$-algebras, $\mathbb{K}$-coalgebras, $\mathbb{K}$-bialgebras, and $\mathbb{K}$-Hopf algebras respectively. A simple example of a cocommutative $\mathbb{K}$-Hopf algebra are group $\mathbb{K}$-algebras \cite{majid2000foundations}. Let $G$ be a group (again in the classical sense) and let $\mathbb{K}[G]$ be the free $\mathbb{K}$-vector space over the set $G$, where we denote the basis elements of $\mathbb{K}[G]$ as $e_g$ for all $g \in G$. $\mathbb{K}[G]$ is known as the group $\mathbb{K}$-algebra and it is a cocommutative $\mathbb{K}$-Hopf algebra with the following $\mathbb{K}$-linear maps (which are defined on the basis elements):
 \[\xymatrixcolsep{5pc}\xymatrixrowsep{0.1pc}\xymatrix{ \mathbb{K}[G] \otimes \mathbb{K}[G] \ar[r]^-{\nabla} & \mathbb{K}[G] && \mathbb{K} \ar[r]^-{\mathsf{u}} &  \mathbb{K}[G]  \\
e_g \otimes e_h \ar@{|->}[r]^-{} & e_{g \cdot h} && 1 \ar@{|->}[r]^-{} & e_u  
  } \]
  
   \[\xymatrixcolsep{5pc}\xymatrixrowsep{0.1pc}\xymatrix{  \mathbb{K}[G]  \ar[r]^-{\Delta} & \mathbb{K}[G]\otimes \mathbb{K}[G] && \mathbb{K}[G] \ar[r]^-{\mathsf{e}} & \mathbb{K}  \\
e_g \ar@{|->}[r]^-{} & e_g \otimes e_g && e_g \ar@{|->}[r]^-{} & 1
  } \]
  
   \[ \xymatrixcolsep{5pc}\xymatrixrowsep{0.1pc}\xymatrix{\mathbb{K}[G] \ar[r]^-{\mathsf{S}} & \mathbb{K}[G] \\
e_g \ar@{|->}[r]^-{} & e_{g^{-1}}
  } \]
Other examples of $\mathbb{K}$-Hopf algebras include the polynomial rings $\mathbb{K}[x_1, \hdots, x_n]$, the tensor algebra $\mathsf{T}(V)$ and the symmetric algebra $\mathsf{S}(V)$ over a $\mathbb{K}$-vector spaces $V$. For more details on these examples and other examples of Hopf algebras, see \cite{blute2004category, blute1996hopf, majid2000foundations, sweedler1969hopf}. 
\end{enumerate}
\end{exas}

\section{Monads and Comonads on Symmetric Monoidal Categories}\label{monadsec}

In this section, we review the notion of monads and comonads and their respective Eilenberg-Moore categories. In this paper, we are particularly interested in monads and comonads on symmetric monoidal categories whose Eilenberg-Moore category is again a symmetric monoidal category such that the forgetful functor preserves the symmetric monoidal structure strictly. These special monads and comonads are known as symmetric comonoidal monads and symmetric monoidal comonads respectively. Again while these concepts are dual notions, we define both in detail to provide a complete story and to introduce notion. We also discuss Hopf monads, in the sense of Brugui{\`e}res, Lack, and Virelizier, in order to lift symmetric monoidal closed structure. We refer the reader to \cite{mac2013categories} for a detailed introduction of monads and comonads, and to \cite{Bruguieres2011hopf,Bruguieres2007hopf,moerdijk2002monads} for more details on comonoidal monads and Hopf monads, and to \cite{bierman1995categorical, hyland2003glueing, mellies2003categorical, mellies2009categorical} for more details on symmetric monoidal comonads.

\subsection{Monads and their Algebras}

\begin{defi} Let $\mathbb{X}$ be a category. 
\begin{enumerate}[{\em (i)}]
\item A \textbf{monad} \cite{mac2013categories} on a category $\mathbb{X}$ is a triple $(\mathsf{T}, \mu, \eta)$ consisting of a functor ${\mathsf{T}: \mathbb{X} \to \mathbb{X}}$ and two natural transformations $\mu_A: \mathsf{T}\mathsf{T}(A) \to \mathsf{T}(A)$ and $\eta_A: A \to \mathsf{T}(A)$ such that the following diagrams commute:
      \begin{equation}\label{monadeq}\begin{gathered} \xymatrixcolsep{5pc}\xymatrix{
        \mathsf{T}  A  \ar[r]^-{\mathsf{T}(\eta_A)} \ar[d]_-{\eta_{\mathsf{T}(A)}} \ar@{=}[dr]^-{}& \mathsf{T} \mathsf{T}(A) \ar[d]^-{\mu_A}  & \mathsf{T} \mathsf{T} \mathsf{T}(A)  \ar[r]^-{\mu_{\mathsf{T}(A)}} \ar[d]_-{\mathsf{T}(\mu_A)} & \mathsf{T} \mathsf{T}(A)  \ar[d]^-{\mu_A}\\
        \mathsf{T} \mathsf{T}(A) \ar[r]_-{\mu_A} & \mathsf{T}(A)  & \mathsf{T} \mathsf{T}(A) \ar[r]_-{\mu_A} & \mathsf{T}(A)}\end{gathered}\end{equation} 
 \item A \textbf{$\mathsf{T}$-algebra} is a pair $(A, \nu)$ consisting of an object $A$ and a map $\nu: \mathsf{T}(A) \to A$ of $\mathbb{X}$ such that the following diagrams commute:
\begin{equation}\label{Talg}\begin{gathered} \xymatrixcolsep{5pc}\xymatrix{
        A  \ar[r]^-{\eta_A} \ar@{=}[dr]^-{}& \mathsf{T}(A) \ar[d]^-{\nu}  &  \mathsf{T} \mathsf{T}(A)  \ar[r]^-{\mu_A} \ar[d]_-{\mathsf{T}(\nu)} & \mathsf{T}(A)  \ar[d]^-{\nu}\\
        & A  & \mathsf{T}(A) \ar[r]_-{\nu} & A}\end{gathered}\end{equation}
\item For each object $A$, the \textbf{free $\mathsf{T}$-algebra} over $A$ is the $\mathsf{T}$-algebra $(\mathsf{T}(A), \mu_A)$. 
\item A \textbf{$\mathsf{T}$-algebra morphism} $f: (A, \nu)\to (B, \nu^\prime)$ is a map $f: A\to B$ such that the following diagrams commute: 
\begin{equation}\begin{gathered} \xymatrixcolsep{5pc}\xymatrix{
       \mathsf{T}(A)  \ar[r]^-{\nu} \ar[d]_-{\mathsf{T}(f)} & A \ar[d]^-{f} \\
        \mathsf{T}(B) \ar[r]_-{\nu^\prime} & B} \end{gathered}\end{equation}     
\item The category of $\mathsf{T}$-algebras and $\mathsf{T}$-algebra morphisms is denoted $\mathbb{X}^\mathsf{T}$ and is called as the \textbf{Eilenberg-Moore category of algebras} of the monad $(\mathsf{T}, \mu, \eta)$. There is a forgetful functor $\mathsf{U}^{\mathsf{T}}: \mathbb{X}^\mathsf{T} \to \mathbb{X}$, which is defined on objects as $\mathsf{U}^\mathsf{T}(A, \nu)= A$ and on maps as $\mathsf{U}^\mathsf{T}(f)=f$.  
\end{enumerate}
\end{defi}

A particular source of examples of monads which we will be interested in for this paper are monads on symmetric monoidal category whose endofunctors are of the form $A \otimes -$. Monads of this form are induced by monoids \cite{Bruguieres2007hopf} and their algebras are more commonly known as modules. 

\begin{exas}\label{monoidmonad} \normalfont For a symmetric monoidal category $(\mathbb{X}, \otimes, K)$, every object $A$ induces an endofunctor $A \otimes -: \mathbb{X} \to \mathbb{X}$ defined in the obvious way. If $(A, \nabla, \mathsf{u})$ is a monoid in $(\mathbb{X}, \otimes, K)$, then the triple $(A \otimes -, \mu^\nabla, \eta^{u})$ is a monad on $\mathbb{X}$ where the monad structure $\mu^\nabla_X: A \otimes (A \otimes X) \to A \otimes X$ and $\eta_A^\mathsf{u}: X \to A \otimes X$ are defined as follows: 
\begin{equation}\label{}\begin{gathered} \mu^\nabla_X := \xymatrixcolsep{5pc}\xymatrix{A \otimes (A \otimes X) \ar[r]^-{\alpha_{A,A,X}} & (A \otimes A) \otimes X \ar[r]^-{\nabla \otimes 1_X} & A \otimes X 
  } \\ \eta^\mathsf{u}_X := \xymatrixcolsep{5pc}\xymatrix{X \ar[r]^-{\ell^{-1}_X} & K \otimes X \ar[r]^-{\mathsf{u} \otimes 1_X} & A \otimes X 
  } \end{gathered}\end{equation}
 Conversly, if $(A \otimes -, \mu, \eta)$ is a monad on $\mathbb{X}$, then the triple $(A, \mathsf{m}^\mu, \mathsf{u}^\eta)$ is a monoid in  $(\mathbb{X}, \otimes, K)$ where $\mathsf{m}^\mu: A \otimes A \to A$ and $\mathsf{u}^\eta: K \to A$ are defined as follows: 
 \begin{equation}\label{}\begin{gathered} \mathsf{m}^\mu := \xymatrixcolsep{5pc}\xymatrix{A \otimes A \ar[r]^-{1_A \otimes \rho^{-1}_A} & A \otimes (A \otimes K) \ar[r]^-{\mu_K} & A \otimes K \ar[r]^-{\rho_A} & A 
  } \\
   \mathsf{u}^\eta := \xymatrixcolsep{5pc}\xymatrix{K \ar[r]^-{\eta_K} & A \otimes K   \ar[r]^-{\rho_A} & A 
  } \end{gathered}\end{equation}
 In fact, these constructions provide a bijective correspondence between monoid structures on $A$ and monad structures on $A \otimes -$ \cite{Bruguieres2007hopf} . For a monoid $(A, \nabla, \mathsf{u})$, the algebras of the induced monad $(A \otimes -, \mu^\nabla, \eta^{u})$ are better known as (left) $(A, \nabla, \mathsf{u})$-modules \cite{brandenburg2014tensor}. An $(A, \nabla, \mathsf{u})$-module is a pair $(M, \alpha)$ consisting of an object $M$ and a map $\nu: A \otimes M \to M$ satisfying (\ref{Talg}), that is, the following diagrams commute:
\begin{equation}\label{}\begin{gathered} \xymatrixcolsep{5pc}\xymatrix{A \otimes (A \otimes M) \ar[d]_-{1_A \otimes \nu} \ar[r]^-{\alpha_{A,A,M}} & (A \otimes A) \otimes M \ar[r]^-{\nabla \otimes 1_M} & A \otimes M \ar[d]^-{\nu} \\
A \otimes M \ar[rr]_-{\nu} && M  \\
M \ar@{=}[drr]^-{}  \ar[r]^-{\ell^{-1}_M} & K \otimes M \ar[r]^-{\mathsf{u} \otimes 1_M} & A \otimes M \ar[d]^-{\nu} \\
&& M
  } \end{gathered}\end{equation}  
  The Eilenberg-Moore category of $(A \otimes -, \mu^\nabla, \eta^{u})$ will be denoted as $\mathsf{MOD}(A, \nabla, \mathsf{u}) := \mathbb{X}^{A \otimes -}$. 
\end{exas}

\subsection{Symmetric Comonoidal Monads}

Before giving the definition, we should address the terminology regarding comonoidal monads. Comonoidal monads were originally introduced under the name Hopf monad by Moerdijk \cite{moerdijk2002monads}, but are also referred to as \emph{bimonad} in \cite{Bruguieres2011hopf,Bruguieres2007hopf} since bimonoids and Hopf monoids induce examples of such monads. However, in the linear logic community, the terminology for the dual notion is symmetric monoidal comonad \cite{bierman1995categorical, hyland2003glueing, mellies2003categorical, mellies2009categorical}. Therefore we have elected to use the same terminology in this paper and thus refer to these monads as symmetric comonoidal monads. 

\begin{defi}\label{SCmonaddef} Let $(\mathbb{X}, \otimes, K)$ be a symmetric monoidal category. A \textbf{symmetric \\ \noindent comonoidal endofunctor} \footnote{Comonoidal functors are also known as oplax monoidal functors \cite{mccrudden2002opmonoidal}.} \cite{mccrudden2002opmonoidal} on $(\mathbb{X}, \otimes, K)$ is a triple $(\mathsf{T}, \mathsf{n}, \mathsf{n}_K)$ consisting of an endofunctor $\mathsf{T}: \mathbb{X} \to \mathbb{X}$, a natural transformation $\mathsf{n}_{A,B}: \mathsf{T}(A \otimes B) \to \mathsf{T}(A) \otimes \mathsf{T}(B)$, and a map ${\mathsf{n}_K: \mathsf{T}(K) \to K}$ such that the following diagrams commute: 
  \begin{equation}\label{smcendo}\begin{gathered} 
 \xymatrixcolsep{4pc}\xymatrix{\mathsf{T}\left( A \otimes (B \otimes C) \right) \ar[d]_-{\mathsf{T}(\alpha_{A,B,C})} \ar[r]^-{\mathsf{n}_{A, B \otimes C}} & \mathsf{T}(A) \otimes \mathsf{T}(B \otimes C) \ar[r]^-{1_{\mathsf{T}(A)} \otimes \mathsf{n}_{B,C}} &   \mathsf{T}(A) \otimes \left( \mathsf{T}(B) \otimes \mathsf{T}(C) \right) \ar[d]^-{\alpha_{\mathsf{T}(A), \mathsf{T}(B), \mathsf{T}(C)}} \\
\mathsf{T}\left( (A \otimes B) \otimes C \right) \ar[r]_-{\mathsf{n}_{A \otimes B, C}} & \mathsf{T}(A \otimes B) \otimes \mathsf{T}(C) \ar[r]_-{\mathsf{n}_{A,B} \otimes 1_{\mathsf{T}(C)}} & \left( \mathsf{T}(A) \otimes \mathsf{T}(B) \right) \otimes \mathsf{T}(C) 
 } \\
  \xymatrixcolsep{4pc}\xymatrix{
\mathsf{T}(A) \ar[d]_-{\mathsf{T}(\rho^{-1}_A)} \ar@{=}[ddrr]^-{} \ar[r]^-{\mathsf{T}(\ell^{-1}_A)} & \mathsf{T}(K \otimes A) \ar[r]^-{\mathsf{n}_{K,A}} & \mathsf{T}(K) \otimes \mathsf{T}(A) \ar[d]^-{\mathsf{n}_K \otimes 1_{\mathsf{T}(A)}} \\
 \mathsf{T}(A \otimes K) \ar[d]_-{\mathsf{n}_{A,K}}& &  K \otimes \mathsf{T}(A) \ar[d]^-{\ell_{\mathsf{T}(A)}}  \\
 \mathsf{T}(A) \otimes \mathsf{T}(K) \ar[r]_-{1_{\mathsf{T}(A)} \otimes \mathsf{n}_K}& \mathsf{T}(A) \otimes K \ar[r]_-{\rho_{\mathsf{T}(A)}} &  \mathsf{T}(A) 
 } \\
  \xymatrixcolsep{4pc}\xymatrix{
 \mathsf{T}(A \otimes B)  \ar[r]^-{\mathsf{T}(\sigma_{A,B})}  \ar[d]_-{\mathsf{n}_{A,B}} & \mathsf{T}( B \otimes  A) \ar[d]^-{\mathsf{n}_{B,A}} \\
  \mathsf{T}(A) \otimes \mathsf{T}(B) \ar[r]_-{\sigma_{\mathsf{T}(A),\mathsf{T}(B)}} &  \mathsf{T}(B) \otimes \mathsf{T}(A)
 }
 \end{gathered}\end  {equation}
\item A \textbf{symmetric comonoidal monad} \cite{moerdijk2002monads} on $(\mathbb{X}, \otimes, K)$ is a quintuple $(\mathsf{T}, \mu, \eta, \mathsf{n}, \mathsf{n}_{K})$ consisting of a monad $(\mathsf{T}, \mu, \eta)$ and a symmetric comonoidal endofunctor $(\mathsf{T}, \mathsf{n}, \mathsf{n}_{K})$ such that the following diagrams commute: 
\begin{equation}\label{symbimonad}\begin{gathered} \xymatrixcolsep{4.5pc}\xymatrix{ \mathsf{T} \mathsf{T} (A \otimes B) \ar[d]_-{\mathsf{T}(\mathsf{n}_{A,B})} \ar[r]^-{\mu_{A  \otimes B}} &\mathsf{T}(A \otimes B) \ar[dd]^-{\mathsf{n}_{A,B}} & \mathsf{T} \mathsf{T}(K) \ar[r]^-{\mu_K} \ar[d]_-{\mathsf{T}(\mathsf{n}_{K})} & \mathsf{T}(K) \ar[d]^-{\mathsf{n}_{K}}\\
\mathsf{T}(\mathsf{T}(A) \otimes\mathsf{T}(B)) \ar[d]_-{\mathsf{n}_{\mathsf{T}(A),\mathsf{T}(B)}} & &  \mathsf{T}(K) \ar[r]_-{\mathsf{n}_{K}} & K   \\
\mathsf{T}\mathsf{T}(A) \otimes \mathsf{T}\mathsf{T}(B) \ar[r]_-{\mu_A \otimes \mu_B} & \mathsf{T}(A) \otimes\mathsf{T}(B) 
  } \\  
  \xymatrixcolsep{5pc}\xymatrix{ A \otimes B \ar[dr]_-{\eta_A \otimes \eta_B} \ar[r]^-{\eta_A} & \mathsf{T}(A \otimes B) \ar[d]^-{\mathsf{n}_{A,B}} &  K \ar@{=}[dr]^-{} \ar[r]^-{\eta_A} & \mathsf{T}(K) \ar[d]^-{\mathsf{n}_{K}}\\
& \mathsf{T}(A) \otimes\mathsf{T}(B) & & K 
  } 
  \end{gathered}\end  {equation}
\end{defi} 

As observed by Moerdijk in \cite[Proposition 3.2]{moerdijk2002monads}, the Eilenberg-Moore category of a symmetric comonoidal monad is a symmetric monoidal category. In fact for a monad on a symmetric monoidal category, symmetric comonoidal structures on the monad are in bijective correspondence with symmetric monoidal structures on the Eilenberg-Moore category which are strictly preserved by the forgetful functor \cite{wisbauer2008algebras}. Indeed, let $(\mathsf{T}, \mu, \eta, \mathsf{n}, \mathsf{n}_{K})$ be a symmetric comonoidal monad on a symmetric monoidal category $(\mathbb{X}, \otimes, K)$. Then $\left(\mathbb{X}^\mathsf{T}, \otimes^\mathsf{n}, (K, \mathsf{n}_{K}), \tilde{\alpha}, \tilde{\ell}, \tilde{\rho}, \tilde{\sigma} \right)$ is a symmetric monoidal category. Here the tensor product $\otimes^\mathsf{n}$ of a pair of $\mathsf{T}$-algebras $(A, \nu)$ and $(B, \nu^\prime)$, is defined as the pair $(A, \nu) \otimes^\mathsf{n} (B, \nu^\prime) := (A \otimes B, \nu \otimes^\mathsf{n} \nu^\prime)$ where $\nu \otimes^\mathsf{n} \nu^\prime$ is defined as follows:
\begin{equation}\label{eilentensor}\begin{gathered} \nu \otimes^\mathsf{n} \nu^\prime := \xymatrixcolsep{5pc}\xymatrix{\mathsf{T}(A \otimes B) \ar[r]^-{\mathsf{n}_{A,B}} & \mathsf{T}(A) \otimes\mathsf{T}(B)\ar[r]^-{\nu \otimes \nu^\prime} & A \otimes B
  } \end  {gathered}\end  {equation}
while the tensor product of $\mathsf{T}$-algebra morphisms $f$ and $g$ is defined as $f \otimes^{\mathsf{n}} g := f \otimes g$. This tensor product is well defined by the left hand side diagrams of (\ref{symbimonad}) and also that $\mathsf{n}$ is a $\mathsf{T}$-algebra morphism. While the right hand side diagrams of (\ref{symbimonad}) imply that $(K, \mathsf{n}_K)$ is a $\mathsf{T}$-algebras. This is well defined by the two left hand side diagrams of (\ref{symbimonad}). The natural isomorphisms $\tilde{\alpha}$, $\tilde{\ell}$, $\tilde{\rho}$ and $\tilde{\sigma}$ are defined as:
\begin{align*}
\tilde{\alpha}_{(A,\nu), (B, \nu^\prime), (C. \nu^\prime\prime)} &:= \alpha_{A,B,C} &  \tilde{\ell}_{(A,\nu)} &:= \ell_A\\
\tilde{\sigma}_{(A,\nu), (B, \nu^\prime)} &:= \sigma_{A,B} & \tilde{\rho}_{(A,\nu)} &:= \rho_A
\end{align*}
which are well defined by the diagrams of (\ref{smcendo}). 

As was the case for monads, particular examples of symmetric comonoidal monads of interest for this paper are those whose endofunctors are of the form $A \otimes -$. Symmetric comonoidal monads of this form are induced by cocommutative bimonoids, which is why Brugui{\`e}res, Lack, and Virelizier refer to comonoidal monads as bimonads \cite{Bruguieres2011hopf}. 

\begin{exas}\label{bimonadex} \normalfont Let $(A, \nabla, \mathsf{u}, \Delta, \mathsf{e})$ be a cocommutative bimonoid in a symmetric monoidal category $(\mathbb{X}, \otimes, K)$. Then $(A \otimes -, \mu^\nabla, \eta^\mathsf{u}, \mathsf{n}^\Delta, \mathsf{n}^\mathsf{e}_K)$ is a symmetric comonoidal monad on $(\mathbb{X}, \otimes, K)$ where the monad $(A \otimes -. \mu^\nabla, \eta^\mathsf{u})$ is defined as in Example \ref{monoidmonad} while \\ \noindent $\mathsf{n}^\Delta_{X,Y}: A \otimes (X \otimes Y) \to (A \otimes X) \otimes (A \otimes Y)$ and $\mathsf{n}^\mathsf{e}_K: A \otimes K \to K$ are defined as follows: 
\begin{equation}\label{bimonadn}\begin{gathered} \mathsf{n}^\Delta_{X,Y} := \xymatrixcolsep{5pc}\xymatrixrowsep{0.5pc}\xymatrix{ A \otimes (X \otimes Y) \ar[r]^-{\Delta \otimes (1_X \otimes 1_Y)} &\\
 (A \otimes A) \otimes (X \otimes Y) \ar[r]^-{\tau_{A,A,X,Y}} & (A \otimes X) \otimes (A \otimes Y) 
  } \\ \\
   \mathsf{n}^\mathsf{e}_K := \xymatrixcolsep{5pc}\xymatrix{ A \otimes K \ar[r]^-{\rho_A} & A \ar[r]^-{\mathsf{e}} & K
  } \end{gathered}\end{equation}
where $\tau$ is the interchange map as defined in (\ref{interchange}). Therefore $\left( \mathsf{MOD}(A, \nabla, \mathsf{u}), \otimes^{\mathsf{n}^\Delta}, (K, \mathsf{n}^\mathsf{e}_K) \right)$ is a symmetric monoidal category. Conversly if $(A \otimes -, \mu, \eta, \mathsf{n}, \mathsf{n}_K)$ is a symmetric comonoidal monad on $(\mathbb{X}, \otimes, K)$, then $(A, \mathsf{m}^\mu, \mathsf{u}^\eta, \Delta^\mathsf{n}, \mathsf{e}^{n_K})$ is a cocommutative bimonoid where the monoid $(A, \mathsf{m}^\mu, \mathsf{u}^\eta)$ is defined as in Example \ref{monoidmonad} while $\Delta^{n}: A \to A \otimes A$ and $\mathsf{e}^{n_K}: A \to K$ are defined as follows:
\begin{equation}\label{}\begin{gathered} \Delta^\mathsf{n} := \xymatrixcolsep{2.75pc}\xymatrix{ A \ar[r]^-{\rho^{-1}_A} & A \otimes K \ar[r]^-{1_A \otimes \rho^{-1}_K} & A \otimes (K \otimes K) \ar[r]^-{\mathsf{n}_{K,K}} & (A \otimes K) \otimes (A \otimes K) \ar[r]^-{\rho_A \otimes \rho_A} & A \otimes A
  } \\
\mathsf{e}^{n_K} := \xymatrixcolsep{3pc}\xymatrix{A \ar[r]^-{\rho^{-1}_A} & A \otimes K \ar[r]^-{\mathsf{n}_K} & K
  } \end{gathered}\end{equation}
These constructions are inverse to each other and gives a bijective correspondence between cocommutative bimonoid structure on $A$ and symmetric comonoidal monad structure on $A \otimes -$ \cite{Bruguieres2007hopf}.  \end{exas}

\subsection{Symmetric Hopf Monads}

We now address how to lift symmetric monoidal closed structure to Eilenberg-Moore categories of symmetric comonoidal monads in such a way that the forgetful functor also preserves the closed structure strictly. For this, we turn to Brugui{\`e}res, Lack, and Virelizier's notion of a Hopf monad (which differs to Moerdijk's definition \cite{moerdijk2002monads}). Hopf monads were originally introduced by Brugui{\`e}res and Virelizier for monoidal categories with duals \cite{Bruguieres2007hopf}, but the definition of Hopf monads was later extended to arbitrary monoidal categories by the two previous authors and Lack \cite{Bruguieres2011hopf}. We choose the latter of the two as the definition is somewhat simpler.

\begin{defi}\label{Hopfmonaddef} Let $(\mathsf{T}, \mu, \eta, \mathsf{n}, \mathsf{n}_{K})$ be a symmetric comonoidal monad on a symmetric monoidal category $(\mathbb{X}, \otimes, K)$. The (left and right) \textbf{fusion operators} \cite{Bruguieres2011hopf} of $(\mathsf{T}, \mu, \eta, \mathsf{n}, \mathsf{n}_{K})$ are the natural transformations $\mathsf{h}^l_{A,B}:  \mathsf{T}(A \otimes \mathsf{T} B) \to \mathsf{T}(A) \otimes\mathsf{T}(B)$ and $\mathsf{h}^r_{A,B}: \mathsf{T}(\mathsf{T}(A) \otimes B) \to \mathsf{T}(A) \otimes\mathsf{T}(B)$ defined respectively as: 
    \[\mathsf{h}^l_{A,\mathsf{T}(B)} :=  \xymatrixcolsep{5pc}\xymatrix{ \mathsf{T}(A \otimes \mathsf{T} B) \ar[r]^-{\mathsf{n}_{A,B}} & \mathsf{T}(A) \otimes \mathsf{T}\mathsf{T}(B)\ar[r]^-{1 \otimes \mu_A} & \mathsf{T}(A) \otimes\mathsf{T}(B) 
  } \]
    \[\mathsf{h}^r_{A,B} :=  \xymatrixcolsep{5pc}\xymatrix{\mathsf{T}(\mathsf{T}(A) \otimes B) \ar[r]^-{\mathsf{n}_{\mathsf{T}(A),B}} & \mathsf{T}\mathsf{T}(A) \otimes\mathsf{T}(B)\ar[r]^-{\mu_A \otimes 1} & \mathsf{T}(A) \otimes\mathsf{T}(B)
  } \]
  A \textbf{symmetric Hopf monad} \cite{Bruguieres2011hopf} on a symmetric monoidal category is a symmetric comonoidal monad whose fusion operators are natural isomorphisms.
    \end{defi}
    
Extending on \cite[Theorem 3.6]{Bruguieres2011hopf}, the Eilenberg-Moore category of a Hopf monad on a symmetric monoidal closed category is again a symmetric monoidal closed category such that the forgetful functor preserves the symmetric monoidal closed structure strictly. Explicitly, let $(\mathsf{T}, \mu, \eta, \mathsf{n}, \mathsf{n}_{K})$ be a symmetric Hopf monad on a symmetric monoidal closed category $(\mathbb{X}, \otimes, K)$, with internal hom $\multimap$ and evaluation map $\mathsf{ev}$. Then $\left(\mathbb{X}^\mathsf{T}, \otimes^\mathsf{n}, (K, \mathsf{n}_{K})\right)$ is a symmetric monoidal closed category where the internal hom of a pair of $(A, \nu)$ and $(B, \nu^\prime)$, is defined as the pair $(A, \nu) \multimap (B, \nu^\prime) := (A \multimap B, \overline{\gamma}_{\nu, \nu^\prime})$ where $\overline{\gamma}_{\nu, \nu^\prime}: \mathsf{T}(A \multimap B)\to A \multimap B$ is the currying (as defined in (\ref{closedeq})) of the map ${\gamma_{\nu, \nu^\prime}: \mathsf{T}(A \multimap B) \otimes A\to B}$ defined as the following composite: 
  \[\xymatrixrowsep{1pc}\xymatrixcolsep{4pc}\xymatrix{\mathsf{T}(A \multimap B) \otimes A \ar[r]^-{1_{\mathsf{T}(A \multimap B)} \otimes \eta_A} & \mathsf{T}(A \multimap B) \otimes \mathsf{T}(A) \ar[r]^-{\mathsf{h}^{l^{-1}}_{A \multimap B, A}} &\\
   \mathsf{T} (( A \multimap B) \otimes \mathsf{T}(A)) \ar[r]^-{\mathsf{T}(1_{A \multimap B} \otimes \nu)} & \\
  \mathsf{T} (( A \multimap B) \otimes A) \ar[r]_-{\mathsf{T}(\mathsf{ev}_{A,B})} &  \ar[r]_-{\nu^\prime} \mathsf{T}(B)& B
  } \]
and with evaluation map $\mathsf{ev}_{(A, \nu),(B, \nu^\prime)} := \mathsf{ev}_{A,B}$. 

The name Hopf monad comes from the fact that every Hopf monoid (with invertible antipode, which is the case for cocommutative Hopf monoids) defines a Hopf monad whose endofunctor is of the form $H \otimes -$. 

\begin{exas}\label{hopfmonadex} \normalfont Let $(H, \nabla, \mathsf{u}, \Delta, \mathsf{e}, \mathsf{S})$ be a cocommtuative Hopf monoid in a symmetric monoidal category $(\mathbb{X}, \otimes, K)$. Then for the induced $(H \otimes -, \mu^\nabla, \eta^\mathsf{u}, \mathsf{n}^\Delta, \mathsf{n}^\mathsf{e}_K)$ symmetric comonoidal monad as defined in Example \ref{bimonadex}, the inverse of the left fusion operator, $\mathsf{h}^{l^{-1}}_{A,B}: (H \otimes A) \otimes (H \otimes B) \to H \otimes \left( A \otimes (H \otimes B) \right)$, is given by the following composite: 
  \[ \xymatrixrowsep{1pc} \xymatrixcolsep{7pc}\xymatrix{ (H \otimes A) \otimes (H \otimes B) \ar[r]^-{\alpha_{H,A, H \otimes B}} & \\
  H \otimes \left( A \otimes (H \otimes B) \right) \ar[r]^-{\mathsf{n}_{A, H \otimes B}^\Delta} & \\
  (H \otimes A) \otimes \left(H \otimes (H \otimes B) \right) \ar[r]^-{(1_H \otimes 1_A) \otimes (\mathsf{S} \otimes (1_H \otimes 1_B))} & \\
  (H \otimes A) \otimes \left(H \otimes (H \otimes B) \right) \ar[r]^-{(1_H \otimes 1_A) \otimes \mu^\mathsf{m}_B} & \\
  (H \otimes A) \otimes (H \otimes B) \ar[r]^-{\alpha_{H,A,H \otimes B}} & H \otimes \left( A \otimes (H \otimes B) \right)
  } \]
and the inverse of the right fusion operator is defined similarly. Therefore if $(\mathbb{X}, \otimes, K)$ is closed, then so is $\left( \mathsf{MOD}(H, \nabla, \mathsf{u}), \otimes^{\mathsf{n}^\Delta}, (K, \mathsf{n}^\mathsf{e}_K) \right)$. Conversly if $(H \otimes -, \mu, \eta, \mathsf{n}, \mathsf{n}_K)$ is a symmetric Hopf monad on $(\mathbb{X}, \otimes, K)$, then $(A, \mathsf{m}^\mu, \mathsf{u}^\eta, \Delta^\mathsf{n}, \mathsf{e}^{n_K}, \mathsf{S}^\mathsf{h})$ is a cocommutative Hopf monoid where the cocommutative bimonoid structure is defined as in Example \ref{bimonadex} while the antipode $\mathsf{S}^\mathsf{h}: H \to H$ is defined as the following composite: 
  \[ \xymatrixrowsep{0.75pc}  \xymatrixcolsep{4pc}\xymatrix{H \ar[r]^-{\rho^{-1}_H} & H \otimes K \ar[r]^-{\rho^{-1}_{H \otimes K}} &\\
   (H \otimes K) \otimes K \ar[rr]^-{(1_H \otimes 1_K) \otimes \eta^{\mathsf{u}}_K} && \\
   (H \otimes K) \otimes (H \otimes K) \ar[r]^-{\mathsf{h}^{\ell^{-1}}_{K,K}} &
   H \otimes \left( K \otimes (H \otimes K) \right) \ar[r]^-{\alpha^{-1}_{H,K,H \otimes K}} & \\ (H \otimes K) \otimes (H \otimes K) \ar[r]^-{\mathsf{n}^\mathsf{e}_K \otimes \rho_H} & K \otimes H \ar[r]^-{\ell_H} & H
  } \]
  For more details on this example and a string diagram representation, see \cite[Example 2.10]{Bruguieres2011hopf}. 
\end{exas}

\subsection{Comonads and their Coalgebras}

The dual notion of a monad is that a comonad. Comonads are key structures in categorical semantics of the exponential fragment of linear logic. Throughout this paper, we will encounter various different kinds of comonads with added structure such as symmetric monoidal comonads (Definition \ref{SMComonadef}), coalgebra modalities (Definition \ref{coalgmod}), and monoidal coalgebra modalities (Definition \ref{moncoalgmod}). 

\begin{defi} Let $\mathbb{X}$ be a category. 
\begin{enumerate}[{\em (i)}]
\item A \textbf{comonad} \cite{mac2013categories} on a category $\mathbb{X}$ is a triple $(\oc, \delta, \varepsilon)$ consisting of a functor ${\oc: \mathbb{X} \to \mathbb{X}}$ and two natural transformations $\delta_A: \oc(A) \to \oc \oc(A)$ and $\varepsilon_A: \oc(A) \to A$ such that the following diagrams commute:
      \begin{equation}\label{comonadeq}\begin{gathered} \xymatrixcolsep{5pc}\xymatrix{ 
        \oc(A)  \ar[r]^-{\delta_A} \ar[d]_-{\delta_A} \ar@{=}[dr]^-{}& \oc \oc(A) \ar[d]^-{\varepsilon_{\oc(A)}}  & \oc(A)  \ar[r]^-{\delta_A} \ar[d]_-{\delta_A} & \oc \oc(A)  \ar[d]^-{\delta_{\oc(A)}}\\
        \oc \oc(A) \ar[r]_-{\oc(\varepsilon_A)} & \oc(A)  & \oc \oc(A) \ar[r]_-{\oc(\delta_A)} & \oc \oc \oc(A)}\end{gathered}\end{equation}
\item A \textbf{$\oc$-coalgebra} is a pair $(A, \omega)$ consisting of an object $A$ and a map $\omega: A \to \oc(A)$ of $\mathbb{X}$ such that the following diagrams commute:
        \begin{equation}\label{!coalg}\begin{gathered} \xymatrixcolsep{5pc}\xymatrix{
        A  \ar[r]^-{\omega} \ar@{=}[dr]^-{}& \oc(A) \ar[d]^-{\varepsilon_A}  &  A  \ar[r]^-{\omega} \ar[d]_-{\omega} &  \oc(A)  \ar[d]^-{\delta_A}\\
        &  A  & \oc(A) \ar[r]_-{\oc(\omega)} & \oc \oc(A)}\end{gathered}\end{equation}
\item For each object $A$, the \textbf{cofree $\oc$-coalgebra} over $A$ is the $\oc$-coalgebra $(\oc(A), \delta_A)$. 
\item A \textbf{$\oc$-coalgebra morphism} $f: (A, \omega)\to (B, \omega^\prime)$ is a map $f: A\to B$ such that the following diagram commutes: 
\begin{equation}\label{!coalgmap}\begin{gathered} \xymatrixcolsep{5pc}\xymatrix{
       A  \ar[r]^-{\omega} \ar[d]_-{f} & A \ar[d]^-{\oc(f)} \\
        B \ar[r]_-{\omega^\prime} & \oc(B) } \end{gathered}\end{equation} 
\item The category of $\oc$-coalgebras and $\oc$-coalgebra morphisms is denoted $\mathbb{X}^\oc$ and is also known as the \textbf{Eilenberg-Moore category of coalgebras} of the comonad $(\oc, \delta, \varepsilon)$. There is a forgetful functor $\mathsf{U}^{\oc}: \mathbb{X}^\oc \to \mathbb{X}$, which is defined on objects as $\mathsf{U}^\oc(A, \omega)= A$ and on maps as $\mathsf{U}^\oc(f)=f$.      
\end{enumerate}         
\end{defi}

As comonads are dual to monads, comonoids induce examples of comonads in the dual sense of Example \ref{monoidmonad}. Though these are not usually the examples of comonads of interest in linear logic. We discuss examples of comonads of interest in linear logic and to this paper in Example \ref{linex}.  

\subsection{Symmetric Monoidal Comonads}\label{SMcomonadsec}

\begin{defi}\label{SMComonadef} Let $(\mathbb{X}, \otimes, K)$ be a symmetric monoidal category. A \textbf{symmetric monoidal endofunctor} on $(\mathbb{X}, \otimes, K)$ is a triple $(\oc, \mathsf{m}, \mathsf{m}_K)$ consisting of an endofunctor $\oc: \mathbb{X} \to \mathbb{X}$, a natural transformation $\mathsf{m}_{A,B}: \oc(A) \otimes \oc(B)\to \oc(A \otimes B)$, and a map ${\mathsf{m}_K: K\to \oc(K)}$ such that the following diagrams commute: 

   \begin{equation}\label{smcmonendo}\begin{gathered} \xymatrixcolsep{4.5pc}\xymatrix{\oc(A) \otimes \left( \oc(B) \otimes \oc(C) \right) \ar[d]_-{\alpha_{\oc(A), \oc(B), \oc(C)}} \ar[r]^-{1_{\oc(A)} \otimes \mathsf{m}_{B,C}} & \oc(A) \otimes \oc(B \otimes C) \ar[r]^-{\mathsf{m}_{A, B \otimes C}} & \oc\left( A \otimes (B \otimes C) \right) \ar[d]^-{\oc(\alpha_{A,B,C})} \\
  \left( \oc(A) \otimes \oc(B) \right) \otimes \oc(C) \ar[r]_-{\mathsf{m}_{A,B} \otimes 1_{\oc(C)}} & \oc(A \otimes B) \otimes \oc(C) \ar[r]_-{\mathsf{m}_{A \otimes B, C}}& \oc\left( (A \otimes B) \otimes C \right) 
 } \\
 \xymatrixcolsep{4.5pc}\xymatrix{\oc(A) \ar[d]_-{\rho^{-1}_{\oc(A)}} \ar@{=}[ddrr]^-{} \ar[r]^-{\ell^{-1}_{\oc(A)}} & K \otimes \oc(A) \ar[r]^-{\mathsf{m}_K \otimes 1_{\oc(A)}} & \oc(K) \otimes \oc(A) \ar[d]^-{\mathsf{m}_{K,A}} \\
\oc(A) \otimes K \ar[d]_-{1_{\oc(A)} \otimes \mathsf{m}_K} && \oc(K \otimes A) \ar[d]^-{\oc(\ell_A)}  \\
\oc(A) \otimes \oc(K) \ar[r]_-{\mathsf{m}_{A,K}}& \oc(A \otimes K) \ar[r]_-{\oc(\rho_A)}& \oc(A)
 } \\
 \xymatrixcolsep{4.5pc}\xymatrix{\oc(A) \otimes \oc(B)  \ar[r]^-{\sigma_{A,B}}  \ar[d]_-{\mathsf{m}_{A,B}} & \oc(B) \otimes \oc(A) \ar[d]^-{\mathsf{m}_{B,A}} \\
\oc(A \otimes B) \ar[r]_-{\oc(\sigma_{\oc(A),\oc(B)})} &  \oc(B \otimes A)
 }
      \end{gathered}\end{equation}
A \textbf{symmetric monoidal comonad} on $(\mathbb{X}, \otimes, K)$ is a quintuple $(\oc, \delta, \varepsilon, \mathsf{m}, \mathsf{m}_K)$ consisting of a comonad $(\oc, \delta, \varepsilon)$ and a symmetric monoidal endofunctor $(\oc, \mathsf{m}, \mathsf{m}_K)$ such that the following diagrams commute: 
      \begin{equation}\label{symcomonad}\begin{gathered}
\xymatrixcolsep{5pc}\xymatrix{\oc(A) \otimes \oc(B)  \ar[d]_-{\delta_A \otimes \delta_B} \ar[r]^-{\mathsf{m}_{A,B}}  & \oc(A \otimes B) \ar[dd]^-{\delta_{A \otimes B}} & K \ar[r]^-{ \mathsf{m}_K} \ar[d]_-{ \mathsf{m}_K} & \oc(K) \ar[d]^-{\delta_K}  \\
  \oc\oc(A) \otimes \oc\oc(B)  \ar[d]_-{\mathsf{m}_{\oc(A),\oc(B)}}  & &\oc(K) \ar[r]_-{\oc( \mathsf{m}_K)}& \oc \oc(K) \\
  \oc (\oc(A) \otimes \oc(B))  \ar[r]_-{\oc(\mathsf{m}_{A,B})} & \oc(A \otimes B)
  }   
  \\
  \xymatrixcolsep{5pc}\xymatrix{\oc(A) \otimes \oc(B) \ar[dr]_-{\varepsilon_A \otimes \varepsilon_B} \ar[r]^-{\mathsf{m}_{A,B}} & \oc(A \otimes B) \ar[d]^-{\varepsilon_{A \otimes B}} & K \ar@{=}[dr]^-{} \ar[r]^-{ \mathsf{m}_K} & \oc(K) \ar[d]^-{\varepsilon_A} \\
& A \otimes B  & & \oc(K) 
  }   
\end{gathered}\end{equation}
\end{defi} 

As symmetric monoidal comonads are the dual notion of symmetric comonoidal monads, their Eilenberg-Moore categories are again symmetric monoidal categories. Let $(\oc, \delta, \varepsilon, \mathsf{m}, \mathsf{m}_K)$ be a symmetric monoidal comonad on a symmetric monoidal category $(\mathbb{X}, \otimes, K)$. Then (abusing notation slightly) $\left(\mathbb{X}^\oc, \otimes^\mathsf{m}, (K, \mathsf{m}_{K}), \tilde{\alpha}, \tilde{\ell}, \tilde{\rho}, \tilde{\sigma} \right)$ is a symmetric monoidal category where in particular the tensor product $\otimes^\mathsf{m}$ of a pair of $\oc$-coalgebras $(A, \omega)$ and $(B, \omega^\prime)$, is defined as the pair $(A, \omega) \otimes^\mathsf{m} (B, \omega^\prime) := (A \otimes B, \omega \otimes^\mathsf{n} \omega^\prime)$ where $\omega \otimes^\mathsf{m} \omega^\prime$ is defined as follows:
\begin{equation}\label{coeilentensor}\begin{gathered} \omega \otimes^\mathsf{m} \omega^\prime := \xymatrixcolsep{5pc}\xymatrix{A \otimes B \ar[r]^-{\omega \otimes \omega^\prime} & \oc(A) \otimes \oc(B)\ar[r]^-{\mathsf{m}_{A,B}} & \oc(A \otimes B) 
  } \end  {gathered}\end  {equation}
while the tensor product of $\oc$-coalgebra morphisms $f$ and $g$ is defined as $f \otimes^{\mathsf{m}} g := f \otimes g$. The diagrams of (\ref{symcomonad}) state that $\mathsf{m}$ is a $\oc$-coalgebra morphism and that $(K, \mathsf{m}_{K})$ is a $\oc$-coalgebra. The diagrams of (\ref{smcmonendo}) are the required coherences to make $\alpha$, $\ell$, $\rho$, and $\sigma$ all $\oc$-coalgebra morphisms.

\section{Coalgebra Modalities and Linear Categories}\label{coalgsec}

In this section, we review the categorical semantics of the exponential modality in the multiplicative and exponential fragment of linear logic ($\mathsf{MELL}$). In particular, we review the notions of coalgebra modalities, monoidal coalgebra modalities, and linear categories. We also discuss the particular case when the monoidal coalgebra modality is induced by cofree cocommutative comonoids, and linear categories of this form are better known as Lafont categories. For more details and an in-depth introduction on the categorical semantics of $\mathsf{MELL}$ we refer the reader to  \cite{bierman1995categorical, hyland2003glueing, mellies2003categorical, mellies2009categorical}. 

\subsection{Coalgebra Modalities}

Coalgebra modalities were defined by Blute, Cockett, and Seely when they introduced differential categories \cite{blute2006differential} and are a strictly weaker notion of monoidal coalgebra modalities. While monoidal coalgebra modalities are much more popular as they give categorical models of the exponential fragment of linear logic, coalgebra modalities have sufficient structure to axiomatize differentiation. Therefore, we believe it of interest to discuss coalgebra modalities in order to also discuss lifting differential category structure (see Section \ref{diffsec}). Interesting examples of coalgebra modalities which are not monoidal can be found in \cite{Blute2019}. 

 \begin{defi}\label{coalgmod} A \textbf{coalgebra modality} \cite{blute2006differential} on a symmetric monoidal category $(\mathbb{X}, \otimes, K)$ is a quintuple $(\oc, \delta, \varepsilon, \Delta, \mathsf{e})$ consisting of a comonad $(\oc, \delta, \varepsilon)$, a natural transformation ${\Delta_A: \oc(A)\to \oc(A) \otimes \oc(A)}$, and a natural transformation $\mathsf{e}_A: \oc(A)\to K$ such that for each object $A$, the triple $(\oc(A), \Delta_A, \mathsf{e}_A)$ is a cocommutative comonoid and $\delta_A: (\oc(A), \Delta_A, \mathsf{e}_A) \to (\oc \oc(A), \Delta_{\oc(A)}, \mathsf{e}_{\oc(A)})$ is a comonoid morphism. \end{defi}
 
 Note that naturality of $\Delta$ and $\mathsf{e}$ is equivalent to asking that for every map $f: A \to B$, $\oc(f): (\oc(A), \Delta_A, \mathsf{e}_A) \to (\oc(B), \Delta_B, \mathsf{e}_B)$ is a comonoid morphism. Therefore it immediately follows that every comonad on a Cartesian monoidal category is a coalgebra modality. 
 
 \begin{lem}\label{coalgCart} Let $(\oc, \delta, \varepsilon)$ be a comonad on a Cartesian monoidal category $(\mathbb{X}, \otimes, K)$. Then $(\oc, \delta, \varepsilon, {\hat \Delta}, \mathsf{e})$ is a coalgebra modality on $(\mathbb{X}, \otimes, K)$ where $(\oc(A), {\hat \Delta}_A, \mathsf{e}_A)$ is the unique cocommutative comonoid structure on $\oc(A)$ from Lemma \ref{Cartcom}. \end{lem}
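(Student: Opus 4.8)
The plan is to invoke Lemma \ref{coalgCart}'s own preamble directly: the paragraph preceding the statement reduces the claim to a single observation, namely that naturality of $\hat\Delta$ and $\mathsf{e}$ is equivalent to requiring that $\oc(f)$ be a comonoid morphism for every map $f$. So first I would fix the comonad $(\oc, \delta, \varepsilon)$ and apply Lemma \ref{Cartcom} to each object $\oc(A)$, obtaining the unique cocommutative comonoid structure $(\oc(A), \hat\Delta_A, \mathsf{e}_A)$, where $\mathsf{e}_A = \mathsf{t}_{\oc(A)}$ is the terminal map and $\hat\Delta_A = \langle 1_{\oc(A)}, 1_{\oc(A)} \rangle$ is the diagonal. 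This immediately supplies the natural-transformation components $\hat\Delta$ and $\mathsf{e}$ and guarantees that each $(\oc(A), \hat\Delta_A, \mathsf{e}_A)$ is a cocommutative comonoid, discharging part of the definition of a coalgebra modality (Definition \ref{coalgmod}).

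Next I would verify that $\hat\Delta$ and $\mathsf{e}$ are genuinely natural, i.e.\ that each $\oc(f): \oc(A) \to \oc(B)$ is a comonoid morphism $(\oc(A), \hat\Delta_A, \mathsf{e}_A) \to (\oc(B), \hat\Delta_B, \mathsf{e}_B)$. But this is exactly the content of the final sentence of Lemma \ref{Cartcom}: in a Cartesian monoidal category, \emph{every} map is automatically a comonoid morphism with respect to the canonical comonoid structures. Applying that fact to the map $\oc(f)$ yields naturality of $\hat\Delta$ and $\mathsf{e}$ for free.

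The only remaining obligation of Definition \ref{coalgmod} is that $\delta_A: (\oc(A), \hat\Delta_A, \mathsf{e}_A) \to (\oc\oc(A), \hat\Delta_{\oc(A)}, \mathsf{e}_{\oc(A)})$ is a comonoid morphism. Once again this is immediate from the last sentence of Lemma \ref{Cartcom}, applied this time to the particular map $\delta_A$: since every map in a Cartesian monoidal category preserves the canonical comonoid structures, $\delta_A$ preserves them in particular. This completes all the axioms of Definition \ref{coalgmod}.

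The upshot is that there is essentially no computation to do: every clause of Definition \ref{coalgmod} is a direct instance of Lemma \ref{Cartcom}, and the genuinely \say{hard} work has already been absorbed into that lemma and into the observation in the paragraph preceding the statement. If there is any subtlety at all, it lies only in confirming that the comultiplication $\hat\Delta_A$ and counit $\mathsf{e}_A$ produced objectwise by Lemma \ref{Cartcom} are indeed the components of \emph{natural} transformations of the required types $\oc(A) \to \oc(A) \otimes \oc(A)$ and $\oc(A) \to K$, but this is precisely guaranteed by the \say{every map is a comonoid morphism} clause, so no separate naturality diagram need be checked by hand.
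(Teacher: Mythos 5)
Your proposal is correct and is exactly the argument the paper intends: the lemma is stated without a separate proof precisely because, as the preceding remark observes, every clause of Definition \ref{coalgmod} (the comonoid structure on each $\oc(A)$, the naturality of $\hat\Delta$ and $\mathsf{e}$, and $\delta_A$ being a comonoid morphism) is an immediate instance of Lemma \ref{Cartcom}, including its final sentence that every map in a Cartesian monoidal category is a comonoid morphism for the canonical structures.
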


 Every $\oc$-coalgebra of a coalgebra modality comes equipped with a cocommutative comonoid structure \cite{mellies2009categorical,schalk2004categorical}. Let $(\oc, \delta, \varepsilon, \Delta, \mathsf{e})$ be a coalgebra modality on a symmetric monoidal category $(\mathbb{X}, \otimes, K)$ and $(A, \omega)$ an $\oc$-coalgebra of $(\oc, \delta, \varepsilon)$. Then the triple $(A, \Delta^\omega, \mathsf{e}^\omega)$ is a cocommutative comonoid where the comultiplication and counit are defined as follows: 
\begin{equation}\label{!coalgcom}\begin{gathered}
\Delta^\omega := \xymatrixcolsep{5pc}\xymatrix{A \ar[r]^-{\omega} & \oc(A) \ar[r]^-{\Delta_A} & \oc(A) \otimes \oc(A) \ar[r]^-{\varepsilon_A \otimes \varepsilon_A} & A \otimes A  
  } \\
   \mathsf{e}^\omega :=  \xymatrixcolsep{5pc} \xymatrix{A \ar[r]^-{\omega} & \oc(A) \ar[r]^-{\mathsf{e}_A} & K   } 
 \end  {gathered}\end  {equation} 
Notice that since $\delta_A$ is a comonoid morphism, when applying this construction to a cofree $\oc$-coalgebra $(\oc(A), \delta_A)$ we re-obtain $\Delta_A$ and $\mathsf{e}_A$, that is, $\Delta_A^{\delta_A}=\Delta_A$ and $\mathsf{e}_A^{\delta_A}=\mathsf{e}_A$. Furthermore, by naturality of $\Delta_A$ and $\mathsf{e}_A$, every $\oc$-coalgebra morphism becomes a comonoid morphism on the induced comonoid structures. In particular this implies that for every $\oc$-coalgebra $(A, \omega)$, we have that $\omega$ is a comonoid morphism, that is, the following diagrams commute: 
   \begin{equation}\label{omegacomon}\begin{gathered} \xymatrixcolsep{5pc}\xymatrix{
       A   \ar[r]^-{\Delta^\omega} \ar[d]_-{\omega} & A \otimes A \ar[d]^-{\omega \otimes \omega} & A  \ar[r]^-{\omega} \ar[dr]_-{\mathsf{e}^\omega} &  \oc(A) \ar[d]^{\mathsf{e}_A}\\
        \oc  A  \ar[r]_-{\Delta_A} &  \oc  A \otimes  \oc(A) & & K}\end{gathered}\end{equation}

\subsection{Monoidal Coalgebra Modalities and Linear Categories}\label{moncoalgsec}

 \begin{defi}\label{moncoalgmod} A \textbf{monoidal coalgebra modality} \cite{blute2015cartesian} (also called a \textbf{linear exponential modality} \cite{schalk2004categorical}) on a symmetric monoidal category $(\mathbb{X}, \otimes, K)$ is a septuple $(\oc, \delta, \varepsilon, \Delta, \mathsf{e}, \mathsf{m},\mathsf{m}_K)$ such that $(\oc, \delta, \varepsilon, \mathsf{m},\mathsf{m}_K)$ is a symmetric monoidal comonad and $(\oc, \delta, \varepsilon, \Delta, \mathsf{e})$ is a coalgebra modality, and such that $\Delta$ and $\mathsf{e}$ are monoidal transformations, that is, the following diagrams commute:  
\begin{equation}\label{Deltamonoidal}\begin{gathered}
\xymatrixcolsep{5pc}\xymatrix{\oc(A) \otimes \oc(B)  \ar[d]_-{\Delta_A \otimes \Delta_A} \ar[r]^-{\mathsf{m}_{A,B}}  & \oc(A \otimes B) \ar[dd]^-{\Delta_A}  \\
  \oc(A) \otimes \oc(A) \otimes \oc(B) \otimes \oc(B) \ar[d]_-{\tau_{\oc(A), \oc(A), \oc(B), \oc(B)}}    \\
  \oc(A) \otimes \oc(B) \otimes \oc(A) \otimes \oc(B) \ar[r]_-{\mathsf{m}_{A,B} \otimes \mathsf{m}_{A,B}} & \oc(A \otimes B) \otimes \oc(A \otimes B)  \\
  \oc(A) \otimes \oc(B) \ar[d]_-{\mathsf{e}_A \otimes \mathsf{e}_B} \ar[r]^-{\mathsf{m}_{A,B}} & \oc(A \otimes B) \ar[d]^-{\mathsf{e}_{A \otimes B}} \\
  K \otimes K \ar[r]_-{\ell_K}& K
  }   \\
  \xymatrixcolsep{5pc}\xymatrix{ K \ar[r]^-{\mathsf{m}_K} \ar[d]_-{\ell^{-1}_K} & \oc(K) \ar[d]^-{\Delta_K} & K \ar@{=}[dr]^-{} \ar[r]^-{\mathsf{m}_K} & \oc(K) \ar[d]^-{\mathsf{e}_K} \\
K \otimes K \ar[r]_-{\mathsf{m}_K \otimes\mathsf{m}_K}  &\oc(K) \otimes \oc(K) && K
  }   
\end{gathered}\end{equation}
and also that $\Delta_A$ and $e$ are $\oc$-coalgebra morphisms, that is, the following diagrams commute:
\begin{equation}\label{Delta!map}\begin{gathered}  \xymatrixcolsep{5pc}\xymatrix{\oc(A) \ar[d]_-{\Delta_A} \ar[rr]^-{\delta_A} & & \oc \oc(A) \ar[d]^-{\oc(\Delta_A)}  \\
    \oc(A) \otimes \oc(A) \ar[r]_-{\delta_A \otimes \delta_A} & \oc \oc(A) \otimes \oc \oc(A) \ar[r]_-{\mathsf{m}_{\oc(A),\oc(A)}} & \oc(\oc(A) \otimes \oc(A)) 
  } \\
   \xymatrixcolsep{5pc}\xymatrix{
    \oc(A) \ar[d]_-{\mathsf{e}_A} \ar[r]^-{\delta_A} & \oc \oc(A) \ar[d]^-{\oc(\mathsf{e}_A)} \\
     K \ar[r]_-{\mathsf{m}_K} & \oc(K)
  } \end{gathered}\end{equation}
A \textbf{linear category} \cite{bierman1995categorical} is a symmetric monoidal closed category with a monoidal coalgebra modality. 
\end{defi}

There are many equivalent alternative descriptions of a monoidal coalgebra modality \cite{schalk2004categorical}. Of particular interest to this paper is that a monoidal coalgebra modality can be equivalently be described as a symmetric monoidal comonad whose Eilenberg-Moore category is a Cartesian monoidal category (Definition \ref{cart}). So if $(\oc, \delta, \varepsilon, \Delta, \mathsf{e}, \mathsf{m},\mathsf{m}_K)$ is a monoidal coalgebra modality on a symmetric monoidal category $(\mathbb{X}, \otimes, K)$, then $\left(\mathbb{X}^\oc, \otimes^\mathsf{m}, (K, \mathsf{m}_{K}) \right)$ (as defined in Section \ref{SMcomonadsec}) is a Cartesian monoidal category. Furthermore, the canonical cocommutative comonoid structure from Lemma \ref{Cartcom} on a $\oc$-coalgebra $(A,\omega)$ is precisely given by $\Delta^\omega$ and $\mathsf{e}^\omega$ as defined in (\ref{!coalgcom}), that is, $((A,\omega), \Delta^\omega, \mathsf{e}^\omega)$ is a cocommutative comonoid in $\left(\mathbb{X}^\oc, \otimes^\mathsf{m}, (K, \mathsf{m}_{K}) \right)$. As linear categories are categorical models of $\mathsf{MELL}$ \cite{bierman1995categorical, mellies2003categorical, mellies2009categorical}, there is no shortage of examples of linear categories throughout the literature. We provide some examples of linear categories in Example \ref{linex}, while Hyland and Schalk provide a very nice list of various kinds examples in \cite[Section 2.4]{hyland2003glueing}.

An important source of examples of monoidal coalgebra modalities are those for which $\oc(A)$ is also the cofree cocommutative comonoid over $A$ (as shown by Lafont in his Ph.D. thesis \cite{lafont1988logiques}). Monoidal coalgebra modalities with this extra couniversal property are known as \emph{free exponential modalities} \cite{mellies2017explicit} and models of linear logic with free exponential modalities are known as \emph{Lafont categories} \cite{mellies2009categorical}. While free exponential modalities have been around since the beginning with Girard's free exponential modality for coherence spaces \cite{girard1987linear}, new free exponential modalities are still being constructed and studied \cite{crubille2017free, laird2013weighted, slavnov2015banach}, which shows the importance of these kinds models. In fact, Lafont categories are by far the most common examples of linear categories.

 \begin{defi} \label{freeexpo} A monoidal coalgebra modality $(\oc, \delta, \varepsilon, \Delta, \mathsf{e}, \mathsf{m},\mathsf{m}_K)$ is said to be a \textbf{free exponential modality} \cite{mellies2017explicit,slavnov2015banach} if for each object $A$, $\oc(A)$ is a \textbf{cofree cocommutative comonoid} over $A$, that is, if $(C, \Delta, \mathsf{e})$ is a cocommutative comonoid, then for every map $f: C \to A$ there exists a unique comonoid morphism $\hat{f}: (C, \Delta, \mathsf{e}) \to (\oc(A), \Delta, \mathsf{e})$ such that the following diagram commutes:
  \[  \xymatrixcolsep{5pc}\xymatrix{C \ar@{-->}[r]^-{\exists \oc ~\hat{f}} \ar[dr]_-{f} & \oc(A) \ar[d]^-{\varepsilon_A} \\
 & A  
  } \]      
A \textbf{Lafont category} \cite{mellies2003categorical} is a linear category whose monoidal coalgebra modality is a free exponential modality.
\end{defi}

Observe that every Cartesian monoidal category trivially admits a free exponential modality. 

\begin{lem}\label{Cartlafont} Let $(\mathbb{X}, \otimes, K)$ be a Cartesian monoidal category and let $1_\mathbb{X}: \mathbb{X} \to \mathbb{X}$ be the identity functor. Then $(\oc, \delta, \varepsilon, \Delta, \mathsf{t}, \mathsf{m},\mathsf{m}_K)$ is a free exponential modality where $\delta_A = \varepsilon_A = 1_A$, $\mathsf{m}_{A,B} = 1_A \otimes 1_B$, $\mathsf{m}_K = 1_K$, and $(A, \Delta_A, \mathsf{t}_A)$ is the unique cocommutative comonoid structure on $A$ from Lemma \ref{Cartcom}. Therefore every Cartesian closed category is a Lafont category. 
\end{lem}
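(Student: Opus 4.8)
The plan is to verify that the stated data genuinely form a free exponential modality, and then invoke the definitions already established. First I would check that the identity functor $1_\mathbb{X}$ with the assigned structure maps is a comonad: since $\delta_A = \varepsilon_A = 1_A$, the comonad equations (\ref{comonadeq}) reduce to identities composing to identities, so they hold trivially. Next I would confirm that $(\oc, \mathsf{m}, \mathsf{m}_K)$ with $\mathsf{m}_{A,B} = 1_A \otimes 1_B$ and $\mathsf{m}_K = 1_K$ is a symmetric monoidal endofunctor: because $\oc$ is the identity, each coherence diagram in (\ref{smcmonendo}) collapses to a naturality square for $\alpha$, $\ell$, $\rho$, or $\sigma$ respectively, which commutes because these are natural isomorphisms of $(\mathbb{X}, \otimes, K)$. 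The monoidal comonad axioms (\ref{symcomonad}) likewise degenerate to identities since $\delta$ and $\varepsilon$ are identities.

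Having established the symmetric monoidal comonad structure, I would then verify the coalgebra modality structure. By Lemma \ref{coalgCart}, any comonad on a Cartesian monoidal category is automatically a coalgebra modality, and applying this to the identity comonad yields precisely $(\oc, \delta, \varepsilon, \Delta, \mathsf{t})$ where $(A, \Delta_A, \mathsf{t}_A)$ is the canonical comonoid structure of Lemma \ref{Cartcom} (note $\oc(A) = A$). To assemble these into a monoidal coalgebra modality I must check the compatibility diagrams (\ref{Deltamonoidal}) and (\ref{Delta!map}). Diagrams (\ref{Delta!map}) are trivial since $\delta_A = 1_A$ makes both sides reduce to $\Delta_A$ and $\mathsf{t}_A$ directly. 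For (\ref{Deltamonoidal}), I expect that since $\oc$ is the identity and $\mathsf{m}_{A,B}$ is essentially the identity, the required equations say exactly that $\Delta$ and $\mathsf{t}$ interact correctly with $\tau$ and $\ell_K$; these follow from the universal property of the product, since in a Cartesian monoidal category all comonoid structure maps are canonically determined (Lemma \ref{Cartcom}) and every map is automatically a comonoid morphism.

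Finally, to establish the cofreeness condition of Definition \ref{freeexpo}, I would observe that since $\oc(A) = A$ and $\varepsilon_A = 1_A$, the universal property reduces to the claim that for any cocommutative comonoid $(C, \Delta, \mathsf{e})$ and any map $f \colon C \to A$, there is a \emph{unique} comonoid morphism $\hat{f} \colon (C, \Delta, \mathsf{e}) \to (A, \Delta_A, \mathsf{t}_A)$ with $\varepsilon_A \circ \hat{f} = \hat{f} = f$. But by Lemma \ref{Cartcom}, \emph{every} map in a Cartesian monoidal category is a comonoid morphism for the canonical structures, so $\hat{f} = f$ is forced and automatically a comonoid morphism; uniqueness is immediate because $\varepsilon_A = 1_A$ pins down $\hat{f} = f$. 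This establishes that $\oc$ is a free exponential modality.

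The step I expect to require the most care is verifying the two diagrams of (\ref{Deltamonoidal}), since these are the genuine compatibility conditions between the comonoid structure and the monoidal structure rather than degenerate identities; however, the main obstacle is more conceptual than computational, namely recognizing that the uniqueness clause of Lemma \ref{Cartcom} (every morphism is automatically a comonoid morphism) does all the real work, both for these coherences and for the cofreeness property. Once that observation is in hand, the conclusion that every Cartesian closed category is a Lafont category follows immediately, because a Cartesian closed category is in particular symmetric monoidal closed, and adjoining the free exponential modality just constructed makes it a linear category whose modality is free.
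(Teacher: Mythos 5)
Your proof is correct, and the paper in fact states Lemma \ref{Cartlafont} without proof, so your verification is exactly the routine argument the paper leaves implicit: the comonad and (co)monoidal coherences degenerate to identities, the coherences (\ref{Deltamonoidal}) follow from the uniqueness clause of Lemma \ref{Cartcom}, and cofreeness reduces to every map being a comonoid morphism for the canonical structures. The only point worth stating a touch more explicitly is that in the cofreeness argument you first need the uniqueness clause of Lemma \ref{Cartcom} to identify the \emph{given} comonoid structure on $C$ with the canonical one, and only then the ``every map is a comonoid morphism'' clause to conclude $\hat{f}=f$ works; you gesture at this but slightly conflate the two clauses.
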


The Eilenberg-Moore category of a free exponential modality is equivalent to the category of cocommutative comonoids of the base symmetric monoidal category \cite{mellies2003categorical, mellies2009categorical}. 

\begin{exas}\label{linex} \normalfont We conclude this section by briefly giving some examples of Lafont categories (and therefore examples of linear categories). In particular, we give examples of free exponential modalities on the symmetric monoidal closed categories of Example \ref{SMCex}. Many other examples of free exponential modalities can be found in \cite{hyland2003glueing}, while detailed constructions of free exponential modalities can be found in \cite{mellies2017explicit,slavnov2015banach}.   
\begin{enumerate}
\item Since $(\mathsf{SET}, \times, \lbrace \ast \rbrace)$ is a Cartesian closed category, it follows from Lemma \ref{coalgCart} that every comonad on $\mathsf{SET}$ is a coalgebra modality and also by Lemma \ref{Cartlafont} that $(\mathsf{SET}, \times, \lbrace \ast \rbrace)$ is trivially a Lafont category where the free exponential modality is given by the identity functor.  
\item $(\mathsf{REL}, \times, \lbrace \ast \rbrace)$ is a Lafont category with free exponential modality $(\oc, \delta, \varepsilon, \Delta, \mathsf{e}, \mathsf{m},\mathsf{m}_{\lbrace \ast \rbrace})$ where: \\ \noindent $(i)$ The endofunctor $\oc: \mathsf{REL} \to \mathsf{REL}$ maps a set $X$ to the set of all finite bags (also known as multisets) of $X$: 
\[\oc(X)=\lbrace \llbracket x_1,...,x_n \rrbracket \vert~ x_i \in X \rbrace\] 
while for relation $R: X \to Y$, $\oc(R): \oc (X) \to \oc (Y)$ is the relation which relates a bag of $X$ to a bag of $Y$ of the same size and such that the elements of the bags are related: 
\[\oc(R)= \lbrace (\llbracket x_1,...,x_n \rrbracket, \llbracket y_1,...,y_n \rrbracket) \vert ~ (x_i, y_i) \in R \rbrace \subseteq \oc(X) \times \oc(Y)\]
$(ii)$ $\varepsilon_X: \oc(X) \to X$ is the relation which relates one element bags to their element: 
\[\varepsilon_X= \lbrace ( \llbracket x \rrbracket, x) \vert~ x \in X \rbrace \subseteq \oc(X) \times X\] 
$(iii)$ $\delta_X:  \oc(X) \to \oc(\oc(X))$ is the relation which relates a bag to the bag of all possible bag splittings of the original bag:
\begin{align*}
\delta_X= \lbrace (B, \llbracket B_1, ..., B_n \rrbracket) \vert~ B, B_i \in \oc(X), ~  B_1 \sqcup ... \sqcup B_n=B \rbrace \subseteq  \oc(X) \times  \oc\oc(X)
\end{align*}
$(iv)$ $\Delta_X: \oc(X) \to \oc(X) \times \oc(X)$ is the relation which relates a bag to the pair of all possible two bag splittings of the original bag: 
\[\Delta_X= \lbrace (B, (B_1, B_2)) \vert~ B, B_i \in \oc(X),~B_1 \sqcup B_2 = B \rbrace \subseteq \oc(X) \times \left(\oc(X) \times \oc(X) \right)\] 
$(v)$ $\mathsf{e}_X: \oc(X) \to  \lbrace \ast \rbrace$ is the relation which relates the empty bag to the single element: 
\[\mathsf{e}_X= \lbrace  (\emptyset, \ast) \rbrace \subseteq  \oc(X) \times \lbrace \ast \rbrace\] 
$(vi)$ $\mathsf{m}_{X,Y}: \oc(X) \times \oc(Y) \to \oc(X \times Y)$ is the relation which relates bags of $X$ and bags of $Y$ of the same size to their Cartesian product bag: 
\begin{align*}
\mathsf{m}_{X,Y} = &\lbrace ( \llbracket x_1,...,x_n \rrbracket, \llbracket y_1,...,y_n \rrbracket, \llbracket (x_i, y_j) \vert~ 0 \leq i, j \leq n  \rrbracket ) \vert~ x_i \in X, y_j \in Y \rbrace  \\ &\subseteq \left(\oc(X) \times \oc(Y) \right) \times \oc(X \times Y)
\end{align*}
$(vii)$ $\mathsf{m}_{\lbrace \ast \rbrace}: \lbrace \ast \rbrace \to \oc(\lbrace \ast \rbrace)$ is the relation which relates $\ast$ to every element of $\oc(\lbrace \ast \rbrace)$: 
\[\mathsf{m}_{\lbrace \ast \rbrace} := \lbrace (\ast, B) \vert~ B \in \oc(\lbrace \ast \rbrace) \rbrace \subseteq \lbrace \ast \rbrace \times \oc(\lbrace \ast \rbrace) \] 
For more details on the free exponential modality of $(\mathsf{REL}, \times, \lbrace \ast \rbrace)$ see \cite{blute2006differential, hyland2003glueing}. 
\item Let $\mathbb{K}$ be a field. Then $(\mathsf{VEC}_\mathbb{K}, \otimes, \mathbb{K})$ is a Lafont category where full details of the free exponential modality and discussions on  cofree cocommutative $\mathbb{K}$-coalgebras can be found in \cite{hyland2003glueing, murfet2015sweedler, sweedler1969hopf}. In \cite{murfet2015sweedler}, Murfet provides a nice expression of cofree cocommutative $\mathbb{K}$-coalgebras when $\mathbb{K}$ is algebraically closed and of characteristic $0$, say for example the complex numbers $\mathbb{C}$. As a simple example in that case, the cofree cocommutative $\mathbb{K}$-coalgebra over the field $\mathbb{K}$ itself is isomorphic as $\mathbb{K}$-coalgebra to:
\[\oc(\mathbb{K}) \cong \bigoplus\limits_{k \in \mathbb{K}} \mathbb{K}[x]\]
where $\mathbb{K}[x]$ is the polynomial ring in one variable. 
\end{enumerate}
\end{exas}

\section{Symmetric Monoidal Mixed Distributive Laws} \label{SMCmixedsec}

Distributive laws between monads, which are natural transformations satisfying certain coherences with the monad structures, were introduced by Beck \cite{beck1969distributive} in order to both compose monads and lift one monad to the other's Eilenberg-Moore category. By lifting we mean that the forgetful functor from the Eilenberg-Moore category to base category preserves the monad strictly. In fact, there is a bijective correspondence between distributive laws between monads and lifting of monads. From a higher category theory perspective, a distributive law is a monad on the $2$-category of monads of a $2$-category \cite{STREET1972149}. There are also several other notions of distributive laws involving monads and bijective correspondence with certain liftings \cite{wisbauer2008algebras}. Of interest to this paper are mixed distributive laws of monads over comonads \cite{beck1969distributive,harmer2007categorical} and in particular mixed distributive laws of symmetric comonoidal monads over symmetric monoidal comonads. This latter kind of mixed distributive law will be in used in Section \ref{coalgmixsec} to lift coalgebra modalities and linear category structure, which is the main goal of this paper. For a more detailed introduction on distributive laws and liftings see \cite{wisbauer2008algebras}. 

\subsection{Mixed Distributive Laws}

\begin{defi}\label{mixeddef} Let $(\mathsf{T}, \mu, \eta)$ be a monad and $(\oc, \delta, \varepsilon)$ a comonad on the same category $\mathbb{X}$. A \textbf{mixed distributive law} \cite{harmer2007categorical,wisbauer2008algebras} of $(\mathsf{T}, \mu, \eta)$ over $(\oc, \delta, \varepsilon)$ is a natural transformation ${\lambda_A: \mathsf{T}\oc(A) \to \oc \mathsf{T}(A)}$ such that the following diagrams commute: 
\begin{equation}\label{mixeddist2}\begin{gathered}\xymatrixcolsep{4pc}\xymatrix{\mathsf{T}\mathsf{T}\oc(A) \ar[r]^-{\mathsf{T}(\lambda_A)} \ar[d]_-{\mu_{\oc(A)}} & \mathsf{T}\oc\mathsf{T}(A)  \ar[r]^-{\lambda_{\mathsf{T}(A)}} & \oc\mathsf{T}\mathsf{T}(A) \ar[d]^-{\oc(\mu_A)} & \oc(A) \ar[r]^-{\eta_{\oc(A)}}  \ar[dr]_-{\oc(\eta_A)} & \mathsf{T}\oc(A)   \ar[d]^-{\lambda_A}  \\
\mathsf{T} \oc(A)   \ar[rr]_-{\lambda_A} & & \oc\mathsf{T}(A) && \oc \mathsf{T}(A)} \end  {gathered}\end  {equation}
  \begin{equation}\label{mixeddist1}\begin{gathered}\xymatrixcolsep{4pc}\xymatrix{\mathsf{T}\oc(A) \ar[r]^-{\mathsf{T}(\delta_A)} \ar[d]_-{\lambda_A}  & \mathsf{T}\oc \oc(A)  \ar[r]^-{\lambda_{\oc(A)}} & \oc\mathsf{T}\oc(A) \ar[d]^-{\oc(\lambda_A)} & \mathsf{T}\oc(A)  \ar[dr]_-{\mathsf{T}(\varepsilon_A)}  \ar[r]^-{\lambda_A} & \oc\mathsf{T}(A)  \ar[d]^-{\varepsilon_{\mathsf{T}(A)}}  \\
  \oc\mathsf{T}(A) \ar[rr]_-{\delta_{\mathsf{T}(A)}} && \oc\oc\mathsf{T}(A)  && \mathsf{T}(A)
  } \end  {gathered}\end  {equation}
\end{defi}

As with distributive laws between monads, mixed distributive laws allow one to lift comonads to Eilenberg-Moore categories of monads and also to lift monads to Eilenberg-Moore categories of comonads, such that the respective forgetful functors preserve the monads or comonads strictly. In fact, mixed distributive laws are in bijective correspondence with these liftings. 

\begin{thmC}[{\cite[Theorem IV.1]{van1971sheaves}}]\label{monadcomonad}  Let $(\mathsf{T}, \mu, \eta)$ be a monad and $(\oc, \delta, \varepsilon)$ be a comonad on the same category $\mathbb{X}$. Then the following are in bijective correspondence:
\begin{enumerate}[{\em (i)}]
\item Mixed distributive laws of $(\mathsf{T}, \mu, \eta)$ over $(\oc, \delta, \varepsilon)$;
\item Liftings of the comonad $(\oc, \delta, \varepsilon)$ to $\mathbb{X}^\mathsf{T}$, that is, a comonad $(\tilde{\oc}, \tilde{\delta}, \tilde{\varepsilon})$ on $\mathbb{X}^\mathsf{T}$ such that the forgetful functor preserves the comonad strictly, that is, the following diagram commutes:
  \[  \xymatrixcolsep{5pc}\xymatrix{\mathbb{X}^\mathsf{T} \ar[d]_-{\mathsf{U}^\mathsf{T}}  \ar[r]^-{\tilde \oc} & \mathbb{X}^\mathsf{T} \ar[d]^-{\mathsf{U}^\mathsf{T}} &  \\
  \mathbb{X} \ar[r]_-{\oc}  & \mathbb{X}
  } \]
and for each $\mathsf{T}$-algebra $(A, \nu)$, $\mathsf{U}^\mathsf{T}(\tilde{\delta}_{(A,\nu)})=\delta_A$ and $\mathsf{U}^\mathsf{T}(\tilde {\varepsilon}_{(A,\nu)})=\varepsilon_A$. 
\item Liftings of the monad $(\mathsf{T}, \mu, \eta)$ to $\mathbb{X}^\oc$, that is, a monad $(\tilde{\mathsf{T}}, \tilde{\mu_A}, \tilde{\eta_A})$ on $\mathbb{X}^\oc$ such that the forgetful functor $\mathsf{U}^\oc$ preserves the comonad strictly, that is, the following diagram commutes:
  \[  \xymatrixcolsep{5pc}\xymatrix{\mathbb{X}^\oc \ar[d]_-{\mathsf{U}^\oc}  \ar[r]^-{\tilde{\mathsf{T}}} & \mathbb{X}^\oc \ar[d]^-{\mathsf{U}^\oc} &  \\
  \mathbb{X} \ar[r]_-{\mathsf{T}}  & \mathbb{X}
  } \]
and for each $\oc$-coalgebra $(A, \omega)$, $\mathsf{U}^\oc(\tilde{\mu}_{(A,\omega)})=\mu_A$ and $\mathsf{U}^\oc(\tilde {\eta}_{(A,\omega)})=\eta_A$. 
\end{enumerate}
\end{thmC} 
We briefly review how to construct mixed distributive laws from liftings and vice-versa (for more details see \cite{wisbauer2008algebras}). Let $\lambda_A$ be a mixed distributive law of a monad $(\mathsf{T}, \mu, \eta)$ over a comonad $(\oc, \delta, \varepsilon)$ over the same category $\mathbb{X}$. For a $\mathsf{T}$-algebra $(A, \nu)$, define the map $\nu^\sharp: \mathsf{T} \oc(A) \to \oc(A)$ as follows: 
\begin{equation}\label{nusharp}\begin{gathered}
\nu^\sharp := \xymatrixcolsep{5pc}\xymatrix{\mathsf{T} \oc(A) \ar[r]^-{\lambda_A} & \oc \mathsf{T}(A) \ar[r]^-{\oc(\nu)} & \oc(A)
  } \end  {gathered}\end  {equation}
 Then the pair $(\oc(A), \nu^\sharp)$ is a $\mathsf{T}$-algebra. Dually, if $(A, \omega)$ is a $\oc$-coalgebra, then define the map $\omega^\flat: \mathsf{T}  A \to \oc \mathsf{T}(A)$ as follows:
 \begin{equation}\label{omegaflat}\begin{gathered}
\omega^\flat := \xymatrixcolsep{5pc}\xymatrix{\mathsf{T}  A \ar[r]^-{\mathsf{T}(\omega)} & \mathsf{T}  \oc(A) \ar[r]^-{\lambda_A} &\oc \mathsf{T}  A
  } \end{gathered}\end{equation}
Then the pair  $(\mathsf{T}(A), \omega^\flat)$ is a $\oc$-coalgebra. The induced comonad $(\tilde{\oc}, \tilde{\delta}, \tilde{\varepsilon})$ on $\mathbb{X}^\mathsf{T}$ and the induced comonad $(\tilde{\mathsf{T}}, \tilde{\mu}, \tilde{\eta})$ on $\mathbb{X}^\oc$ are defined in the obvious way. 

Conversly, to obtain a mixed distributive law, we can either start with a lifting of a monad to the Eilenberg-Moore category of a comonad, or a lifting of a comonad to the Eilenberg-Moore category of a monad. We review explicitly the latter of these options as this is the construction we will use in many of the upcoming proofs of this paper. Let $(\tilde{\oc}, \tilde{\delta}, \tilde{\varepsilon})$ be a lifting of $(\oc, \delta, \varepsilon)$ to $\mathbb{X}^\mathsf{T}$. This implies that for each free $\mathsf{T}$-algebra $(\mathsf{T}(A), \mu_A)$ we have that $\tilde{\oc}(\mathsf{T}(A), \mu_A) = (\oc \mathsf{T}(A), \mu_A^\sharp)$ for some map $\mu_A^\sharp: \mathsf{T} \oc \mathsf{T}(A) \to \oc \mathsf{T}(A)$. Define the natural transformation $\lambda_A: \mathsf{T} \oc(A) \to \oc \mathsf{T}(A)$ as follows:
\begin{equation}\label{lambdabuild}\begin{gathered} \lambda_A := \xymatrixcolsep{5pc}\xymatrix{ \mathsf{T} \oc(A) \ar[r]^-{\mathsf{T}\oc (\eta_A)} & \mathsf{T} \oc \mathsf{T}(A) \ar[r]^-{\mu_A^\sharp} & \oc \mathsf{T}(A)
  } \end  {gathered}\end  {equation}
Then $\lambda_A$ is a mixed distributive law of $(\mathsf{T}, \mu, \eta)$ over $(\oc, \delta, \varepsilon)$.

\subsection{Symmetric Monoidal Mixed Distributive Laws}

We now extend the definition of a mixed distributive law to a special mixed distributive law of symmetric comonoidal monads over symmetric monoidal comonads such that the liftings are again symmetric (co)monoidal. At this point, the author would like to thank the reviewers for their suggestion of expanding out this part of the story, which was not done in the extended abstract version of this paper \cite{lemay2018lifting}. 

\begin{defi}\label{mixedmondef} Let $(\mathsf{T}, \mu, \eta, \mathsf{n}, \mathsf{n}_{K})$ be a symmetric comonoidal monad and $(\oc, \delta, \varepsilon, \mathsf{m}, \mathsf{m}_K)$ a symmetric monoidal comonad on the same symmetric monoidal category $(\mathbb{X}, \otimes, K)$. A \textbf{symmetric monoidal mixed distributive law} of $(\mathsf{T}, \mu, \eta, \mathsf{n}, \mathsf{n}_{K})$ over $(\oc, \delta, \varepsilon, \mathsf{m}, \mathsf{m}_K)$ is a mixed distributive law $\lambda$ of $(\mathsf{T}, \mu, \eta)$ over $(\oc, \delta, \varepsilon)$ such that the following diagrams commute: 
\begin{equation}\begin{gathered}\label{mixedsymco}   \xymatrixcolsep{5pc}\xymatrix{\mathsf{T}(\oc(A) \otimes \oc(B)) \ar[d]_-{\mathsf{T}(\mathsf{m}_{A,B})} \ar[r]^-{\mathsf{n}_{\oc(A),\oc(B)}} & \mathsf{T}\oc(A) \otimes \mathsf{T} \oc(B) \ar[r]^-{\lambda_A \otimes \lambda_B} & \oc \mathsf{T}(A) \otimes \oc\mathsf{T}(B)\ar[d]^-{\mathsf{m}_{\mathsf{T}(A),\mathsf{T}(B)}} \\
 \mathsf{T}\oc(A \otimes B) \ar[r]_-{\lambda_{A \otimes B}} & \oc\mathsf{T}(A \otimes B) \ar[r]_-{\oc(\mathsf{n}_{A,B})} & \oc(\mathsf{T}(A) \otimes \mathsf{T} B)   } \\
    \xymatrixcolsep{5pc}\xymatrix{ \mathsf{T}(K) \ar[d]_-{\mathsf{T}(\mathsf{m}_K)} \ar[rr]^-{\mathsf{n}_{K}} & & K \ar[d]^-{\mathsf{m}_K} \\
 \mathsf{T} \oc(K)  \ar[r]_-{\lambda_K} & \oc \mathsf{T}(K) \ar[r]_-{\oc(\mathsf{n}_{K})} & \oc(K)
  } \end{gathered}\end{equation}
\end{defi}

As we will see in the Proposition \ref{liftsymmix}, the coherences of a symmetric monoidal mixed distributive law are precisely the requirements needed in order to lift symmetric monoidal comonads to the Eilenberg-Moore categories of symmetric comonoidal monads, and vice-versa. In particular, the left diagram of (\ref{mixedsymco}) implies that $\mathsf{m}$ is a $\mathsf{T}$-algebra morphism and similarly that $\mathsf{n}$ is a $\oc$-coalgebra morphism. While the right diagram of (\ref{mixedsymco}) implies that $(K, \mathsf{m}_K)$ is a $\mathsf{T}$-algebra and that $(K, \mathsf{n}_K)$ is a $\oc$-coalgebra. 

\begin{prop}\label{liftsymmix} Let $(\mathsf{T}, \mu, \eta, \mathsf{n}, \mathsf{n}_{K})$ be a symmetric comonoidal monad and $(\oc, \delta, \varepsilon, \mathsf{m}, \mathsf{m}_K)$ a symmetric monoidal comonad on the same symmetric monoidal category $(\mathbb{X}, \otimes, K)$. Then the following are in bijective correspondence:
\begin{enumerate}
\item Symmetric monoidal mixed distributive laws of $(\mathsf{T}, \mu, \eta, \mathsf{n}, \mathsf{n}_{K})$ over $(\oc, \delta, \varepsilon, \mathsf{m},\mathsf{m}_K)$;
\item Liftings of the symmetric monoidal comonad $(\oc, \delta, \varepsilon, \mathsf{m}, \mathsf{m}_K)$ to $(\mathbb{X}^\mathsf{T}, \otimes^{\mathsf{n}}, (K,\mathsf{n}_K))$, that is, a symmetric monoidal comonad $(\tilde{\oc}, \tilde{\delta}, \tilde{\varepsilon},  \tilde{\mathsf{m}}, \tilde{\mathsf{m}}_{(K,\mathsf{n}_K)})$ on $(\mathbb{X}^\mathsf{T}, \otimes^{\mathsf{n}}, (K,\mathsf{n}_K))$ which is a lifting of the underlying comonad to $\mathbb{X}^\mathsf{T}$ (in the sense of Theorem \ref{monadcomonad}) such that for each pair of $\mathsf{T}$-algebras $(A, \nu)$ and $(B, \nu^\prime)$, $\mathsf{U}^\mathsf{T}(\tilde{\mathsf{m}}_{(A, \nu),(B,\nu^\prime)})=\mathsf{m}_{A,B}$, and also that $\mathsf{U}^\mathsf{T}(\tilde{\mathsf{m}}_{(K,\mathsf{n}_K)})=\mathsf{m}_K$. 
\item Liftings of the symmetric comonoidal monad $(\mathsf{T}, \mu, \eta, \mathsf{n}, \mathsf{n}_{K})$ to $(\mathbb{X}^\oc, \otimes^\mathsf{m}, (K, \mathsf{m}_K))$, that is, a symmetric comonoidal monad $(\tilde{\mathsf{T}}, \tilde{\mu}, \tilde{\eta}, \tilde{\mathsf{n}}, \tilde{\mathsf{n}}_{(K,\mathsf{m}_K)})$ on $(\mathbb{X}^\oc, \otimes^\mathsf{m}, (K, \mathsf{m}_K))$ which is a lifting of the underlying monad to $\mathbb{X}^\oc$ (in the sense of Theorem \ref{monadcomonad}) such that for each pair of $\oc$-algebras $(A, \omega)$ and $(B, \omega^\prime)$, $\mathsf{U}^\oc(\tilde{\mathsf{n}}_{(A, \omega),(B,\omega^\prime)})=\mathsf{n}_{A,B}$, and also that $\mathsf{U}^\mathsf{T}(\tilde{\mathsf{n}}_{(K,\mathsf{m}_K)})=\mathsf{n}_K$. 
\end{enumerate}
\end{prop}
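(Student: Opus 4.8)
The plan is to bootstrap off Theorem \ref{monadcomonad}, which already supplies the bijection between the \emph{underlying} mixed distributive laws of $(\mathsf{T}, \mu, \eta)$ over $(\oc, \delta, \varepsilon)$, the liftings of the comonad to $\mathbb{X}^\mathsf{T}$, and the liftings of the monad to $\mathbb{X}^\oc$. It therefore suffices to show that, under these established correspondences, the two extra coherence diagrams (\ref{mixedsymco}) imposed on $\lambda$ match up exactly with the symmetric monoidal structure on the lifted comonad (for the equivalence (1) $\Leftrightarrow$ (2)) and, dually, with the symmetric comonoidal structure on the lifted monad (for (1) $\Leftrightarrow$ (3)), each subject to the stated normalizations $\mathsf{U}^\mathsf{T}(\tilde{\mathsf{m}}) = \mathsf{m}$ and $\mathsf{U}^\mathsf{T}(\tilde{\mathsf{m}}_{(K,\mathsf{n}_K)}) = \mathsf{m}_K$ (resp. $\mathsf{U}^\oc(\tilde{\mathsf{n}}) = \mathsf{n}$ and $\mathsf{U}^\oc(\tilde{\mathsf{n}}_{(K,\mathsf{m}_K)}) = \mathsf{n}_K$). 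Since the comparison data is forced to be $\mathsf{m}$ and $\mathsf{m}_K$ on the nose by the normalization, the only genuine content is to determine precisely when this data is admissible and coherent inside $\mathbb{X}^\mathsf{T}$.

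For (1) $\Rightarrow$ (2), I start from a symmetric monoidal mixed distributive law $\lambda$ and take the comonad lifting $(\tilde\oc, \tilde\delta, \tilde\varepsilon)$ produced by Theorem \ref{monadcomonad}, so that $\tilde\oc(A, \nu) = (\oc(A), \nu^\sharp)$ with $\nu^\sharp$ as in (\ref{nusharp}). I set $\tilde{\mathsf{m}}_{(A,\nu),(B,\nu^\prime)} := \mathsf{m}_{A,B}$ and $\tilde{\mathsf{m}}_{(K,\mathsf{n}_K)} := \mathsf{m}_K$. The first task is to check that these are morphisms in $\mathbb{X}^\mathsf{T}$ with the correct source and target: computing the algebra structures on $\tilde\oc(A,\nu) \otimes^{\mathsf{n}} \tilde\oc(B,\nu^\prime)$ and on $\tilde\oc\big((A,\nu)\otimes^{\mathsf{n}}(B,\nu^\prime)\big)$ via (\ref{eilentensor}), the left diagram of (\ref{mixedsymco}) is precisely the assertion that $\mathsf{m}_{A,B}$ is a $\mathsf{T}$-algebra morphism between them, while the right diagram of (\ref{mixedsymco}) is precisely the assertion that $\mathsf{m}_K$ is a $\mathsf{T}$-algebra morphism $(K, \mathsf{n}_K) \to \tilde\oc(K, \mathsf{n}_K)$; this is the unpacking already announced in the remark following Definition \ref{mixedmondef}. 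Naturality of $\tilde{\mathsf{m}}$ is inherited from that of $\mathsf{m}$, since $\mathsf{T}$-algebra morphisms are just morphisms of $\mathbb{X}$. Finally, all the axioms (\ref{smcmonendo}) and (\ref{symcomonad}) for $(\tilde\oc, \tilde\delta, \tilde\varepsilon, \tilde{\mathsf{m}}, \tilde{\mathsf{m}}_{(K,\mathsf{n}_K)})$ hold automatically, because the forgetful functor $\mathsf{U}^\mathsf{T}$ is faithful and strict symmetric monoidal and carries these data to $\mathsf{m}, \mathsf{m}_K, \delta, \varepsilon$, which already satisfy the symmetric monoidal comonad axioms in $\mathbb{X}$; hence the corresponding equalities of $\mathsf{T}$-algebra morphisms hold in $\mathbb{X}^\mathsf{T}$.

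For the converse (2) $\Rightarrow$ (1), given such a lifting I recover $\lambda$ from the underlying comonad lifting via (\ref{lambdabuild}) and must verify the two diagrams of (\ref{mixedsymco}). Here the normalization says $\tilde{\mathsf{m}}_{(A,\nu),(B,\nu^\prime)} = \mathsf{m}_{A,B}$ is a $\mathsf{T}$-algebra morphism for \emph{every} pair of algebras; instantiating this at the free algebras $(\mathsf{T}(A), \mu_A)$ and $(\mathsf{T}(B), \mu_B)$ and substituting $\mu_A^\sharp$, which is the very map defining $\lambda$ through (\ref{lambdabuild}), yields the left diagram of (\ref{mixedsymco}) after a standard reduction to arbitrary $A, B$ using naturality of $\lambda$, $\mathsf{m}$, $\mathsf{n}$ together with the monad identities (\ref{monadeq}); the right diagram is obtained identically from $\mathsf{m}_K$ being a morphism $(K, \mathsf{n}_K) \to \tilde\oc(K, \mathsf{n}_K)$. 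That the two assignments are mutually inverse follows at once, since they are mutually inverse on the underlying level by Theorem \ref{monadcomonad} and the monoidal comparison is pinned to $\mathsf{m}, \mathsf{m}_K$ by the normalization. The equivalence (1) $\Leftrightarrow$ (3) is then proved by the formally dual argument, interchanging algebras with coalgebras and $\mathsf{m}$ with $\mathsf{n}$, using (\ref{coeilentensor}) and (\ref{omegaflat}) in place of (\ref{eilentensor}) and (\ref{nusharp}), with $\tilde{\mathsf{n}}_{(A,\omega),(B,\omega^\prime)} := \mathsf{n}_{A,B}$ now a $\oc$-coalgebra morphism.

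I expect the main obstacle to be the free-algebra reduction in the converse direction: one must check that the $\mathsf{T}$-algebra morphism condition, which a priori only produces a square built from the composite $\mu_A^\sharp$, collapses to the clean coherence square (\ref{mixedsymco}) for all objects. This requires threading the definition (\ref{lambdabuild}) through the naturality squares and the monad unit and multiplication identities while keeping the source and target algebra structures on $\otimes^{\mathsf{n}}$ and on $(K,\mathsf{n}_K)$ correctly aligned; everything else is bookkeeping that the faithful strict monoidal forgetful functor renders automatic.
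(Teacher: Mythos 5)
Your proposal is correct and follows essentially the same route as the paper: both reduce to Theorem \ref{monadcomonad} for the underlying bijection, observe that the two diagrams of (\ref{mixedsymco}) are exactly the statements that $\mathsf{m}_{A,B}$ and $\mathsf{m}_K$ are $\mathsf{T}$-algebra morphisms (up to a naturality square for $\mathsf{m}$), and recover (\ref{mixedsymco}) in the converse direction by instantiating the algebra-morphism condition at free algebras and unwinding (\ref{lambdabuild}) through naturality and the comonoidal-monad coherences (\ref{symbimonad}), with (1)$\Leftrightarrow$(3) by duality. The only cosmetic difference is that the paper makes explicit the auxiliary fact that $\oc(\mathsf{n}_{A,B})$ and $\oc(\mathsf{n}_K)$ are $\mathsf{T}$-algebra morphisms, which your ``keeping the algebra structures aligned'' step implicitly requires.
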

\begin{proof} The bijective correspondence will follow from Theorem \ref{monadcomonad}. It remains to show that the induced lifting of the comonad (resp. monad) from the mixed distributive law is also a lifting of the symmetric monoidal comonad (resp. symmetric comonoidal monad), and similarly that the induced mixed distributive law from the liftings also satisfies (\ref{mixedsymco}). 

Let $\lambda$ be a symmetric mixed distributive law of $(\mathsf{T}, \mu, \eta, \mathsf{n}, \mathsf{n}_{K})$ over $(\oc, \delta, \varepsilon, \mathsf{m},\mathsf{m}_K)$. Consider first the induced lifting of $(\oc, \delta, \varepsilon)$ from Theorem \ref{monadcomonad}. To prove that we have a lifting of the symmetric monoidal comonad, it suffices to show that $\mathsf{m}$ and $\mathsf{m}_K$ are $\mathsf{T}$-algebra morphisms. The right diagram of (\ref{mixedsymco}) is precisely the statement that $\mathsf{m}_K$ is a $\mathsf{T}$-algebra morphism. Then if $(A,\nu)$ and $(B, \nu^\prime)$ are $\mathsf{T}$-algebras, commutativity of the following diagrams show that $\mathsf{m}_{A,B}$ is a $\mathsf{T}$-algebra morphism: 
  \[ \xymatrixcolsep{4pc}\xymatrix{\mathsf{T}(\oc(A) \otimes \oc(B))  \ar@{}[drr]|{(\ref{mixedsymco})} \ar[d]_-{\mathsf{T}(\mathsf{m}_{A,B})} \ar[r]^-{\mathsf{n}_{A,B}} & \mathsf{T}\oc(A) \otimes \mathsf{T} \oc(B) \ar[r]^-{\lambda_A \otimes \lambda_B} & \oc \mathsf{T}(A) \otimes \oc\mathsf{T}(B)\ar@{}[dr]|{\text{Nat. of $\mathsf{m}$}} \ar[d]_-{\mathsf{m}_{\mathsf{T}(A),\mathsf{T}(B)}} \ar[r]^-{\oc(\nu) \otimes \oc(\nu^\prime)} & \oc(A) \otimes \oc(B) \ar[d]^-{\mathsf{m}_{A,B}}  \\
 \mathsf{T}\oc(A \otimes B) \ar[r]_-{\lambda_{A \otimes B}} & \oc\mathsf{T}(A \otimes B) \ar[r]_-{\oc(\mathsf{n}_{A,B})} & \oc(\mathsf{T}(A) \otimes \mathsf{T} B)  \ar[r]_-{\oc(\nu \otimes \nu^\prime)} &  \oc (A \otimes B)  
  } \]
 Similarly, one can show that we also obtain a lifting of the symmetric comonoidal monad by proving that $\mathsf{n}$ and $\mathsf{n}_K$ are $\oc$-coalgebra morphisms. 
 
Conversly, let $(\tilde{\oc}, \tilde{\delta}, \tilde{\varepsilon},  \tilde{\mathsf{m}}, \tilde{\mathsf{m}}_K)$ be a lifting of $(\oc, \delta, \varepsilon, \mathsf{m},\mathsf{m}_K)$ to $\mathbb{X}^\mathsf{T}$. This implies that $\mathsf{m}$ and $\mathsf{m}_K$ are $\mathsf{T}$-algebra morphisms, and in particular for free $\mathsf{T}$-algebras $(\mathsf{T}(A), \mu_A)$ and the $\mathsf{T}$-algebra $(K, \mathsf{n}_{K})$, we have that the following diagrams commute:
 \begin{equation}\label{mtatalg} \begin{gathered} \xymatrixcolsep{5pc}\xymatrix{\mathsf{T}(\oc \mathsf{T}(A) \otimes \oc \mathsf{T} B) \ar[d]_-{\mathsf{T}(\mathsf{m}_{\mathsf{T}(A),\mathsf{T}(B)})} \ar[r]^-{\mathsf{n}_{\oc\mathsf{T}(A),\oc\mathsf{T}(B)}} & \mathsf{T}\oc \mathsf{T}(A) \otimes \mathsf{T}\oc \mathsf{T}(B) \ar[r]^-{\mu_A^\sharp \otimes \mu_B^\sharp} & \oc \mathsf{T}(A) \otimes \oc\mathsf{T}(B)\ar[d]^-{\mathsf{m}_{\mathsf{T}(A),\mathsf{T}(B)}} \\
 \mathsf{T}\oc(\mathsf{T}(A) \otimes \mathsf{T} B) \ar[rr]_-{(\mu_A \otimes^{\mathsf{n}} \mu_A)^\sharp} && \oc(\mathsf{T}(A) \otimes \mathsf{T} B) 
  } \\
  \xymatrixcolsep{5pc}\xymatrix{\mathsf{T}(K) \ar[d]_-{\mathsf{T}(\mathsf{m}_K)} \ar[r]^-{\mathsf{n}_{K}} & K \ar[d]^-{\mathsf{m}_K} \\
\mathsf{T} \oc(K)  \ar[r]_-{\mathsf{n}_{K}^\sharp} & \oc(K)
  } \end{gathered}\end{equation}
    where recall for a $\mathsf{T}$-algebra $(A, \nu)$, the map $\nu^\sharp$ is the induced $\mathsf{T}$-algebra on $\tilde \oc(A, \nu)=(\oc(A), \nu^\sharp)$, and $\mu_A \otimes^{\mathsf{n}} \mu_A$ is defined as in (\ref{eilentensor}). Notice that since both $\mathsf{n}_{A,B}$ and $\mathsf{n}_{K}$ are $\mathsf{T}$-algebra morphisms, the lifting implies that $\oc(\mathsf{n}_{A,B})$ and $\oc(\mathsf{n}_{K})$ are also, that is, the following diagrams commute:
     \begin{equation}\label{n2talg} \begin{gathered} \xymatrixcolsep{4pc}\xymatrix{\mathsf{T}\oc\mathsf{T}(A \otimes B)   \ar[d]_-{\mathsf{T}\oc(\mathsf{n}_{A,B})} \ar[r]^-{\mu_{A\otimes B}^\sharp} &  \oc\mathsf{T}(A \otimes B) \ar[d]^-{\oc(\mathsf{n}_{A,B})} & \mathsf{T} \oc \mathsf{T}(K)  \ar[d]_-{\mathsf{T} \oc(\mathsf{n}_{K})} \ar[r]^-{\mu_K^\sharp} & \oc \mathsf{T}(K) \ar[d]^-{\oc(\mathsf{n}_{K})}\\
     \mathsf{T}\oc(\mathsf{T}(A) \otimes \mathsf{T} B)  \ar[r]_-{(\mu_A \otimes^{\mathsf{n}} \mu_B)^\sharp} & \oc(\mathsf{T}(A) \otimes \mathsf{T} B) & \mathsf{T} \oc(K) \ar[r]_-{\mathsf{n}_{K}^\sharp} & \oc(K)
  } \end{gathered}\end{equation}
Consider now the induced mixed distributive law $\lambda$ of $(\mathsf{T}, \mu, \eta)$ over $(\oc, \delta, \varepsilon)$ as defined in (\ref{lambdabuild}). Then that $\lambda$ is a symmetric monoidal mixed distributive law of $(\mathsf{T}, \mu, \eta, \mathsf{n}, \mathsf{n}_{K})$ over $(\oc, \delta, \varepsilon, \mathsf{m},\mathsf{m}_K)$ follows from commutativity of the following diagrams: 
    \[  \xymatrixcolsep{5pc}\xymatrixrowsep{5pc}\xymatrix{ \mathsf{T}(\oc(A) \otimes \oc(B)) \ar[r]^-{\mathsf{n}_{\oc(A),\oc(B)}}  \ar[d]_-{\mathsf{T}(\mathsf{m}_{A,B})} \ar[dr]|-{\mathsf{T}(\oc(\eta_A) \otimes \oc(\eta_B))} & \mathsf{T}\oc(A) \otimes \mathsf{T}\oc(B) \ar@{}[d]|-{\text{Nat. of $\mathsf{n}$}}  \ar[r]^-{\mathsf{T}\oc(\eta_A) \otimes \mathsf{T}\oc(\eta_B)}   & \mathsf{T}\oc \mathsf{T}(A) \otimes \mathsf{T}\oc \mathsf{T}(B)  \ar[d]^-{\mu_A^\sharp \otimes \mu_B^\sharp} \\
  \mathsf{T}\oc(A \otimes B) \ar@{}[r]|-{\text{Nat. of $\mathsf{m}$}}  \ar[dr]|-{\mathsf{T}\oc(\eta_A \otimes \eta_B)} \ar[dd]_-{\mathsf{T}\oc(\eta_{A\otimes B})}    & \mathsf{T}(\oc \mathsf{T}(A) \otimes \oc \mathsf{T}(B)) \ar[d]^-{\mathsf{T}(\mathsf{m}_{\mathsf{T}(A),\mathsf{T}(B)})} \ar[ur]|-{\mathsf{n}_{\oc\mathsf{T}(A),\oc\mathsf{T}(B)}}  &  \oc \mathsf{T}(A) \otimes \oc\mathsf{T}(B)\ar[dd]^-{\mathsf{m}_{\mathsf{T}(A),\mathsf{T}(B)}} \\ 
  &\mathsf{T}\oc(\mathsf{T}(A) \otimes \mathsf{T}(B)) \ar@{}[d]|-{(\ref{n2talg})}  \ar@{}[l]|-{(\ref{symbimonad})} \ar[dr]|-{(\mu_A \otimes^{\mathsf{n}} \mu_B)^\sharp}  \ar@{}[ur]|-{(\ref{symbimonad})} & \\
 \mathsf{T}\oc\mathsf{T}(A \otimes B)  \ar[ur]|-{\mathsf{T}\oc(\mathsf{n}_{A,B})}  \ar[r]_-{\mu_{A \otimes B}^\sharp} & \oc\mathsf{T}(A \otimes B) \ar[r]_-{\oc(\mathsf{n}_{A,B})} & \oc(\mathsf{T}(A) \otimes \mathsf{T} B)
  } \]
\[\xymatrixcolsep{5pc}\xymatrix{\mathsf{T}(K) \ar[dd]_-{\mathsf{T}(\mathsf{m}_K)} \ar[rrr]^-{\mathsf{n}_{K}} &&& K \ar[dd]^-{\mathsf{m}_K} \\ 
&\mathsf{T} \oc(K)  \ar@{}[ur]|-{(\ref{mtatalg})} \ar@/^1pc/[drr]^-{\mathsf{n}_{K}^\sharp} & \\
\mathsf{T}\oc(K) \ar@{=}[ur]^-{}\ar[r]_-{\mathsf{T}\oc (\eta_K)} & \mathsf{T}\oc\mathsf{T}(K) \ar@{}[ul]|(0.35){(\ref{symbimonad})}  \ar@{}[ur]|-{(\ref{n2talg})}  \ar[u]_-{\mathsf{T}\oc(\mathsf{n}_{K})}\ar[r]_-{\mu_K^\sharp} & \oc \mathsf{T}(K) \ar[r]_-{\oc(\mathsf{n}_{K})} & \oc(K)
  }\] 
One could have also started with the lifting of the symmetric comonoidal monad to obtain the desired symmetric monoidal mixed distributive law.   
\end{proof} 

As the main goal of this paper is to lift linear category structure to categories of modules over monoids, we now study symmetric monoidal mixed distributive laws in the case when the symmetric comonoidal monad is induced by a cocommutative bimonoid (Example \ref{bimonadex}). 

\begin{prop} \label{symmonresult} Let $(\oc, \delta, \varepsilon, \mathsf{m}, \mathsf{m}_K)$ be a symmetric monoidal comonad on a symmetric monoidal category $(\mathbb{X}, \otimes, K)$. Then the following are in bijective correspondence:
\begin{enumerate}
\item Cocommutative bimonoids in $(\mathbb{X}^\oc, \otimes^\mathsf{m}, (K, \mathsf{m}_K))$;
\item Cocommutative bimonoids $(A, \nabla, \mathsf{u}, \Delta, \mathsf{e})$ in $(\mathbb{X}, \otimes, K)$ equipped with a natural transformation $\lambda_X: A \otimes \oc(X) \to \oc(A \otimes X)$ such that $\lambda$ is a symmetric monoidal mixed distributive law of $(A \otimes -, \mu^\nabla, \eta^\mathsf{u}, \mathsf{n}^\Delta, \mathsf{n}^\mathsf{e}_K)$ (as defined in Example \ref{bimonadex}) over $(\oc, \delta, \varepsilon, \mathsf{m}, \mathsf{m}_K)$ and the following diagram commutes: 
 \begin{equation}\label{nablastrong}\begin{gathered}\xymatrixcolsep{4.75pc}\xymatrix{A \otimes \left(\oc(X) \otimes \oc(Y) \right)\ar[r]^-{\alpha_{A,\oc(X),\oc(Y)}} \ar[d]_-{1_A \otimes \mathsf{m}_{X,Y}} & \left(A \otimes \oc(X)\right) \otimes \oc(Y)  \ar[r]^-{\lambda_X \otimes 1_{\oc(Y)}}   & \oc(A \otimes X) \otimes \oc(Y) \ar[d]^-{\mathsf{m}_{A \otimes X,Y}} \\
A \otimes \oc (X \otimes Y) \ar[r]_-{\lambda_{X \otimes Y}} &  \oc\left(A \otimes (X \otimes Y) \right) \ar[r]_-{\oc(\alpha_{A,X,Y})} & \oc\left((A \otimes X) \otimes Y\right) 
  } \end{gathered}\end{equation}
\end{enumerate}
\end{prop}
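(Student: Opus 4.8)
The plan is to deduce everything from Example~\ref{bimonadex} read inside the symmetric monoidal category $(\mathbb{X}^\oc, \otimes^\mathsf{m}, (K, \mathsf{m}_K))$, combined with the lifting correspondence of Proposition~\ref{liftsymmix}. By Example~\ref{bimonadex} applied to $\mathbb{X}^\oc$, cocommutative bimonoids in $(\mathbb{X}^\oc, \otimes^\mathsf{m}, (K, \mathsf{m}_K))$ are exactly symmetric comonoidal monad structures on endofunctors of the form $(A, \omega) \otimes^\mathsf{m} -$, where $(A,\omega)$ ranges over $\oc$-coalgebras. Since $\mathsf{U}^\oc$ is faithful and strictly symmetric monoidal, any such endofunctor is a lifting of the symmetric comonoidal monad $A \otimes -$ on $\mathbb{X}$ induced, via Example~\ref{bimonadex} on $\mathbb{X}$, by the underlying cocommutative bimonoid $(A, \nabla, \mathsf{u}, \Delta, \mathsf{e})$ obtained by applying $\mathsf{U}^\oc$; and Proposition~\ref{liftsymmix} converts liftings of $A \otimes -$ to $\mathbb{X}^\oc$ into symmetric monoidal mixed distributive laws $\lambda$ of $A \otimes -$ over $\oc$. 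Hence the entire content to be proved is that the liftings which are of the \emph{tensoring} form $(A, \omega) \otimes^\mathsf{m} -$ are precisely those whose associated $\lambda$ satisfies the extra coherence \eqref{nablastrong}.

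I would make this explicit through the formula $\lambda_X = \mathsf{m}_{A,X} \circ (\omega \otimes 1_{\oc(X)})$. For (1)$\Rightarrow$(2): the mixed distributive law produced by Proposition~\ref{liftsymmix} from a tensoring lifting simplifies, using the comonad counit law $\oc(\varepsilon_X) \circ \delta_X = 1_{\oc(X)}$ and naturality of $\mathsf{m}$, to exactly this expression. Substituting it into \eqref{nablastrong} and using naturality of $\alpha$ and bifunctoriality of $\otimes$ to factor $\omega$ out on the right, both legs of \eqref{nablastrong} take the form (bracketed composite) $\circ\, (\omega \otimes 1_{\oc(X)})$, and the two bracketed composites agree by the associativity coherence of $\mathsf{m}$, namely the first diagram of \eqref{smcmonendo}; thus \eqref{nablastrong} holds automatically. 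For (2)$\Rightarrow$(1): given $\lambda$ a symmetric monoidal mixed distributive law satisfying \eqref{nablastrong}, I would set $\omega := \oc(\rho_A) \circ \lambda_K \circ (1_A \otimes \mathsf{m}_K) \circ \rho_A^{-1}$. That $(A, \omega)$ is a $\oc$-coalgebra is automatic, as it is the transport along the isomorphism $\rho_A$ of the $\oc$-coalgebra $\tilde{\mathsf{T}}(K, \mathsf{m}_K) = (A \otimes K, \lambda_K \circ (1_A \otimes \mathsf{m}_K))$ coming from the lifting of Theorem~\ref{monadcomonad}.

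The heart of the argument, and the step I expect to be the main obstacle, is the factorization lemma: that \eqref{nablastrong} forces $\lambda_X = \mathsf{m}_{A,X} \circ (\omega \otimes 1_{\oc(X)})$ for every $X$. I would prove it by instantiating \eqref{nablastrong} at the pair $(K, X)$, precomposing with $1_A \otimes (\mathsf{m}_K \otimes 1_{\oc(X)})$, and then discharging all the $K$-bookkeeping using the unit coherences of $\mathsf{m}$ (the second diagram of \eqref{smcmonendo}), the naturality of $\lambda$ along $\ell_X \colon K \otimes X \to X$, and the unit compatibility $\oc(\ell_X) \circ \mathsf{m}_{K,X} \circ (\mathsf{m}_K \otimes 1) = \ell_{\oc(X)}$; the associator reshuffling is handled by coherence. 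Granting the factorization, the lifted $\oc$-coalgebra structure on $A \otimes B$ computed from \eqref{omegaflat}, namely $\lambda_B \circ (1_A \otimes \omega_B)$, equals $\mathsf{m}_{A,B} \circ (\omega \otimes \omega_B) = \omega \otimes^\mathsf{m} \omega_B$ as in \eqref{coeilentensor}, so the lifting really is $(A, \omega) \otimes^\mathsf{m} -$ and Example~\ref{bimonadex} in $\mathbb{X}^\oc$ returns a cocommutative bimonoid on $(A, \omega)$ whose image under the faithful strict monoidal $\mathsf{U}^\oc$ is the prescribed $(A, \nabla, \mathsf{u}, \Delta, \mathsf{e})$, cocommutativity transferring along $\mathsf{U}^\oc$. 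Finally, to confirm the two assignments are mutually inverse I would check the round-trips: recomputing $\omega$ from $\lambda_X = \mathsf{m}_{A,X}(\omega \otimes 1_{\oc(X)})$ returns $\omega$ by the right-unit coherence $\oc(\rho_A) \circ \mathsf{m}_{A,K} \circ (1_{\oc(A)} \otimes \mathsf{m}_K) \circ \rho_{\oc(A)}^{-1} = 1_{\oc(A)}$ of \eqref{smcmonendo}, while the reverse round-trip is precisely the factorization lemma.
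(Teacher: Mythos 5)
Your proposal is correct, and its skeleton coincides with the paper's: you use the same two constructions $\omega \mapsto \omega^\natural = \mathsf{m}_{A,X}\circ(\omega\otimes 1_{\oc(X)})$ and $\lambda \mapsto \lambda^\diamondsuit = \oc(\rho_A)\circ\lambda_K\circ(1_A\otimes\mathsf{m}_K)\circ\rho_A^{-1}$, the same round-trip checks, and your ``factorization lemma'' (instantiating \eqref{nablastrong} at $(K,X)$ and discharging the unit bookkeeping) is precisely the diagram the paper draws to establish $(\lambda^\diamondsuit)^\natural=\lambda$. Where you genuinely diverge is in how the routine verifications are discharged: the paper proves by explicit diagram chases that $\omega^\natural$ is a mixed distributive law, that it is symmetric monoidal, and that $(A,\lambda^\diamondsuit)$ is a $\oc$-coalgebra, whereas you derive the first two by observing that $(A,\omega)\otimes^\mathsf{m}-$ is a strict lifting of $A\otimes-$ and invoking Proposition \ref{liftsymmix} (via the dual of \eqref{lambdabuild}, which the paper only gestures at but which does simplify to $\omega^\natural$ using $\delta_X;\oc(\varepsilon_X)=1$ and naturality of $\mathsf{m}$), and the third by transporting the $\oc$-coalgebra $\tilde{\mathsf{T}}(K,\mathsf{m}_K)=(A\otimes K,\lambda_K\circ(1_A\otimes\mathsf{m}_K))$ along the isomorphism $\rho_A$. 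This buys you shorter, more conceptual proofs of the sub-claims at the cost of leaning harder on the strictness of $\mathsf{U}^\oc$ and on the unstated dual form of the lifting-to-distributive-law construction; both are sound, so the argument goes through.
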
 

\begin{proof}We first show that we can construct one from the other. 
  
  \noindent $(1)\Rightarrow (2)$: Let $((A,\omega), \nabla, \mathsf{u}, \Delta, \mathsf{e})$ be a cocommutative bimonoid in $(\mathbb{X}^\oc, \otimes^\mathsf{m}, (K, \mathsf{m}_K))$. Explicitly this means that $(A, \omega)$ is a $\oc$-coalgebra, that $(A, \nabla, \mathsf{u}, \Delta, \mathsf{e})$ is a cocommutative bimonoid in $(\mathbb{X}, \otimes, \mathbb{K})$, and that $\nabla$, $\mathsf{u}$, $\Delta$, and $\mathsf{e}$ are all $\oc$-coalgebra morphisms, that is, the following diagrams commute: 
\begin{equation}\label{nabla!map}\begin{gathered}\xymatrixcolsep{4pc}\xymatrix{A \otimes A \ar[d]_-{\omega \otimes \omega} \ar[rr]^-{\nabla} & & A \ar[d]^-{\omega} & K \ar[d]_-{\mathsf{m}_K} \ar[r]^-{\mathsf{u}} & A \ar[d]^-{\omega} \\
    \oc(A) \otimes \oc(A) \ar[r]_-{\mathsf{m}_{A,A}} & \oc(A \otimes A) \ar[r]_-{\oc(\nabla)} & \oc(A) &   \oc(K) \ar[r]_-{\oc(\mathsf{u})} & \oc(A)
  } \end{gathered}\end{equation}
    \begin{equation}\label{Delta!map2}\begin{gathered}  \xymatrixcolsep{3pc}\xymatrix{A \ar[d]_-{\Delta} \ar[rr]^-{\omega} && \oc(A) \ar[d]^-{\oc(\Delta)} & A \ar[d]_-{\mathsf{e}} \ar[r]^-{\omega} & \oc(A) \ar[d]^-{\oc(\mathsf{e})} \\
    \oc(A) \otimes \oc(A) \ar[r]_-{\omega \otimes \omega} & \oc(A) \otimes \oc(A) \ar[r]_-{\mathsf{m}_{A,A}} & \oc(A \otimes A) & K \ar[r]_-{\mathsf{m}_K} & \oc(K) 
  } \end{gathered}\end{equation}
  Define the natural transformation (natural by construction) $\omega_X^\natural: A \otimes \oc(X)\to \oc(A \otimes X)$ as follows: 
\begin{equation}\label{alphaomega}\begin{gathered}\omega_X^\natural:= \xymatrixcolsep{5pc}\xymatrix{A \otimes \oc(X) \ar[r]^-{\omega \otimes 1_{\oc(X)}} & \oc(A) \otimes \oc(X) \ar[r]^-{\mathsf{m}_{A,X}} & \oc(A \otimes X)  
  } \end{gathered}\end{equation}
First that $\omega^\natural$ is a mixed distributive law of $(A \otimes -, \mu^\nabla, \eta^\mathsf{u})$ over $(\oc, \delta, \varepsilon)$ follows from commutativity of the following diagrams: 
    \[  \xymatrixcolsep{5pc}\xymatrix{ \oc(X) \ar[r]^-{\ell^{-1}_{\oc(X)}}   \ar@/_2pc/[ddr]_-{\oc(\ell^{-1}_X)} & K \otimes \oc(X) \ar[d]_-{\mathsf{m}_K \otimes 1} \ar[r]^-{\mathsf{u} \otimes 1_{\oc(X)}} & A \otimes \oc(X) \ar[d]^-{\omega \otimes 1_{\oc(X)}}   \\
&\oc(K) \otimes \oc(X) \ar@{}[ur]^-{(\ref{nabla!map})} \ar@{}[l]|(0.55){(\ref{smcmonendo})} \ar@{}[dr]|-{\text{Nat. of $\mathsf{m}$}} \ar[d]^-{\mathsf{m}_{K,B}} \ar[r]^-{\oc(\mathsf{u}) \otimes 1_{\oc(X)}}& \oc(A) \otimes \oc(X) \ar[d]^-{\mathsf{m}_{A,X}} \\
   &     \oc(K \otimes X) \ar[r]_-{\oc(\mathsf{u} \otimes 1_X)} & \oc(A \otimes X)  } \]
 \[  \xymatrixcolsep{5pc}\xymatrix{ A \otimes \oc(X) \ar[r]^-{\omega \otimes 1_{\oc(X)}} \ar@/_2pc/[drr]_-{1_A \otimes \varepsilon_X} & \oc(A) \otimes \oc(X)  \ar@{}[d]_-{(\ref{!coalg})}\ar[r]^-{\mathsf{m}_{A,X}} \ar[dr]_-{\varepsilon_A \otimes \varepsilon_X} & \oc(A \otimes X)  \ar@{}[dl]^(0.25){(\ref{symcomonad})}  \ar[d]^-{\varepsilon_{A \otimes X}}   \\
    && A \otimes X 
  } \]
  \[  \xymatrixcolsep{5pc}\xymatrix{A \otimes \oc(X)  \ar@{}[dr]|-{(\ref{!coalg})}  \ar[d]_-{1_A \otimes \delta_X} \ar[r]^-{\omega \otimes 1_{\oc(X)}} & \oc(A) \otimes \oc(X) \ar[d]^-{\delta_A \otimes \delta_X} \ar[r]^-{\mathsf{m}_{A,X}} & \oc(A \otimes X) \ar[ddd]^-{\delta_{A \otimes X}}  \\
  A \otimes \oc\oc(X)  \ar[d]_-{\omega \otimes 1_{\oc\oc(X)}} & \oc\oc(A) \otimes \oc\oc(X) \ar[dd]^-{\mathsf{m}_{\oc(A) \otimes \oc(X)}} \ar@{}[r]|-{(\ref{smcmonendo})} &  \\
  \oc(A) \otimes \oc\oc(X) \ar@{}[r]|-{\text{Nat. of $\mathsf{m}$}} \ar[d]_-{\mathsf{m}_{A, \oc(X)}} \ar[ur]|-{\oc(\omega) \otimes 1_{\oc\oc(X)}} & \\
  \oc(A \otimes \oc(X)) \ar[r]_-{\oc(\omega \otimes 1_{\oc(X)})} & \oc(\oc(A) \otimes \oc(X)) \ar[r]_-{\oc(\mathsf{m}_{A,B})} & \oc\oc(A \otimes X)
  } \]

  \[  \xymatrixcolsep{5pc}\xymatrix{A \otimes (A \otimes \oc(X))  \ar[r]^-{1_A \otimes (\omega \otimes 1_{\oc(X)})}  \ar[d]_-{\alpha_{A,A,\oc(X)}} & A \otimes \left(\oc(A) \otimes \oc(X) \right) \ar[r]^-{1_A \otimes \mathsf{m}_{A,X}} & A \otimes \oc (A \otimes X)   \ar[d]^-{\omega \otimes 1_{\oc(A \otimes X)}} \\
 (A \otimes A) \otimes \oc(X)  \ar@{}[ddr]|-{(\ref{nabla!map})}  \ar[dd]_-{\nabla \otimes 1_{\oc(X)}} \ar[r]^-{(\omega \otimes \omega) \otimes 1_{\oc(X)}} &  \left(\oc(A) \otimes \oc(A) \right) \otimes \oc(X) \ar@{}[r]|-{(\ref{smcmonendo})} \ar[dd]^-{\mathsf{m}_{A,A} \otimes 1_{\oc(X)}} &  \oc(A) \otimes \oc (A \otimes X)  \ar[d]^-{\mathsf{m}_{A,A \otimes X}} \\
  &&  \oc \left(A \otimes (A \otimes X) \right) \ar[d]^-{\oc(\alpha_{A,A,X})} \\
 A \otimes \oc(X)   \ar[d]_-{\omega \otimes 1_{\oc(X)}}  &\oc(A \otimes A) \otimes \oc(X) \ar[r]^-{\mathsf{m}_{A \otimes A, X}} \ar[dl]|-{\oc(\nabla) \otimes 1_{\oc(X)}}  \ar@{}[d]|-{\text{Nat. of $\mathsf{m}$}} &  \oc\left( (A \otimes A) \otimes X \right) \ar[d]^-{\oc(\nabla \otimes 1_X)} \\
 \oc(A) \otimes \oc(X) \ar[rr]_-{\mathsf{m}_{A,X}} &&  \oc(A \otimes X)
  } \]
  Next that $\omega^\natural$ is a symmetric monoidal mixed distributive law follows from commutativity of the following diagrams: 
      \[  \xymatrixcolsep{3pc}\xymatrix{A \otimes K  \ar[d]_-{1_A \otimes\mathsf{m}_K} \ar[rrr]^-{\rho_A} & && A \ar[d]^-{\omega} \ar[r]^-{\mathsf{e}} & K \ar[d]^-{\mathsf{m}_K} \\
       A \otimes \oc(K) \ar@{}[urrr]|-{(\ref{smcmonendo})} \ar[r]_-{\omega \otimes 1_{\oc(K)}} & \oc(A) \otimes \oc(K) \ar[r]_-{\mathsf{m}_{A,K}} & \oc(A \otimes K) \ar[r]_-{\oc(\rho_A)}  & \oc(A) \ar[r]_-{\oc(\mathsf{e})} \ar@{}[ur]|-{(\ref{Delta!map2})} & \oc(K)
  } \]
 {\scriptsize \[\xymatrixcolsep{5pc}\xymatrixrowsep{5pc}\xymatrix{A \otimes \left(\oc(X) \otimes \oc(Y) \right)  \ar@{}[ddr]|-{(\ref{Delta!map2})}  \ar[d]_-{1_A \otimes \mathsf{m}_{X,Y}} \ar[r]^-{\Delta \otimes (1_{\oc(X)} \otimes 1_{\oc(Y)})} & (A \otimes A) \otimes \left( \oc(X) \otimes \oc(Y) \right) \ar[r]^-{\tau_{A,A, \oc(X), \oc(Y)}}  \ar@{}[dr]|-{~~~~\text{Nat. of } \tau} \ar[d]_-{(\omega \otimes \omega) \otimes (1_{\oc(X)} \otimes 1_{\oc(Y)})} & (A \otimes \oc(X)) \otimes (A \otimes \oc(Y)) \ar[d]|-{(\omega \otimes 1_{\oc(X)}) \otimes (\omega \otimes 1_{\oc(Y)})} \\
A \otimes \oc(X \otimes Y) \ar[d]_-{\omega \otimes 1_{\oc(X \otimes Y)}} & \left( \oc(A) \otimes \oc(A) \right) \otimes \left( \oc(X) \otimes \oc(Y) \right) \ar@{}[ddr]|-{(\ref{smcmonendo})}  \ar[d]^-{\mathsf{m}_{A,A} \otimes \mathsf{m}_{X,Y}} \ar[r]_-{\tau_{\oc(A),\oc(A),\oc(X),\oc(Y)}}  & (\oc(A) \otimes \oc(X)) \otimes (\oc(A) \otimes \oc(Y)) \ar[d]|-{\mathsf{m}_{A,X} \otimes \mathsf{m}_{A,Y}} \\
  \oc(A) \otimes \oc(X \otimes Y) \ar@{}[dr]|-{\text{Nat. of $\mathsf{m}$}} \ar[r]^-{\oc(\Delta) \otimes 1_{\oc(X \otimes Y)}} \ar[d]_-{\mathsf{m}_{A, X \otimes Y}}& \oc(A \otimes A)  \otimes \oc(X \otimes Y) \ar[d]^-{\mathsf{m}_{A \otimes A, X \otimes Y}} & \oc(A \otimes X) \otimes \oc(A \otimes Y) \ar[d]|-{\mathsf{m}_{A \otimes X, A \otimes Y}} \\
\oc\left( A \otimes (X \otimes Y)\right) \ar[r]_-{\oc(\Delta  \otimes (1_{X} \otimes 1_{Y})) }  & \oc\left((A \otimes A) \otimes \left( X \otimes Y \right) \right) \ar[r]_-{\oc(\tau_{A,A, X, Y}) } & \oc\left( (A \otimes X) \otimes (A \otimes Y) \right)
  } \]  }%
Finally that $\omega^\natural$ satisfies (\ref{nablastrong}) follows immediately from the symmetric monoidal endofunctor coherences (\ref{smcmonendo}).   
  
   \noindent $(2) \Rightarrow (1)$: Let $(A, \nabla, \mathsf{u}, \Delta, \mathsf{e})$ be a cocommutative bimonoid in $(\mathbb{X}, \otimes, K)$ equipped with a symmetric monoidal mixed distributive law $\lambda$ satisfying (\ref{nablastrong}). Define the map $\lambda^\diamondsuit: A \to \oc(A)$ as follows: 
    \begin{equation}\label{diamond}\begin{gathered}\lambda^\diamondsuit := \xymatrixcolsep{3pc}\xymatrix{ A \ar[r]^-{\rho^{-1}_A} & A \otimes K \ar[r]^-{1_A \otimes \mathsf{m}_K} & A \otimes \oc(K) \ar[r]^-{\lambda_K} & \oc(A \otimes K) \ar[r]^-{\oc(\rho_A)} & \oc(A)
  } \end{gathered}\end{equation}
That $(A, \lambda^\diamondsuit)$ is a $\oc$-coalgebra follows from commutativity of the following diagrams:
    \[  \xymatrixcolsep{4pc}\xymatrix{ A \ar[r]^-{\rho^{-1}_A}  \ar@/_3pc/@{=}[drrrr]^-{} & A \otimes K  \ar@/_1pc/@{=}[drr]^-{} \ar[r]^-{1_A \otimes \mathsf{m}_K} & A \otimes \oc(K) \ar@{}[d]|(0.4){(\ref{symcomonad})}\ar[dr]_-{1_A \otimes \varepsilon_K} \ar[r]^-{\lambda_K} & \oc(A \otimes K) \ar@{}[dl]|(0.3){(\ref{mixeddist1})} \ar@{}[dr]^-{\text{Nat. of $\varepsilon$}~~~~~~~}  \ar[d]^-{\varepsilon_{A \otimes K}} \ar[r]^-{\oc(\rho_A)} & \oc(A) \ar[d]^-{\varepsilon_A} \\
    &&& A \otimes K \ar[r]^-{\rho_A}& A   
  } \] 
  \[  \xymatrixcolsep{3pc}\xymatrix{ A \ar[r]^-{\rho^{-1}_A} \ar[d]_-{\rho^{-1}_A} & A \otimes K  \ar@{}[ddr]|-{(\ref{symcomonad})} \ar[dd]_-{1_A \otimes \mathsf{m}_K} \ar[r]^-{1_A \otimes  \mathsf{m}_K}& A \otimes \oc(K) \ar@{}[ddddr]|-{(\ref{mixeddist1})} \ar[dd]_-{1_A \otimes \delta_K}\ar[r]^-{\lambda_K} & \oc(A \otimes K)  \ar@{}[ddddr]|-{\text{Nat. of $\delta$}}  \ar[dddd]_-{\delta_{A \otimes K}}  \ar[r]^-{\oc(\rho_A)} & \oc(A) \ar[dddd]^-{\delta_A} \\
  A \otimes K \ar[d]_-{1_A \otimes \mathsf{m}_K} &  \\
   A \otimes \oc(K) \ar[d]_-{\lambda_K} &  A \otimes \oc(K)  \ar@{}[ddr]|-{\text{Nat. of $\lambda$}} \ar[dd]_-{\lambda_{K}}\ar[r]_-{1_A \otimes \oc(\mathsf{m}_K)} & A \otimes \oc\oc(K) \ar[dd]_-{\lambda_{\oc(K)}} \\
    \oc(A \otimes K) \ar[d]_-{\oc(\rho_A)} \ar@/^1pc/@{=}[dr]^-{}  \\
     \oc(A)  \ar[r]_-{\oc(\rho^{-1}_A)} & \oc(A \otimes K) \ar[r]_-{\oc(1_A \otimes \mathsf{m}_K)} & \oc\left(A \otimes \oc(K)\right) \ar[r]_-{\oc(\lambda_K)} & \oc\oc(A \otimes K)  \ar[r]_-{\oc\oc(\rho_A)} & \oc\oc(A) 
  } \]    
  Now since $\lambda$ is a symmetric monoidal mixed distributive law of $(A \otimes -, \mu^\nabla, \eta^\mathsf{u}, \mathsf{n}^\Delta, \mathsf{n}^\mathsf{e}_K)$ (as defined in Example \ref{bimonadex}) over $(\oc, \delta, \varepsilon, \mathsf{m}, \mathsf{m}_K)$, by Proposition \ref{liftsymmix} we have obtained a symmetric comonoidal monad $(\widetilde{A \otimes -}, \tilde{\mu^\nabla}, \tilde{\eta^\mathsf{u}}, \tilde{\mathsf{n}^\Delta}, \tilde{\mathsf{n}^\mathsf{e}}_{(K, \mathsf{m}_K)})$ on $(\mathbb{X}^\oc, \otimes^\mathsf{m}, (K, \mathsf{m}_K))$. Recall that on objects the endofunctor $\widetilde{A \otimes -}: \mathbb{X}^\oc \to \mathbb{X}^\oc$ is defined as $(\widetilde{A \otimes -})(X, \omega) := (A \otimes X, \omega^\flat)$, where $\omega^\flat$ is defined as in  (\ref{omegaflat}), that is: 
\[ \omega^\flat:= \xymatrixcolsep{5pc}\xymatrix{A \otimes X \ar[r]^-{1_A \otimes \omega} & A \otimes \oc(X) \ar[r]^-{\lambda_X} & \oc(A \otimes X) } \]
  Now notice that the following diagram also commutes: 
  \[  \xymatrixcolsep{2.95pc}\xymatrix{ A \otimes X \ar[dd]_-{1_A \otimes \omega} \ar[r]^-{\rho^{-1}_A \otimes \omega} & (A \otimes K) \otimes \oc(X) \ar@{}[dd]|-{(\ref{SMCaxioms}) + (\ref{smcmonendo})} \ar[rr]^-{(1_A \otimes \mathsf{m}_K) \otimes 1_{\oc(X)}} && (A \otimes \oc(K)) \otimes \oc(X)  \ar[d]^-{\lambda_K \otimes 1_{\oc(X)}} \ar[dl]|-{\alpha^{-1}_{A, \oc(K), \oc(X)}} \\
  &&A \otimes (\oc(K) \otimes \oc(X)) \ar@{}[dd]|-{(\ref{nablastrong})}  \ar[dl]|-{1_A \otimes \mathsf{m}_{K,X}} &  \oc(A \otimes K) \otimes \oc(X) \ar[dddl]_-{\mathsf{m}_{A \otimes K, X}} \ar[d]^-{\oc(\rho_A) \otimes 1_{\oc(X)}}  \\
 A \otimes \oc(X) \ar[dd]_-{\lambda_X} \ar@{}[ddr]|-{\text{Nat. of $\lambda$}}  \ar[r]^-{1_A \otimes \oc( \ell^{-1}_X)}& A \otimes \oc(K \otimes X)  \ar[dd]^-{\lambda_{K \otimes X}} &   &   \oc(A) \otimes \oc(X) \ar[dd]^-{\mathsf{m}_{A,X}} \ar@{}[ddl]|(0.35){\text{Nat. of $\mathsf{m}$}} \\
&&  \\ 
\oc(A \otimes X) \ar@/_3.5pc/@{=}[rrr]^-{} \ar[r]^-{\oc(1_A \otimes \ell^{-1}_X)} &  \oc(A \otimes (K \otimes X))  \ar@{}[dr]|-{(\ref{SMCaxioms})}  \ar[r]^-{\oc(\alpha_{A,K,X})} & \oc\left( (A \otimes K) \otimes X \right) \ar[r]^-{\oc(\rho_A \otimes 1_X)} & \oc(A \otimes X) \\
& & 
  } \]
The above diagram implies that $(A \otimes X, \omega^\flat)= (A, \lambda^\diamondsuit) \otimes^\mathsf{n} (X, \omega)$ and therefore $\widetilde{A \otimes -} = (A, \lambda^\diamondsuit) \otimes^\mathsf{n} -$. As explained in Example \ref{bimonadex}, a symmetric comonoidal monad structure on ${(A, \lambda^\diamondsuit) \otimes^\mathsf{n} -}$ induces a cocommutative bimonoid structure on $(A, \lambda^\diamondsuit)$. Furthermore, one easily checks that the induced cocommutative bimonoid structure on $(A, \lambda^\diamondsuit)$ from $((A, \lambda^\diamondsuit) \otimes^\mathsf{n} -, \tilde{\mu^\nabla}, \tilde{\eta^\mathsf{u}}, \tilde{\mathsf{n}^\Delta}, \tilde{\mathsf{n}^\mathsf{e}}_{(K, \mathsf{m}_K)})$ will be precisely $\left( (A, \lambda^\diamondsuit),  \nabla, \mathsf{u}, \Delta, \mathsf{e} \right)$. 
  
To prove the bijective correspondence, it suffices to show that these constructions are inverse to each other, that is, $(\omega^\natural)^\diamondsuit = \omega$ and $(\lambda^\diamondsuit)^\natural = \lambda$. Starting with a cocommutative bimonoid $((A,\omega), \nabla, \mathsf{u}, \Delta, \mathsf{e})$ in $(\mathbb{X}^\oc, \otimes^\mathsf{m}, (K, \mathsf{m}_K))$, that ${(\omega^\natural)^\diamondsuit = \omega}$ follows immediately from the symmetric monoidal endofunctor coherences:  
        \[ \xymatrixcolsep{3pc}\xymatrix{ A  \ar@/_3.5pc/[rrrrr]_-{\omega} \ar[r]^-{\rho^{-1}_A} & A \otimes K \ar[r]^-{1_A \otimes \mathsf{m}_K} & A \otimes \oc(K) \ar[r]^-{\omega \otimes 1_{\oc(K)}} & \oc(A) \otimes \oc(K) \ar@{}[d]|(0.5){(\ref{smcmonendo})}\ar[r]^-{\mathsf{m}_{A,K}} & \oc(A \otimes K) \ar[r]^-{\oc(\rho_A)} & \oc(A) \\
        & &&      
  } \]
Conversly, let $(A, \nabla, \mathsf{u}, \Delta, \mathsf{e})$ be a cocommutative bimonoid in $(\mathbb{X}, \otimes, K)$ equipped with a symmetric monoidal mixed distributive law $\lambda$ which satisfies (\ref{nablastrong}). That $(\lambda^\diamondsuit)^\natural = \lambda$ follows from commutativity of the following diagram:
  \[  \xymatrixcolsep{2.25pc}\xymatrix{ A \otimes \oc(X) \ar[ddrr]_-{1_A \otimes \oc(\ell^{-1}_X)}  \ar[dddd]_-{\lambda_X}   \ar[r]^-{\rho_A \otimes 1_{\oc(X)}} & (A \otimes K) \otimes \oc(X) \ar@{}[dr]|-{(\ref{SMCaxioms}) + (\ref{smcmonendo})} \ar[rr]^-{(1_A \otimes \mathsf{m}_K) \otimes 1_{\oc(X)}} && (A \otimes \oc(K)) \otimes \oc(X) \ar[dl]|-{\alpha^{-1}_{A,\oc(K),\oc(X)}}  \ar[d]^-{\lambda_K \otimes 1_{\oc(X)}} \\
  &&A \otimes (\oc(K) \otimes \oc(X)) \ar[d]^-{1_A \otimes \mathsf{m}_{K,X}} & \oc(A \otimes K) \otimes \oc(X) \ar[d]^-{\oc(\rho_A) \otimes 1_{\oc(X)}} \\
  &&A \otimes \oc(K \otimes X) \ar@{}[r]|-{(\ref{nablastrong})} \ar[dd]_-{\lambda_{K \otimes X}}  &  \oc(A) \otimes \oc(X) \ar[dd]^-{\mathsf{m}_{A,X}} \\ 
  &&& \\
 A \otimes \oc(K \otimes X) \ar@/_3.5pc/@{=}[rrr]^-{} \ar[rr]^-{\oc(1_A \otimes \ell^{-1}_X)}   && \oc(A \otimes (K \otimes X)) \ar[r]^-{\oc(1_A \otimes \ell_X)}&  \oc(A \otimes X)  
  } \]
 \end{proof}

We conclude this section with how to extend Proposition \ref{symmonresult} from bimonoids to Hopf monoids, as we will need the latter to lift symmetric monoidal closed structure. 

\begin{prop} \label{symmonresult2} Let $(\oc, \delta, \varepsilon, \mathsf{m}, \mathsf{m}_K)$ be a symmetric monoidal comonad on a symmetric monoidal category $(\mathbb{X}, \otimes, K)$. Then the following are in bijective correspondence:
\begin{enumerate}
\item Cocommutative Hopf monoids in $(\mathbb{X}^\oc, \otimes^\mathsf{m}, (K, \mathsf{m}_K))$;
\item Cocommutative Hopf monoids $(H, \nabla, \mathsf{u}, \Delta, \mathsf{e}, \mathsf{S})$ in $(\mathbb{X}, \otimes, K)$ equipped with a natural transformation $\lambda_X: H \otimes \oc(X) \to \oc(H \otimes X)$ such that $\lambda$ is a symmetric monoidal mixed distributive law of $(H \otimes -, \mu^\nabla, \eta^\mathsf{u}, \mathsf{n}^\Delta, \mathsf{n}^\mathsf{e}_K)$, $\lambda$ satisfies (\ref{nablastrong}) and the following diagram commutes: 
\begin{equation}\label{extraS}\begin{gathered} \xymatrixcolsep{5pc}\xymatrix{ H \otimes \oc(X) \ar[d]_-{\lambda_X} \ar[r]^-{\mathsf{S} \otimes 1_{\oc(X)}} & H \otimes \oc(X) \ar[d]^-{\lambda_X} \\
\oc(H \otimes X) \ar[r]_-{\oc(\mathsf{S} \otimes 1_X)} & \oc(H \otimes X)
  } \end{gathered}\end{equation}
\end{enumerate}
\end{prop}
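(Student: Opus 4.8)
The plan is to bootstrap off Proposition \ref{symmonresult}, which already establishes the desired bijection at the level of cocommutative bimonoids, and then merely account for the antipode. The crucial structural observation is that the forgetful functor $\mathsf{U}^\oc \colon \mathbb{X}^\oc \to \mathbb{X}$ is a faithful strict symmetric monoidal functor: it sends $\otimes^\mathsf{m}$ to $\otimes$, $(K,\mathsf{m}_K)$ to $K$, and acts as the identity on underlying maps. Consequently the Hopf identity (\ref{hopf}) — an equation built entirely from the (co)monoid structure maps, the antipode, and the symmetric monoidal structure — holds for $((H,\omega),\nabla,\mathsf{u},\Delta,\mathsf{e},\mathsf{S})$ in $(\mathbb{X}^\oc, \otimes^\mathsf{m}, (K,\mathsf{m}_K))$ if and only if it holds for $(H,\nabla,\mathsf{u},\Delta,\mathsf{e},\mathsf{S})$ in $(\mathbb{X},\otimes,K)$, provided all the participating maps are morphisms of $\mathbb{X}^\oc$. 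Thus a cocommutative Hopf monoid in $\mathbb{X}^\oc$ is precisely a cocommutative bimonoid in $\mathbb{X}^\oc$ equipped with a map $\mathsf{S}$ that is simultaneously an $\oc$-coalgebra morphism and an antipode in the underlying category.

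First I would treat $(1)\Rightarrow(2)$. Given a cocommutative Hopf monoid $((H,\omega),\nabla,\mathsf{u},\Delta,\mathsf{e},\mathsf{S})$ in $\mathbb{X}^\oc$, Proposition \ref{symmonresult} supplies the cocommutative bimonoid $(H,\nabla,\mathsf{u},\Delta,\mathsf{e})$ in $\mathbb{X}$ together with $\lambda := \omega^\natural$ as defined in (\ref{alphaomega}). The map $\mathsf{S}$ is then an antipode in $\mathbb{X}$ by the faithful-strict-monoidal argument above, so $(H,\nabla,\mathsf{u},\Delta,\mathsf{e},\mathsf{S})$ is a cocommutative Hopf monoid in $\mathbb{X}$. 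It remains only to derive (\ref{extraS}), and this is a short chase: since $\mathsf{S}$ is an $\oc$-coalgebra morphism (see (\ref{!coalgmap})) we have $\oc(\mathsf{S})\circ\omega = \omega\circ\mathsf{S}$, and feeding this through the definition of $\omega^\natural$ and using naturality of $\mathsf{m}$ gives $\oc(\mathsf{S}\otimes 1_X)\circ\omega^\natural_X = \omega^\natural_X\circ(\mathsf{S}\otimes 1_{\oc(X)})$, which is exactly (\ref{extraS}).

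For the converse $(2)\Rightarrow(1)$, I start from $(H,\nabla,\mathsf{u},\Delta,\mathsf{e},\mathsf{S})$ with $\lambda$ satisfying both (\ref{nablastrong}) and (\ref{extraS}). Proposition \ref{symmonresult} produces the cocommutative bimonoid $((H,\lambda^\diamondsuit),\nabla,\mathsf{u},\Delta,\mathsf{e})$ in $\mathbb{X}^\oc$, with $\lambda^\diamondsuit$ as in (\ref{diamond}). The task is to show $\mathsf{S}$ is an $\oc$-coalgebra morphism $(H,\lambda^\diamondsuit)\to(H,\lambda^\diamondsuit)$, i.e.\ that $\oc(\mathsf{S})\circ\lambda^\diamondsuit = \lambda^\diamondsuit\circ\mathsf{S}$. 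Unfolding the definition of $\lambda^\diamondsuit$ and applying (\ref{extraS}) at $X=K$, together with naturality of $\rho$ and bifunctoriality of $\otimes$ (to slide $\mathsf{S}$ past $\mathsf{m}_K$), yields this identity. Since $\mathsf{S}$ is an antipode in $\mathbb{X}$ and all maps of the Hopf diagram are now $\oc$-coalgebra morphisms, $\mathsf{S}$ is again an antipode in $\mathbb{X}^\oc$, so $((H,\lambda^\diamondsuit),\nabla,\mathsf{u},\Delta,\mathsf{e},\mathsf{S})$ is a cocommutative Hopf monoid in $\mathbb{X}^\oc$.

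Bijectivity then follows from that of Proposition \ref{symmonresult} on the bimonoid data, together with the uniqueness of antipodes (Lemma \ref{lemmahopf}.(i)): for a fixed bimonoid the antipode is determined, so the bimonoid correspondence lifts uniquely to Hopf monoids, and because $\mathsf{U}^\oc$ is the identity on maps the two translations $\mathsf{S}\mapsto\mathsf{S}$ are visibly mutually inverse. The only genuinely new content beyond Proposition \ref{symmonresult} is the pair of equivalences between ``$\mathsf{S}$ is an $\oc$-coalgebra morphism'' and condition (\ref{extraS}); these are routine diagram chases, so I expect no real obstacle — the main care is simply in tracking the $\rho$- and $\mathsf{m}_K$-bookkeeping in the $\lambda^\diamondsuit$ computation.
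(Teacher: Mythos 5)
Your proposal is correct and follows essentially the same route as the paper: reduce to Proposition \ref{symmonresult} via uniqueness of antipodes (Lemma \ref{lemmahopf}.(i)), then verify the two equivalences between ``$\mathsf{S}$ is an $\oc$-coalgebra morphism'' and condition (\ref{extraS}) by the same short chases — naturality of $\mathsf{m}$ together with (\ref{!coalgmap}) in one direction, and (\ref{extraS}) at $X=K$ with naturality of $\rho$ in the other.
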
 
\begin{proof} The bijective correspondence will follow immediately from Proposition \ref{symmonresult} and that antipodes are unique for Hopf monoids (Lemma \ref{lemmahopf} (i)). It remains to show that the constructed symmetric mixed distributive law satisfies (\ref{extraS}) and conversly that the antipode is a $\oc$-coalgebra morphism. \\
\noindent $(1) \Rightarrow (2)$: Let $((H, \omega), \nabla, \mathsf{u}, \Delta, \mathsf{e}, \mathsf{S})$ be a cocommutative Hopf monoid in $(\mathbb{X}^\oc, \otimes^{\mathsf{m}}, (K,\mathsf{m}_K))$. In particular $\mathsf{S}: (H, \omega) \to (H, \omega)$ is a $\oc$-coalgebra morphism. Therefore that $\omega^\natural: A \otimes \oc(X) \to \oc(A \otimes X)$ (as defined in (\ref{alphaomega})) satisfies (\ref{extraS}) follows commutativity of the following diagram: 
  \[  \xymatrixcolsep{5pc}\xymatrix{ H \otimes \oc(X) \ar@{}[dr]|-{(\ref{!coalgmap})} \ar[d]_-{\mathsf{S} \otimes 1_{\oc(X)}} \ar[r]^-{\omega \otimes 1_{\oc(X)}} & \oc(H) \otimes \oc(X) \ar@{}[dr]|-{\text{Nat. of $\mathsf{m}$}}  \ar[d]_-{\oc(\mathsf{S}) \otimes 1_{\oc(X)}} \ar[r]^-{\mathsf{m}_{H,X}} & \oc(H \otimes X) \ar[d]^-{\oc(\mathsf{S} \otimes 1_X)} \\
    H \otimes \oc(X) \ar[r]_-{\omega \otimes 1_{\oc(X)}} & \oc(H) \otimes \oc(X) \ar[r]_-{\mathsf{m}_{H,X}} & \oc(H \otimes X)
  } \]
  $(2) \Rightarrow (1)$: Let $(H, \nabla, \mathsf{u}, \Delta, \mathsf{e}, \mathsf{S})$ be a cocommutative Hopf monoid with symmetric monoidal distributive law $\lambda_X: H \otimes \oc(X) \to \oc(H \otimes X)$ satisfying (\ref{nablastrong}) and (\ref{extraS}). That $\mathsf{S}: (H, \lambda^\diamondsuit) \to (H, \lambda^\diamondsuit)$ (as defined in (\ref{diamond})) is a $\oc$-coalgebra morphism follows from commutativity of the following diagram:
   \[\xymatrixcolsep{4.75pc}\xymatrix{ H \ar[d]_-{\mathsf{S}} \ar@{}[dr]|-{\text{Nat. of $\rho^{-1}$}} \ar[r]^-{\rho^{-1}_H} & H \otimes K \ar[d]^-{\mathsf{S} \otimes 1_K} \ar[r]^-{1_H \otimes \mathsf{m}_K} & H \otimes \oc(K) \ar@{}[dr]|-{(\ref{extraS})} \ar[d]^-{\mathsf{S} \otimes 1_{\oc(K)}} \ar[r]^-{\lambda_K} & \oc(H \otimes K) \ar@{}[dr]|-{~~~~~\text{Nat. of $\rho$}} \ar[d]^-{\oc(\mathsf{S} \otimes 1_K)} \ar[r]^-{\oc(\rho_H)} & \oc(H) \ar[d]^-{\oc(\mathsf{S})} \\
   H \ar[r]_-{\rho^{-1}_H} & H \otimes K \ar[r]_-{1_H \otimes \mathsf{m}_K} & H \otimes \oc(K) \ar[r]_-{\lambda_K} & \oc(H \otimes K) \ar[r]_-{\oc(\rho_H)} & \oc(H)
  } \]
\end{proof} 

\section{Mixed Distributive Laws for Coalgebra Modalities}\label{coalgmixsec}

In this section, we lift both coalgebra modalities and monoidal coalgebra modalities to Eilenberg-Moore categories of symmetric comonoidal monads, the latter of which requires symmetric monoidal mixed distributive laws. We also introduce the notions of exponential lifting monads and $\mathsf{MELL}$ lifting monads, where in particular the Eilenberg-Moore category of a $\mathsf{MELL}$ lifting monad is a linear category -- which is the main result of this paper. 

\subsection{Lifting Coalgebra Modalities}

In order to lift coalgebra modalities to Eilenberg-Moore categories of symmetric comonoidal monads, one requires that $\Delta$ and $\mathsf{e}$ be $\mathsf{T}$-algebra morphism. This can equivalently be expressed as needing a mixed distributive law which is also a comonoid morphism since symmetric comonoidal endofunctors preserve cocommutative comonoids \cite{moerdijk2002monads}. Indeed, let $(\mathsf{T}, \mathsf{n}, \mathsf{n}_K)$ be a symmetric comonoidal endofunctor on a symmetric monoidal category $(\mathbb{X}, \otimes, K)$. If $(A, \Delta, \mathsf{e})$ is a cocommutative comonoid in $(\mathbb{X}, \otimes, K)$ then so is the triple $(\mathsf{T}(A), \Delta^\mathsf{T}, \mathsf{e}^\mathsf{T})$ where the comultiplication and counit are defined as follows:
\begin{equation}\label{Tcom}\begin{gathered} \Delta^\mathsf{T} :=  \xymatrixcolsep{5pc}\xymatrix{ \mathsf{T}(A) \ar[r]^-{\mathsf{T}(\Delta)} & \mathsf{T}(A \otimes A) \ar[r]^-{\mathsf{n}_{A,A}} & \mathsf{T}(A) \otimes \mathsf{T}(A) 
  } \\
   \mathsf{e}^\mathsf{T} :=  \xymatrixcolsep{5pc}\xymatrix{ \mathsf{T}(A) \ar[r]^-{\mathsf{T}(\mathsf{e})} & \mathsf{T}(K) \ar[r]^-{\mathsf{n}_{K}} & K 
  } \end{gathered}\end{equation}
Note that if $(\mathbb{X}, \otimes, K)$ is a Cartesian monoidal category, then $(\mathsf{T}(A), \Delta^\mathsf{T}, \mathsf{e}^\mathsf{T})$ must be equal to the unique cocommutative comonoid structure on $\mathsf{T}(A)$ from Lemma \ref{Cartcom}. 

 \begin{defi}\label{coalgmix} Let $(\mathsf{T}, \mu, \eta, \mathsf{n}, \mathsf{n}_{K})$ be a symmetric comonoidal monad and $(\oc, \delta, \varepsilon, \Delta, \mathsf{e})$ be a coalgebra modality on the same symmetric monoidal category $(\mathbb{X}, \otimes, K)$. A \textbf{coalgebra mixed distributive law} of $(\mathsf{T}, \mu, \eta, \mathsf{n}, \mathsf{n}_{K})$ over $(\oc, \delta, \varepsilon, \Delta, \mathsf{e})$ is a mixed distributive law $\lambda$ of $(\mathsf{T}, \mu, \eta)$ over $(\oc, \delta, \varepsilon)$ such that $\lambda_A: (\mathsf{T}\oc(A), \Delta_A^\mathsf{T}, \mathsf{e}_A^\mathsf{T}) \to (\oc\mathsf{T}(A), \Delta_{\mathsf{T}(A)}, \mathsf{e}_{\mathsf{T}(A)})$ is a comonoid morphism, that is, the following diagrams commute: 
\begin{equation}\label{mixeddistcoalg} \begin{gathered}\xymatrixcolsep{4pc}\xymatrix{\mathsf{T}\oc(A) \ar[d]_-{\mathsf{T}(\Delta_A)} \ar[r]^-{\lambda_A}& \oc \mathsf{T}(A) \ar[dd]^-{\Delta_{\mathsf{T}(A)}} & \mathsf{T}\oc(A) \ar[d]_-{\mathsf{T}(\mathsf{e}_A)} \ar[r]^-{\lambda_A} & \oc\mathsf{T}(A) \ar[d]^-{\mathsf{e}_{\mathsf{T}(A)}} \\
    \mathsf{T}(\oc(A) \otimes \oc(A)) \ar[d]_-{\mathsf{n}_{\oc(A),\oc(A)}} && \mathsf{T}K \ar[r]_-{\mathsf{n}_{K}} & K\\
    \mathsf{T}\oc(A) \otimes \mathsf{T}\oc(A) \ar[r]_-{\lambda_A \otimes \lambda_A} & \oc\mathsf{T}(A) \otimes \oc \mathsf{T}(A)
  } \end{gathered}\end{equation}
\end{defi}

We first observe that these mixed distributive laws preserve the induced comonoid structure on $\oc$-coalgebras in the following sense: 

\begin{lem}\label{preserve!com}  Let $(\mathsf{T}, \mu, \eta, \mathsf{n}, \mathsf{n}_{K})$ be a symmetric comonoidal monad and $(\oc, \delta, \varepsilon, \Delta, \mathsf{e})$ be a coalgebra modality on the same symmetric monoidal category $(\mathbb{X}, \otimes, K)$, and let $\lambda$ be a coalgebra mixed distributive law of $(\mathsf{T}, \mu, \eta, \mathsf{n}, \mathsf{n}_{K})$ over $(\oc, \delta, \varepsilon, \Delta, \mathsf{e})$. If $(A, \omega)$ is a $\oc$-coalgebra, then the following diagrams commute:
\begin{equation}\label{mixcompres}\begin{gathered} \xymatrixcolsep{5pc}\xymatrix{\mathsf{T}(A)  \ar[dr]_-{\Delta^{\omega^\flat}} \ar[r]^-{\mathsf{T}(\Delta^\omega)}& \mathsf{T} (A \otimes A) \ar[d]^-{\mathsf{n}_{A,A}} & \mathsf{T}(A) \ar[r]^-{\mathsf{T}(\mathsf{e}^\omega)} \ar[dr]_-{\mathsf{e}^{\omega^\flat}} & \mathsf{T}(K) \ar[d]^-{\mathsf{n}_{K}} \\
  & \mathsf{T}(A) \otimes \mathsf{T}(A) & & K
  }  \end{gathered}\end{equation}
where $\omega^\flat$ is defined as in (\ref{omegaflat}), and both $\Delta^{\omega^\flat}$ and $\mathsf{e}^{\omega^{\flat}}$ are defined as in (\ref{!coalgcom}). 
\end{lem}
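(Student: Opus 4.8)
The plan is to reduce each of the two triangles to the coherences that are already available, by simply unfolding the definitions of $\omega^\flat$, $\Delta^{\omega^\flat}$, $\mathsf{e}^{\omega^\flat}$, $\Delta^\omega$, and $\mathsf{e}^\omega$ and then chasing. Recall that $\omega^\flat = \lambda_A \circ \mathsf{T}(\omega)$ from (\ref{omegaflat}), so both composites in (\ref{mixcompres}) begin with $\mathsf{T}(\omega)$; consequently it suffices to establish the corresponding identities before precomposing with $\mathsf{T}(\omega)$, after which the result follows by applying $\mathsf{T}$ to $\omega$ and using functoriality.

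For the counit triangle (the right-hand diagram), unfolding (\ref{!coalgcom}) gives $\mathsf{e}^{\omega^\flat} = \mathsf{e}_{\mathsf{T}(A)} \circ \lambda_A \circ \mathsf{T}(\omega)$, while the other leg is $\mathsf{n}_{K} \circ \mathsf{T}(\mathsf{e}^\omega) = \mathsf{n}_{K} \circ \mathsf{T}(\mathsf{e}_A) \circ \mathsf{T}(\omega)$. Thus it is enough to show $\mathsf{e}_{\mathsf{T}(A)} \circ \lambda_A = \mathsf{n}_{K} \circ \mathsf{T}(\mathsf{e}_A)$, which is precisely the right-hand diagram of (\ref{mixeddistcoalg}) (the statement that $\lambda_A$ preserves the counit, since $\mathsf{e}^\mathsf{T}_A = \mathsf{n}_K \circ \mathsf{T}(\mathsf{e}_A)$ by (\ref{Tcom})). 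So this triangle is immediate.

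For the comultiplication triangle (the left-hand diagram), I would proceed by a short chain of rewrites. Unfolding $\Delta^{\omega^\flat}$ via (\ref{!coalgcom}) gives $(\varepsilon_{\mathsf{T}(A)} \otimes \varepsilon_{\mathsf{T}(A)}) \circ \Delta_{\mathsf{T}(A)} \circ \lambda_A \circ \mathsf{T}(\omega)$. First I apply the left-hand diagram of (\ref{mixeddistcoalg}), namely $\Delta_{\mathsf{T}(A)} \circ \lambda_A = (\lambda_A \otimes \lambda_A) \circ \mathsf{n}_{\oc(A),\oc(A)} \circ \mathsf{T}(\Delta_A)$ (using $\Delta^\mathsf{T}_A = \mathsf{n}_{\oc(A),\oc(A)} \circ \mathsf{T}(\Delta_A)$ from (\ref{Tcom})). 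This produces $\big((\varepsilon_{\mathsf{T}(A)} \circ \lambda_A) \otimes (\varepsilon_{\mathsf{T}(A)} \circ \lambda_A)\big) \circ \mathsf{n}_{\oc(A),\oc(A)} \circ \mathsf{T}(\Delta_A) \circ \mathsf{T}(\omega)$. Next I use the counit coherence of the \emph{ordinary} mixed distributive law, i.e.\ the right-hand diagram of (\ref{mixeddist1}), which gives $\varepsilon_{\mathsf{T}(A)} \circ \lambda_A = \mathsf{T}(\varepsilon_A)$, collapsing the two bracketed factors to $\mathsf{T}(\varepsilon_A)$. Finally, naturality of $\mathsf{n}$ along $\varepsilon_A \colon \oc(A) \to A$ (in both variables) rewrites $(\mathsf{T}(\varepsilon_A) \otimes \mathsf{T}(\varepsilon_A)) \circ \mathsf{n}_{\oc(A),\oc(A)}$ as $\mathsf{n}_{A,A} \circ \mathsf{T}(\varepsilon_A \otimes \varepsilon_A)$, and folding the remaining composite back through functoriality of $\mathsf{T}$ yields $\mathsf{n}_{A,A} \circ \mathsf{T}\big((\varepsilon_A \otimes \varepsilon_A) \circ \Delta_A \circ \omega\big) = \mathsf{n}_{A,A} \circ \mathsf{T}(\Delta^\omega)$, which is exactly the other leg.

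The proof is entirely a bookkeeping diagram chase, so there is no deep obstacle; the only point requiring care is recognizing that \emph{two different} coherence families are needed—the comonoidal coherences (\ref{mixeddistcoalg}) of the \emph{coalgebra} mixed distributive law to handle $\Delta_{\mathsf{T}(A)} \circ \lambda_A$, and the plain counit coherence (\ref{mixeddist1}) of the underlying mixed distributive law to collapse $\varepsilon_{\mathsf{T}(A)} \circ \lambda_A$ to $\mathsf{T}(\varepsilon_A)$—together with naturality of $\mathsf{n}$ as the final glue. Keeping these ingredients distinct, and remembering that $\Delta^\mathsf{T}$ and $\mathsf{e}^\mathsf{T}$ are defined by (\ref{Tcom}), is what makes each step land on exactly the hypothesis it needs.
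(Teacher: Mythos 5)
Your proof is correct and follows essentially the same route as the paper's: both triangles are established by unfolding $\omega^\flat$, applying the two squares of (\ref{mixeddistcoalg}), collapsing $\varepsilon_{\mathsf{T}(A)}\circ\lambda_A$ to $\mathsf{T}(\varepsilon_A)$ via the counit coherence in (\ref{mixeddist1}), and finishing with naturality of $\mathsf{n}$. No gaps.
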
 
\begin{proof} The lemma follows from commutativity of the following diagrams: 
  \[\xymatrixcolsep{3pc}\xymatrix{\mathsf{T}(A) \ar@{=}[dd]^-{} \ar[r]^-{\mathsf{T}(\omega)} & \mathsf{T} \oc(A)  \ar@{}[ddrr]|-{(\ref{mixeddistcoalg})} \ar@{=}[dd]^-{} \ar[r]^-{\lambda_A} & \oc \mathsf{T}(A) \ar[r]^-{\Delta_{\mathsf{T}(A)} } & \oc \mathsf{T}(A) \otimes \oc \mathsf{T}(A) \ar[r]^-{\varepsilon_{\mathsf{T}(A)} \otimes \varepsilon_{\mathsf{T}(A)} } & \mathsf{T}(A) \otimes \mathsf{T}(A) \ar@{}[dl]^-{(\ref{mixeddist1})}  \ar@{=}[dd]^-{}  \\ 
 &&& \mathsf{T}\oc(A) \otimes \mathsf{T} \oc(A) \ar[u]_-{\lambda_A \otimes \lambda_A}\ar[dr]^-{~~\mathsf{T}(\varepsilon_A) \otimes \mathsf{T}(\varepsilon_A)} \\
  \mathsf{T}(A) \ar[r]_-{\mathsf{T}(\omega)}  & \mathsf{T} \oc(A) \ar[r]_-{\mathsf{T}(\Delta_A)} & \mathsf{T} (\oc(A) \otimes \oc(A)) \ar[ur]^-{\mathsf{n}_{A,A}} \ar[r]_-{\mathsf{T}(\varepsilon_A \otimes \varepsilon_A)} & \mathsf{T} (A \otimes A) \ar@{}[u]|-{\text{Nat. of $\mathsf{n}$}} \ar[r]_-{\mathsf{n}_{A,A}} & \mathsf{T}(A) \otimes \mathsf{T}(A)
  } \]
    \[  \adjustbox{valign=b}{\xymatrixcolsep{5pc}\xymatrix{\mathsf{T}(A)   \ar@{=}[dd]^-{} \ar[r]^-{\mathsf{T}(\omega)} & \mathsf{T} \oc(A) \ar@{}[ddrr]|-{(\ref{mixeddistcoalg})}   \ar@{=}[dd]^-{} \ar[r]^-{\lambda_A} & \oc \mathsf{T}(A) \ar[r]^-{\mathsf{e}_{\mathsf{T}(A)}} & K \ar@{=}[dd]^-{} \\
    \\
 \mathsf{T}(A) \ar[r]_-{\mathsf{T}(\omega)} & \mathsf{T} \oc(A) \ar[r]_-{\mathsf{T}(\mathsf{e}_A)} &\mathsf{T}(K) \ar[r]_-{\mathsf{n}_{K}} & K
}} \tag*{\qedhere}
\]
\end{proof} 

\begin{prop}\label{liftingcoalgmod}  Let $(\mathsf{T}, \mu, \eta, \mathsf{n}, \mathsf{n}_{K})$ be a symmetric comonoidal monad and $(\oc, \delta, \varepsilon, \Delta, \mathsf{e})$ be a coalgebra modality on the same symmetric monoidal category $(\mathbb{X}, \otimes, K)$. Then the following are in bijective correspondence:
\begin{enumerate}
\item Coalgebra mixed distributive laws of $(\mathsf{T}, \mu, \eta, \mathsf{n}, \mathsf{n}_{K})$ over $(\oc, \delta, \varepsilon, \Delta, \mathsf{e})$;
\item Liftings of $(\oc, \delta, \varepsilon, \Delta, \mathsf{e})$ to $(\mathbb{X}^\mathsf{T}, \otimes^{\mathsf{n}}, (K,\mathsf{n}_K))$, that is, a coalgebra modality $(\tilde{\oc}, \tilde{\delta}, \tilde{\varepsilon}, \tilde{\Delta}, \tilde{\mathsf{e}})$ on $(\mathbb{X}^\mathsf{T}, \otimes^{\mathsf{n}}, (K,\mathsf{n}_K))$ which is a lifting of the underlying comonad $(\oc, \delta, \varepsilon)$ to $\mathbb{X}^\mathsf{T}$ (in the sense of Theorem \ref{monadcomonad}) such that for every $\oc$-algebras $(A, \omega)$, $\mathsf{U}^\mathsf{T}(\tilde{\Delta}_{(A,\omega)})=\Delta_A$ and $\mathsf{U}^\mathsf{T}(\tilde{\mathsf{e}}_{(A,\omega)})=\mathsf{e}_A$. 
\end{enumerate}
\end{prop}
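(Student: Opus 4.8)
The plan is to reduce everything to Theorem \ref{monadcomonad}, which already supplies the bijection between mixed distributive laws of the underlying monad $(\mathsf{T}, \mu, \eta)$ over the underlying comonad $(\oc, \delta, \varepsilon)$ and liftings of that comonad to $\mathbb{X}^\mathsf{T}$. Since a coalgebra mixed distributive law is precisely such a $\lambda$ subject to the extra coherence (\ref{mixeddistcoalg}), and a lifting of the coalgebra modality is precisely such a comonad lifting equipped with the extra natural transformations $\tilde{\Delta}, \tilde{\mathsf{e}}$ satisfying $\mathsf{U}^\mathsf{T}(\tilde{\Delta}_{(A,\nu)}) = \Delta_A$ and $\mathsf{U}^\mathsf{T}(\tilde{\mathsf{e}}_{(A,\nu)}) = \mathsf{e}_A$, it suffices to show that under the Theorem \ref{monadcomonad} bijection these two packets of extra data correspond. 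The whole argument mirrors the proof of Proposition \ref{liftsymmix}.

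For $(1) \Rightarrow (2)$, I would start from a coalgebra mixed distributive law $\lambda$ and take the comonad lifting $(\tilde{\oc}, \tilde{\delta}, \tilde{\varepsilon})$ furnished by Theorem \ref{monadcomonad}, so that $\tilde{\oc}(A,\nu) = (\oc(A), \nu^\sharp)$ with $\nu^\sharp = \oc(\nu) \circ \lambda_A$ as in (\ref{nusharp}). The only thing to verify is that $\Delta_A$ and $\mathsf{e}_A$ underlie $\mathsf{T}$-algebra morphisms $\tilde{\Delta}_{(A,\nu)} \colon \tilde{\oc}(A,\nu) \to \tilde{\oc}(A,\nu) \otimes^\mathsf{n} \tilde{\oc}(A,\nu)$ and $\tilde{\mathsf{e}}_{(A,\nu)} \colon \tilde{\oc}(A,\nu) \to (K, \mathsf{n}_K)$. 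For $\Delta_A$ this is a short chase: expanding the target algebra structure $\nu^\sharp \otimes^\mathsf{n} \nu^\sharp = (\nu^\sharp \otimes \nu^\sharp) \circ \mathsf{n}_{\oc(A),\oc(A)}$ as in (\ref{eilentensor}), one uses naturality of $\Delta$ to slide $\oc(\nu)$ through $\Delta$ and then the left-hand diagram of (\ref{mixeddistcoalg}) (recalling $\Delta_A^\mathsf{T} = \mathsf{n}_{\oc(A),\oc(A)} \circ \mathsf{T}(\Delta_A)$ from (\ref{Tcom})) to obtain $\Delta_A \circ \nu^\sharp = (\nu^\sharp \otimes^\mathsf{n} \nu^\sharp) \circ \mathsf{T}(\Delta_A)$; the argument for $\mathsf{e}_A$ uses the right-hand diagram of (\ref{mixeddistcoalg}). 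Setting $\tilde{\Delta}$ and $\tilde{\mathsf{e}}$ to these morphisms, faithfulness and strict symmetric monoidality of $\mathsf{U}^\mathsf{T}$ transport the cocommutative comonoid axioms for $(\oc(A), \Delta_A, \mathsf{e}_A)$, the naturality of $\Delta$ and $\mathsf{e}$, and the comonoid-morphism property of $\delta_A$ verbatim into $\mathbb{X}^\mathsf{T}$, so $(\tilde{\oc}, \tilde{\delta}, \tilde{\varepsilon}, \tilde{\Delta}, \tilde{\mathsf{e}})$ is a coalgebra modality lifting $(\oc, \delta, \varepsilon, \Delta, \mathsf{e})$.

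For $(2) \Rightarrow (1)$, given the lifting, Theorem \ref{monadcomonad} produces a mixed distributive law $\lambda_A = \mu_A^\sharp \circ \mathsf{T}\oc(\eta_A)$ via (\ref{lambdabuild}), where $\tilde{\oc}(\mathsf{T}(A), \mu_A) = (\oc\mathsf{T}(A), \mu_A^\sharp)$. Because $\tilde{\Delta}$ and $\tilde{\mathsf{e}}$ lift $\Delta$ and $\mathsf{e}$, their instances $\Delta_{\mathsf{T}(A)}$ and $\mathsf{e}_{\mathsf{T}(A)}$ at the free algebra are $\mathsf{T}$-algebra morphisms out of $(\oc\mathsf{T}(A), \mu_A^\sharp)$, which is exactly the analogue here of (\ref{mtatalg}). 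Substituting the definition of $\lambda_A$ and applying these algebra-morphism identities together with naturality of $\Delta$ (to commute $\Delta$ past $\oc(\eta_A)$) and naturality of $\mathsf{n}$ (to commute $\mathsf{n}$ past $\mathsf{T}\oc(\eta_A)$) yields the left-hand diagram of (\ref{mixeddistcoalg}); the right-hand diagram follows identically using $\mathsf{e}_{\mathsf{T}(A)}$ and $\mathsf{n}_K$.

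Finally, the two assignments are mutually inverse because the underlying correspondence between $\lambda$ and $(\tilde{\oc}, \tilde{\delta}, \tilde{\varepsilon})$ is already a bijection by Theorem \ref{monadcomonad}, and we have merely matched the unique additional comonoid data on each side. The only step requiring genuine care is the free-algebra chase in $(2) \Rightarrow (1)$ — deducing (\ref{mixeddistcoalg}) from the lifted comonoid structure — but this is entirely routine, mirroring the corresponding computation for $\mathsf{m}$ and $\mathsf{n}$ in Proposition \ref{liftsymmix}, so no essentially new difficulty arises.
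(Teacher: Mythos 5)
Your proposal is correct and follows essentially the same route as the paper's proof: both reduce the bijection to Theorem \ref{monadcomonad} and then, mirroring Proposition \ref{liftsymmix}, check that $\Delta$ and $\mathsf{e}$ become $\mathsf{T}$-algebra morphisms on the induced lifting (via naturality and (\ref{mixeddistcoalg})), and conversely derive (\ref{mixeddistcoalg}) from the free-algebra instances of the lifted structure together with naturality of $\Delta$, $\mathsf{e}$, and $\mathsf{n}$. The diagram chases you describe are exactly the ones the paper carries out.
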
 
\begin{proof} We take the same approach as in the proof of Proposition \ref{liftsymmix}. The bijective correspondence will follow from Theorem \ref{monadcomonad}, and so it remains to show that we obtain one from the other. 

\noindent $(1) \Rightarrow (2)$: Let $\lambda$ be a coalgebra modality mixed distributive of $(\mathsf{T}, \mu, \eta, \mathsf{n}_{A,B}, \mathsf{n}_{K})$ over $(\oc, \delta, \varepsilon, \Delta, \mathsf{e})$. Consider the induced lifting of $(\oc, \delta, \varepsilon)$ from Theorem \ref{monadcomonad}. To prove that we have a lifting of the coalgebra modality, it suffices to show that $\Delta$ and $\mathsf{e}$ are $\mathsf{T}$-algebra morphisms. If $(A,\nu)$ is a $\mathsf{T}$-algebra, commutativity of the following diagrams shows that $\Delta_A$ and $\mathsf{e}_A$ are $\mathsf{T}$-algebra morphisms: 
  \[ \xymatrixcolsep{4pc}\xymatrix{\mathsf{T}(\oc(A))  \ar@{}[drr]|-{(\ref{mixeddistcoalg})} \ar[d]_-{\mathsf{T}(\Delta_A)} \ar[rr]^-{\lambda_A}&& \oc \mathsf{T}(A) \ar@{}[dr]|-{\text{Nat. of $\Delta$}} \ar[d]^-{\Delta_{\mathsf{T}(A)}} \ar[r]^-{\oc(\nu)} & \oc(A) \ar[d]^-{\Delta_A}\\
    \mathsf{T}(\oc(A) \otimes \oc(A)) \ar[r]_-{\mathsf{n}_{A,A}} & \mathsf{T}\oc(A) \otimes \mathsf{T}\oc(A) \ar[r]_-{\lambda_A \otimes \lambda_A} & \oc\mathsf{T}(A) \otimes \oc \mathsf{T}(A) \ar[r]_-{\oc(\nu) \otimes \oc(\nu)}& \oc(A) \otimes \oc(A)
  } \]
    \[  \xymatrixcolsep{5pc}\xymatrix{ \mathsf{T}\oc(A) \ar@{}[dr]|-{(\ref{mixeddistcoalg})} \ar[d]_-{\mathsf{T}(\mathsf{e}_A)} \ar[r]^-{\lambda_A} & \oc\mathsf{T}(A) \ar@{}[dr]|(0.35){\text{Nat. of $\mathsf{e}$}}  \ar[d]_-{\mathsf{e}_{\mathsf{T}(A)}} \ar[r]^-{\oc(\nu)} & \oc(A) \ar@/^1pc/[dl]^-{\mathsf{e}_A} \\
    \mathsf{T}(K) \ar[r]_-{\mathsf{n}_{K}} & K &
  } \]
  $(2)\Rightarrow (1)$: Let $(\tilde{\oc}, \tilde{\delta}, \tilde{\varepsilon}, \tilde{\Delta}, \tilde{\mathsf{e}})$ be a lifting of $(\oc, \delta, \varepsilon, \Delta, \mathsf{e})$ to $(\mathbb{X}^\mathsf{T}, \otimes^{\mathsf{n}}, (K,\mathsf{n}_K))$. This implies that $\Delta$ and $\mathsf{e}$ are $\mathsf{T}$-algebra morphisms, where in particular for free $\mathsf{T}$-algebras $(\mathsf{T}(A), \mu_A)$, the following diagrams commute:
 \begin{equation}\label{deltatalg} \begin{gathered} \xymatrixcolsep{5pc}\xymatrix{\mathsf{T}\oc \mathsf{T}(A) \ar[d]_-{\mathsf{T}(\Delta_{\mathsf{T}(A)})} \ar[rr]^-{\mu_A^\sharp}&& \oc \mathsf{T}(A) \ar[d]^-{\Delta_{\mathsf{T}(A)}}  \\
    \mathsf{T}(\oc(A) \otimes \oc(A)) \ar[r]_-{\mathsf{n}_{\oc(A),\oc(A)}} &  \mathsf{T}\oc \mathsf{T}(A) \otimes \mathsf{T}\oc \mathsf{T}(A) \ar[r]_-{\mu_A^\sharp \otimes \mu_A^\sharp}  &  \oc\mathsf{T}(A) \otimes \oc \mathsf{T}(A)
  } \\
   \xymatrixcolsep{5pc}\xymatrix{ \mathsf{T}\oc \mathsf{T}(A) \ar[d]_-{\mathsf{T}(\mathsf{e}_{\mathsf{T}(A)})} \ar[r]^-{\mu_A^\sharp} & \oc\mathsf{T}(A) \ar[d]^-{\mathsf{e}_{\mathsf{T}(A)}} \\
 \mathsf{T}(K) \ar[r]_-{\mathsf{n}_{K}} & K
  } \end{gathered}\end{equation}
    where recall $\mu_A^\sharp$ is the $\mathsf{T}$-algebra structure of ~$\tilde{\oc} (\mathsf{T}(A), \mu_A)=(\oc \mathsf{T}(A), \mu_A^\sharp)$. Consider now the induced mixed distributive law $\lambda$ of $(\mathsf{T}, \mu, \eta)$ over $(\oc, \delta, \varepsilon)$ as defined in (\ref{lambdabuild}). Then that $\lambda$ is a comonoid morphism, and therefore also a coalgebra mixed distributive law, follows from commutativity of the following diagrams: 
   \[ \xymatrixcolsep{7pc}\xymatrix{\mathsf{T} \oc(A)  \ar@{}[dr]|{\text{Nat. of $\Delta$}} \ar[d]_-{\mathsf{T}(\Delta_A)}\ar[r]^-{\mathsf{T}\oc(\eta_A)} & \mathsf{T}\oc \mathsf{T}(A) \ar@{}[ddr]|{(\ref{deltatalg})} \ar[d]_-{\mathsf{T}(\Delta_A)} \ar[r]^-{\mu_A^\sharp}& \oc \mathsf{T}(A) \ar[dd]^-{\Delta_{\mathsf{T}(A)}}  \\
   \mathsf{T}(\oc(A) \otimes \oc(A)) \ar@{}[dr]|{\text{Nat. of $\mathsf{n}$}} \ar[d]_-{\mathsf{n}_{\oc(A),\oc(A)}} \ar[r]_-{\mathsf{T}(\oc(\eta_A) \otimes \oc(\eta_A))}&  \mathsf{T}(\oc \mathsf{T}(A) \otimes \oc \mathsf{T}(A)) \ar[d]_-{\mathsf{n}_{\mathsf{}\oc\mathsf{T}(A),\oc\mathsf{T}(A)}} \\
  \mathsf{T} \oc(A) \otimes \mathsf{T} \oc(A)  \ar[r]_-{\mathsf{T}\oc(\eta_A) \otimes \mathsf{T}\oc(\eta_A)}  &  \mathsf{T}\oc \mathsf{T}(A) \otimes \mathsf{T}\oc \mathsf{T}(A) \ar[r]_-{\mu_A^\sharp \otimes \mu_A^\sharp} & \oc\mathsf{T}(A) \otimes \oc \mathsf{T}(A)
  } \]
     \[ \adjustbox{valign=b}{\xymatrixcolsep{5pc}\xymatrix{ \mathsf{T} \oc(A)   \ar[r]^-{\mathsf{T} \oc(\eta_A)} \ar@/_1pc/[dr]_-{\mathsf{T}(\mathsf{e}_A)} & \mathsf{T}\oc\mathsf{T}(A)  \ar@{}[dr]|{(\ref{deltatalg})} \ar@{}[dl]|(0.35){\text{Nat. of $\mathsf{e}$}} \ar[d]^-{\mathsf{T}(\mathsf{e}_{\mathsf{T}(A)})} \ar[r]^-{\mu_A^\sharp} & \oc\mathsf{T}(A) \ar[d]^-{\mathsf{e}_{\mathsf{T}(A)}} \\
  & \mathsf{T}K \ar[r]_-{\mathsf{n}_{K}} & K
  }}\tag*{\qedhere} \]
\end{proof} 

\subsection{Lifting Monoidal Coalgebra Modalities}

We first show that for monoidal coalgebra modalities, every symmetric monoidal mixed distributive law is in fact also a coalgebra mixed distributive law. 

\begin{lem}\label{biglemma} Let $(\mathsf{T}, \mu, \eta, \mathsf{n}, \mathsf{n}_{K})$ be a symmetric comonoidal monad and $(\oc, \delta, \varepsilon, \Delta, \mathsf{e}, \mathsf{m},\mathsf{m}_K)$ be a monoidal coalgebra modality on the same symmetric monoidal category $(\mathbb{X}, \otimes, K)$, and let $\lambda$ be a symmetric monoidal mixed distributive law of $(\mathsf{T}, \mu, \eta, \mathsf{n}, \mathsf{n}_{K})$ over $(\oc, \delta, \varepsilon, \mathsf{m},\mathsf{m}_K)$. Then $\lambda$ is a coalgebra mixed distributive law of $(\mathsf{T}, \mu, \eta, \mathsf{n}, \mathsf{n}_{K})$ over $(\oc, \delta, \varepsilon, \Delta,\mathsf{e})$. 
\end{lem}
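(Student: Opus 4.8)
The plan is to avoid verifying the two diagrams of (\ref{mixeddistcoalg}) by hand and instead deduce them from the Cartesian structure of $\mathbb{X}^\oc$. Since $(\oc, \delta, \varepsilon, \Delta, \mathsf{e}, \mathsf{m}, \mathsf{m}_K)$ is a monoidal coalgebra modality, the category $(\mathbb{X}^\oc, \otimes^\mathsf{m}, (K, \mathsf{m}_K))$ is Cartesian monoidal, and on each cofree $\oc$-coalgebra the canonical comonoid structure of Lemma \ref{Cartcom} has underlying maps $\Delta_A$ and $\mathsf{e}_A$ (the re-obtaining fact $\Delta^{\delta_A}_A = \Delta_A$, $\mathsf{e}^{\delta_A}_A = \mathsf{e}_A$ noted after (\ref{!coalgcom})). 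By Proposition \ref{liftsymmix}, the symmetric monoidal mixed distributive law $\lambda$ corresponds to a lifting of $(\mathsf{T}, \mu, \eta, \mathsf{n}, \mathsf{n}_K)$ to a symmetric comonoidal monad $(\tilde{\mathsf{T}}, \tilde{\mu}, \tilde{\eta}, \tilde{\mathsf{n}}, \tilde{\mathsf{n}}_K)$ on $\mathbb{X}^\oc$ with $\mathsf{U}^\oc$ strict and $\mathsf{U}^\oc(\tilde{\mathsf{n}}) = \mathsf{n}$, $\mathsf{U}^\oc(\tilde{\mathsf{n}}_K) = \mathsf{n}_K$.

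First I would observe that for each object $A$ of $\mathbb{X}$, the component $\lambda_A$ is a $\oc$-coalgebra morphism $\tilde{\mathsf{T}}(\oc(A), \delta_A) = (\mathsf{T}\oc(A), (\delta_A)^\flat) \to (\oc\mathsf{T}(A), \delta_{\mathsf{T}(A)})$, where $(\delta_A)^\flat = \lambda_{\oc(A)} \circ \mathsf{T}(\delta_A)$ is the structure from (\ref{omegaflat}) and the target is the cofree coalgebra on $\mathsf{T}(A)$. This is exactly the left diagram of (\ref{mixeddist1}). Since $\mathbb{X}^\oc$ is Cartesian, Lemma \ref{Cartcom} then forces $\lambda_A$ to be a comonoid morphism with respect to the unique comonoid structures on its source and target.

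It therefore remains only to identify those two comonoid structures at the level of underlying maps. For the target, the unique comonoid on the cofree coalgebra $(\oc\mathsf{T}(A), \delta_{\mathsf{T}(A)})$ has underlying comultiplication $\Delta_{\mathsf{T}(A)}$ and counit $\mathsf{e}_{\mathsf{T}(A)}$, again by the re-obtaining fact. For the source, I would apply the remark following (\ref{Tcom}), but inside the Cartesian monoidal category $\mathbb{X}^\oc$ and to the symmetric comonoidal endofunctor $\tilde{\mathsf{T}}$: the comonoid $(\tilde{\mathsf{T}}(\oc(A), \delta_A), \Delta^{\tilde{\mathsf{T}}}, \mathsf{e}^{\tilde{\mathsf{T}}})$ produced by (\ref{Tcom}) must coincide with the unique Cartesian comonoid on $\tilde{\mathsf{T}}(\oc(A), \delta_A)$. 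Applying the strict functor $\mathsf{U}^\oc$ to the formulas (\ref{Tcom}), using $\mathsf{U}^\oc(\tilde{\mathsf{n}}) = \mathsf{n}$, $\mathsf{U}^\oc(\tilde{\mathsf{n}}_K) = \mathsf{n}_K$ and that the Cartesian comultiplication of $(\oc(A), \delta_A)$ has underlying map $\Delta_A$, yields the underlying maps $\mathsf{n}_{\oc(A),\oc(A)} \circ \mathsf{T}(\Delta_A) = \Delta^\mathsf{T}_A$ and $\mathsf{n}_K \circ \mathsf{T}(\mathsf{e}_A) = \mathsf{e}^\mathsf{T}_A$.

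Combining these identifications, the assertion that $\lambda_A$ is a comonoid morphism becomes, on underlying maps, precisely the equations $\Delta_{\mathsf{T}(A)} \circ \lambda_A = (\lambda_A \otimes \lambda_A) \circ \mathsf{n}_{\oc(A),\oc(A)} \circ \mathsf{T}(\Delta_A)$ and $\mathsf{e}_{\mathsf{T}(A)} \circ \lambda_A = \mathsf{n}_K \circ \mathsf{T}(\mathsf{e}_A)$ of (\ref{mixeddistcoalg}), proving that $\lambda$ is a coalgebra mixed distributive law. The step I expect to require the most care is the identification of the source comonoid structure: one must be sure that the remark after (\ref{Tcom}) is legitimately applicable inside $\mathbb{X}^\oc$ (that $\tilde{\mathsf{T}}$ really is a symmetric comonoidal endofunctor there and that $\otimes^\mathsf{m}$ is genuinely the categorical product), and that the strict forgetful functor transports the Cartesian comonoid of $(\oc(A),\delta_A)$ to $\Delta_A, \mathsf{e}_A$ rather than to some other comonoid. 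A purely computational alternative would be to expand $\Delta^\mathsf{T}_A, \mathsf{e}^\mathsf{T}_A$ via (\ref{Tcom}) and chase the two diagrams directly, using the monoidal-transformation axioms (\ref{Deltamonoidal}) for $\Delta$ and $\mathsf{e}$ together with the coherences (\ref{mixedsymco}) of $\lambda$; but the conceptual route above is considerably shorter.
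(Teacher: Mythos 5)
Your proposal is correct and follows essentially the same route as the paper's proof: both use Proposition \ref{liftsymmix} to lift the symmetric comonoidal monad to the Cartesian monoidal category $(\mathbb{X}^\oc, \otimes^\mathsf{m}, (K,\mathsf{m}_K))$, invoke the uniqueness of comonoid structures there (Lemma \ref{Cartcom}) to identify the comonoid of (\ref{Tcom}) on $\tilde{\mathsf{T}}(\oc(A),\delta_A)$ with the canonical one, and conclude from $\lambda_A$ being a $\oc$-coalgebra morphism into the cofree coalgebra $(\oc\mathsf{T}(A),\delta_{\mathsf{T}(A)})$ that it is a comonoid morphism, which is exactly (\ref{mixeddistcoalg}). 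The only cosmetic difference is that the paper first establishes the identity $(\Delta^\omega)^{\mathsf{T}}=\Delta^{\omega^\flat}$ for an arbitrary $\oc$-coalgebra $(A,\omega)$ and then specializes to the cofree case, whereas you work with the cofree coalgebras directly.
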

\begin{proof} Let $\lambda$ be a symmetric monoidal mixed distributive law of $(\mathsf{T}, \mu, \eta, \mathsf{n}, \mathsf{n}_{K})$ over \\ \noindent $(\oc, \delta, \varepsilon, \mathsf{m},\mathsf{m}_K)$. By Proposition \ref{liftsymmix}, this induces a symmetric comonoidal monad \\ \noindent $(\tilde{\mathsf{T}}, \tilde{\mu}, \tilde{\eta}, \tilde{\mathsf{n}}, \tilde{\mathsf{n}}_{(K,\mathsf{m}_K)})$ on the Eilenberg-Moore category $(\mathbb{X}^\oc, \otimes^\mathsf{m}, (K, \mathsf{m}_K))$, which in particular maps cocommutative comonoids to cocommutative comonoids. Since for every $\oc$-coalgebra $(A, \omega)$, $((A, \omega), \Delta^\omega, \mathsf{e}^\omega)$ is a cocommutative comonoid in $(\mathbb{X}^\oc, \otimes^\mathsf{m}, (K, \mathsf{m}_K))$, then so is $\left((\mathsf{T}(A), \omega^\flat), (\Delta^\omega)^\mathsf{T},  (\mathsf{e}^\omega)^\mathsf{T} \right)$, where $\omega^\flat$ is defined as in  (\ref{omegaflat}), and $(\Delta^\omega)^\mathsf{T}$ and $(\mathsf{e}^\omega)^\mathsf{T}$ are defined as in (\ref{Tcom}). However $(\mathbb{X}^\oc, \otimes^\mathsf{m}, (K, \mathsf{m}_K))$ is a Cartesian monoidal category and therefore by Lemma \ref{Cartcom} we have:
\[(\Delta^\omega)^\mathsf{T} = \Delta^{\omega^\flat} \quad \quad (\mathsf{e}^\omega)^\mathsf{T}= \mathsf{e}^{\omega^\flat}\]
Applying (\ref{omegaflat}) to a cofree $\oc$-coalgebra $(\oc(A), \delta_A)$, we obtain the $\oc$-coalgebra $(\mathsf{T}\oc(A), \delta_A^\flat)$ and that $\lambda_A: (\mathsf{T}\oc(A), \delta_A^\flat) \to (\oc\mathsf{T}(A), \delta_{\mathsf{T}(A)})$ is a $\oc$-coalgebra morphism. Since $(\oc, \delta, \varepsilon, \Delta, \mathsf{e})$ is a coalgebra modality we also have that $\lambda_A: (\mathsf{T}\oc(A), \Delta^{\delta_A^\flat}, \mathsf{e}^{\delta_A^\flat}) \to (\oc\mathsf{T}(A), \Delta_{\mathsf{T}(A)}, \mathsf{e}_{\mathsf{T}(A)})$ is a comonoid morphism (since recall that $\Delta^{\delta_A}= \Delta_A$ and $\mathsf{e}^{\delta_A}=\mathsf{e}_A$). However applying the above identity to to a cofree $\oc$-coalgebra $(\oc(A), \delta_A)$ we get that $\Delta_A^\mathsf{T} = \Delta^{\delta_A^\flat}$ and that $\mathsf{e}_A^\mathsf{T} = \mathsf{e}^{\delta_A^\flat}$. Therefore we conclude that $\lambda_A: (\mathsf{T}\oc(A), \Delta_A^\mathsf{T}, \mathsf{e}_A^\mathsf{T}) \to (\oc\mathsf{T}(A), \Delta_{\mathsf{T}(A)}, \mathsf{e}_{\mathsf{T}(A)})$ is a comonoid morphism and that $\lambda$ is a coalgebra mixed distributive law. 
\end{proof} 

Gathering all of our results together, we can show that symmetric monoidal mixed distributive laws are precisely what is needed to lift monoidal coalgebra modalities. 

\begin{prop}\label{liftmoncoalg}  Let $(\mathsf{T}, \mu, \eta, \mathsf{n}, \mathsf{n}_{K})$ be a symmetric comonoidal monad and \\ \noindent $(\oc, \delta, \varepsilon, \Delta, \mathsf{e}, \mathsf{m},\mathsf{m}_K)$ be a monoidal coalgebra modality on the same symmetric monoidal category $(\mathbb{X}, \otimes, K)$. Then the following are in bijective correspondence:
\begin{enumerate}
\item Symmetric monoidal mixed distributive laws of $(\mathsf{T}, \mu, \eta, \mathsf{n}, \mathsf{n}_{K})$ over $(\oc, \delta, \varepsilon, \mathsf{m},\mathsf{m}_K)$;
\item Liftings of the monoidal coalgebra modality $(\oc, \delta, \varepsilon, \Delta, \mathsf{e}, \mathsf{m},\mathsf{m}_K)$ to $\mathbb{X}^\mathsf{T}$, that is, a monoidal coalgebra modality $(\tilde{\oc}, \tilde{\delta_A}, \tilde{\varepsilon}, \tilde{\Delta}, \tilde{\mathsf{e}},  \tilde{\mathsf{m}}, \tilde{\mathsf{m}}_{(K, \mathsf{n}_K)})$ on $(\mathbb{X}^\mathsf{T}, \otimes^{\mathsf{n}}, (K,\mathsf{n}_K))$ which is a lifting of the underlying coalgebra modality $(\oc, \delta, \varepsilon, \Delta, \mathsf{e})$ to $(\mathbb{X}^\mathsf{T}, \otimes^{\mathsf{n}}, (K,\mathsf{n}_K))$ (in the sense of Proposition \ref{liftingcoalgmod}) and a lifting of the underlying symmetric monoidal comonad $(\oc, \delta, \varepsilon, \mathsf{m},\mathsf{m}_K)$ to $(\mathbb{X}^\mathsf{T}, \otimes^{\mathsf{n}}, (K,\mathsf{n}_K))$ (in the sense of Proposition \ref{liftsymmix});
\item Liftings of the symmetric comonoidal monad $(\mathsf{T}, \mu, \eta, \mathsf{n}, \mathsf{n}_{K})$ to $(\mathbb{X}^\oc, \otimes^\mathsf{m}, (K, \mathsf{m}_K))$ (in the sense of Proposition \ref{liftsymmix}). 
\end{enumerate}
\end{prop}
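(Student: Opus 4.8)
The plan is to assemble this three-way correspondence directly from the three preceding results, using only the fact that the forgetful functor $\mathsf{U}^\mathsf{T}: \mathbb{X}^\mathsf{T} \to \mathbb{X}$ is faithful and strictly symmetric monoidal. The equivalence between (1) and (3) is already exactly the content of Proposition \ref{liftsymmix}, namely the correspondence between symmetric monoidal mixed distributive laws and liftings of the symmetric comonoidal monad to $(\mathbb{X}^\oc, \otimes^\mathsf{m}, (K, \mathsf{m}_K))$, so nothing new is needed there. Hence the real work lies in establishing the equivalence between (1) and (2).

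First I would go from (1) to (2). Starting with a symmetric monoidal mixed distributive law $\lambda$, Proposition \ref{liftsymmix} produces a lifting $(\tilde{\oc}, \tilde{\delta}, \tilde{\varepsilon}, \tilde{\mathsf{m}}, \tilde{\mathsf{m}}_{(K,\mathsf{n}_K)})$ of the symmetric monoidal comonad to $\mathbb{X}^\mathsf{T}$. Simultaneously, Lemma \ref{biglemma} tells us that $\lambda$ is automatically a coalgebra mixed distributive law, so Proposition \ref{liftingcoalgmod} produces a lifting $(\tilde{\oc}, \tilde{\delta}, \tilde{\varepsilon}, \tilde{\Delta}, \tilde{\mathsf{e}})$ of the coalgebra modality, with $\mathsf{U}^\mathsf{T}(\tilde{\Delta}_{(A,\nu)}) = \Delta_A$ and $\mathsf{U}^\mathsf{T}(\tilde{\mathsf{e}}_{(A,\nu)}) = \mathsf{e}_A$. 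Both liftings share the same underlying comonad lifting $(\tilde{\oc}, \tilde{\delta}, \tilde{\varepsilon})$, so together they equip $\tilde{\oc}$ with all seven pieces of data of a candidate monoidal coalgebra modality on $(\mathbb{X}^\mathsf{T}, \otimes^\mathsf{n}, (K, \mathsf{n}_K))$.

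What remains is to verify the extra coherences of Definition \ref{moncoalgmod}, namely that $\tilde{\Delta}$ and $\tilde{\mathsf{e}}$ are monoidal transformations (the diagrams (\ref{Deltamonoidal})) and $\tilde{\oc}$-coalgebra morphisms (the diagrams (\ref{Delta!map})). Here I would exploit faithfulness: each such diagram is an equation between $\mathsf{T}$-algebra morphisms, and applying $\mathsf{U}^\mathsf{T}$ sends it to the corresponding diagram in $\mathbb{X}$. Since $\mathsf{U}^\mathsf{T}$ is strictly symmetric monoidal and sends $\tilde{\Delta}, \tilde{\mathsf{e}}, \tilde{\mathsf{m}}, \tilde{\mathsf{m}}_{(K,\mathsf{n}_K)}, \tilde{\delta}$ to $\Delta, \mathsf{e}, \mathsf{m}, \mathsf{m}_K, \delta$ respectively, each lifted coherence maps precisely onto the corresponding coherence of the monoidal coalgebra modality $(\oc, \delta, \varepsilon, \Delta, \mathsf{e}, \mathsf{m}, \mathsf{m}_K)$, which holds by hypothesis in $\mathbb{X}$. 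Faithfulness then forces commutativity upstairs, so $(\tilde{\oc}, \tilde{\delta}, \tilde{\varepsilon}, \tilde{\Delta}, \tilde{\mathsf{e}}, \tilde{\mathsf{m}}, \tilde{\mathsf{m}}_{(K,\mathsf{n}_K)})$ is genuinely a monoidal coalgebra modality and hence a lifting in the sense of (2).

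For the reverse direction, a lifting of the monoidal coalgebra modality restricts to a lifting of its underlying symmetric monoidal comonad, which Proposition \ref{liftsymmix} turns back into a symmetric monoidal mixed distributive law; and the two constructions are mutually inverse because each is built componentwise from the already-bijective correspondences of Propositions \ref{liftsymmix} and \ref{liftingcoalgmod}. I expect the only genuinely delicate point to be the observation that nothing needs checking beyond what Lemma \ref{biglemma} already supplies: the compatibility between the comultiplication and the comonad's monoidal structure lifts for free precisely because $\mathsf{U}^\mathsf{T}$ reflects commuting diagrams. Thus the main obstacle, ensuring that the coalgebra modality lifting and the symmetric monoidal comonad lifting are mutually coherent on the nose, has in effect been discharged in advance by Lemma \ref{biglemma}.
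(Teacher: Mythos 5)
Your proposal is correct and follows essentially the same route as the paper: $(1)\Leftrightarrow(3)$ is Proposition \ref{liftsymmix}, and $(1)\Leftrightarrow(2)$ is assembled from Proposition \ref{liftsymmix}, Lemma \ref{biglemma}, and Proposition \ref{liftingcoalgmod}, with the bijectivity inherited from those results. The only difference is that you explicitly spell out, via faithfulness of $\mathsf{U}^\mathsf{T}$, why the lifted data satisfies the coherences (\ref{Deltamonoidal}) and (\ref{Delta!map}) upstairs — a step the paper leaves implicit — and that argument is sound.
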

\begin{proof} That $(1) \Leftrightarrow (3)$ is precisely part of Proposition \ref{liftsymmix}. While that $(1) \Leftrightarrow (2)$ follows from Proposition \ref{liftsymmix}, Lemma \ref{biglemma}, and Proposition \ref{liftingcoalgmod}. The bijective correspondence also follows from Proposition \ref{liftsymmix} and Proposition \ref{liftingcoalgmod}. 
\end{proof} 

We give a name to symmetric comonoidal monads with symmetric monoidal mixed distributive laws over monoidal coalgebra modalities: 

\begin{defi}\label{expliftdef} Let $(\oc, \delta, \varepsilon, \Delta, \mathsf{e}, \mathsf{m},\mathsf{m}_K)$ be a monoidal coalgebra modality on a symmetric monoidal category $(\mathbb{X}, \otimes, K)$. An \textbf{exponential lifting monad} of $(\oc, \delta, \varepsilon, \Delta, \mathsf{e}, \mathsf{m},\mathsf{m}_K)$ is a sextuple \\
\noindent $(\mathsf{T}, \mu, \eta, \mathsf{n}, \mathsf{n}_{K}, \lambda)$ consisting of a symmetric comonoidal monad $(\mathsf{T}, \mu, \eta, \mathsf{n}, \mathsf{n}_{K})$ on $(\mathbb{X}, \otimes, K)$ and a symmetric monoidal mixed distributive law $\lambda$ of $(\mathsf{T}, \mu, \eta, \mathsf{n}, \mathsf{n}_{K})$ over $(\oc, \delta, \varepsilon, \Delta, \mathsf{e}, \mathsf{m},\mathsf{m}_K)$. 
\end{defi}

Proposition \ref{liftmoncoalg} implies that the Eilenberg-Moore category of an exponential lifting monad admits a monoidal coalgebra modality which is strictly preserved by the forgetful functor. We can also extend on Proposition \ref{symmonresult} for monoidal coalgebra modalities. 

\begin{prop} \label{bigresultprop} Let $(\oc, \delta, \varepsilon, \Delta, \mathsf{e}, \mathsf{m},\mathsf{m}_K)$ be a monoidal coalgebra modality on a symmetric monoidal category $(\mathbb{X}, \otimes, K)$. Then the following are in bijective correspondence:
\begin{enumerate}
\item Monoids in $(\mathbb{X}^\oc, \otimes^\mathsf{m}, (K, \mathsf{m}_K))$;
\item Cocommutative bimonoids $(A, \nabla, \mathsf{u}, \Delta, \mathsf{e})$ in $(\mathbb{X}, \otimes, K)$ equipped with a natural transformation $\lambda_X: A \otimes \oc(X) \to \oc(A \otimes X)$ such that $(A \otimes -, \mu^\nabla, \eta^\mathsf{u}, \mathsf{n}^\Delta, \mathsf{n}^\mathsf{e}_K, \lambda)$ is an exponential lifting monad (where the symmetric comonoidal monad is defined as in Example \ref{bimonadex}) of $(\oc, \delta, \varepsilon, \Delta, \mathsf{e}, \mathsf{m},\mathsf{m}_K)$ and $\lambda$ satisfies (\ref{nablastrong}). 
\end{enumerate}
Therefore, each monoid in the Eilenberg-Moore category of a monoidal coalgebra modality induces an exponential lifting monad. 
\end{prop}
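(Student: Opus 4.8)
The plan is to reduce the statement to Proposition~\ref{symmonresult} by exploiting that the Eilenberg--Moore category of a monoidal coalgebra modality is Cartesian monoidal. Recall from the discussion following Definition~\ref{moncoalgmod} that, since $(\oc, \delta, \varepsilon, \Delta, \mathsf{e}, \mathsf{m},\mathsf{m}_K)$ is a monoidal coalgebra modality, $(\mathbb{X}^\oc, \otimes^\mathsf{m}, (K, \mathsf{m}_K))$ is a Cartesian monoidal category. The first step is therefore to observe that in any Cartesian monoidal category monoids and cocommutative bimonoids carry the same information: by Lemma~\ref{Cartbim} every monoid $(A,\nabla,\mathsf{u})$ extends to the cocommutative bimonoid $(A,\nabla,\mathsf{u},\Delta_A,\mathsf{t}_A)$ built from the canonical comonoid structure of Lemma~\ref{Cartcom}, and conversely, given any cocommutative bimonoid, the uniqueness clause of Lemma~\ref{Cartcom} forces its underlying comonoid part to coincide with $(\Delta_A,\mathsf{t}_A)$. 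These two assignments are mutually inverse, so monoids in $(\mathbb{X}^\oc, \otimes^\mathsf{m}, (K, \mathsf{m}_K))$ are in bijection with cocommutative bimonoids in $(\mathbb{X}^\oc, \otimes^\mathsf{m}, (K, \mathsf{m}_K))$.

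Next I would invoke Proposition~\ref{symmonresult}, which already provides a bijection between cocommutative bimonoids in $(\mathbb{X}^\oc, \otimes^\mathsf{m}, (K, \mathsf{m}_K))$ and cocommutative bimonoids $(A,\nabla,\mathsf{u},\Delta,\mathsf{e})$ in $(\mathbb{X},\otimes,K)$ equipped with a natural transformation $\lambda_X \colon A \otimes \oc(X) \to \oc(A \otimes X)$ that is a symmetric monoidal mixed distributive law of $(A\otimes -, \mu^\nabla, \eta^\mathsf{u}, \mathsf{n}^\Delta, \mathsf{n}^\mathsf{e}_K)$ over the underlying symmetric monoidal comonad $(\oc,\delta,\varepsilon,\mathsf{m},\mathsf{m}_K)$ and satisfies~(\ref{nablastrong}). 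The only remaining point is a translation of terminology: unwinding Definition~\ref{expliftdef} together with Definition~\ref{mixedmondef}, the requirement that $(A\otimes -, \mu^\nabla, \eta^\mathsf{u}, \mathsf{n}^\Delta, \mathsf{n}^\mathsf{e}_K, \lambda)$ be an exponential lifting monad of $(\oc, \delta, \varepsilon, \Delta, \mathsf{e}, \mathsf{m},\mathsf{m}_K)$ is exactly the requirement that $\lambda$ be such a symmetric monoidal mixed distributive law over the underlying symmetric monoidal comonad. Hence statement~(2) of this proposition is precisely statement~(2) of Proposition~\ref{symmonresult}, and composing the two bijections yields the desired correspondence.

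Since the argument is almost entirely a repackaging of earlier results, there is no substantial computational obstacle; the one point that genuinely requires care is verifying that the monoid-to-bimonoid and bimonoid-to-monoid assignments in the Cartesian category $\mathbb{X}^\oc$ are mutually inverse, which rests squarely on the uniqueness of the cocommutative comonoid structure in Lemma~\ref{Cartcom}. Finally, the concluding assertion that every monoid in the Eilenberg--Moore category induces an exponential lifting monad is obtained by reading the composite bijection in the direction from~(1) to~(2).
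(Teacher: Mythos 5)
Your proposal is correct and follows essentially the same route as the paper: both reduce to Proposition~\ref{symmonresult} by observing that $(\mathbb{X}^\oc, \otimes^\mathsf{m}, (K, \mathsf{m}_K))$ is Cartesian monoidal, so that by Lemma~\ref{Cartbim} (and the uniqueness in Lemma~\ref{Cartcom}) monoids there are the same data as cocommutative bimonoids. Your version is slightly more explicit about checking that the two assignments are mutually inverse and about unwinding Definition~\ref{expliftdef}, but there is no substantive difference.
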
 
\begin{proof} At first glance, it may be somewhat surprising that one only requires a monoid to obtain an exponential lifting monad while a cocommutative bimonoid was needed in Proposition \ref{symmonresult} to construct a symmetric monoidal mixed distributive law. However since the Eilenberg-Moore category of a monoidal coalgebra modality is a Cartesian monoidal category it follows that every monoid is in fact a cocommutative bimonoid (Lemma \ref{Cartbim}). Therefore the bijective correspondence follows immediately from Proposition \ref{symmonresult}. 
\end{proof} 

\subsection{Lifting Linear Category Structure}

The last piece of the puzzle for lifting the linear category structure is being able to lift the symmetric monoidal closed structure strictly. This is easily achieved by required the exponential lifting monads to be symmetric Hopf monads.   

\begin{defi}\label{mellliftdef} A \textbf{$\mathsf{MELL}$ lifting monad} on a linear category is an exponential lifting monad (with respect to the linear category's monoidal coalgebra modality) whose underlying symmetric comonoidal monad is also a symmetric Hopf monad. 
\end{defi}

It is worth mentioning that in the definitions of an exponential lifting monad and of a $\mathsf{MELL}$ lifting monad, we do not require that the underlying endofunctors of these monads be linearly distributive functors between linear categories in the sense of \cite{hyland2003glueing, mellies2004comparing}.

$\mathsf{MELL}$ lifting monads provide us with following main result of this paper: 

\begin{thm}\label{mellthm} The Eilenberg-Moore category of a $\mathsf{MELL}$ lifting monad is a linear category such that the forgetful functor preserves the linear category structure strictly. 
\end{thm}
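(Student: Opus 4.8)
The plan is to assemble the statement from two facts already established in the excerpt, exploiting the observation that a $\mathsf{MELL}$ lifting monad (Definition~\ref{mellliftdef}) carries precisely the two ingredients needed, and that a linear category (Definition~\ref{moncoalgmod}) requires nothing more than a symmetric monoidal closed category equipped with a monoidal coalgebra modality, with no coherence linking the two. So I fix a $\mathsf{MELL}$ lifting monad $(\mathsf{T}, \mu, \eta, \mathsf{n}, \mathsf{n}_{K}, \lambda)$ on a linear category $(\mathbb{X}, \otimes, K)$ with monoidal coalgebra modality $(\oc, \delta, \varepsilon, \Delta, \mathsf{e}, \mathsf{m},\mathsf{m}_K)$. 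By Definition~\ref{mellliftdef} this is an exponential lifting monad (Definition~\ref{expliftdef}) whose underlying symmetric comonoidal monad is a symmetric Hopf monad (Definition~\ref{Hopfmonaddef}). Throughout, the relevant symmetric monoidal structure on the Eilenberg-Moore category is $(\mathbb{X}^\mathsf{T}, \otimes^{\mathsf{n}}, (K,\mathsf{n}_K))$ induced by the symmetric comonoidal monad, and it will be essential that both structures I lift are erected on this same base.

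First I would lift the closed structure: since $\mathbb{X}$ is symmetric monoidal closed and $\mathsf{T}$ is a symmetric Hopf monad, I invoke the extension of \cite[Theorem~3.6]{Bruguieres2011hopf} recorded immediately after Definition~\ref{Hopfmonaddef}, which states that $(\mathbb{X}^\mathsf{T}, \otimes^{\mathsf{n}}, (K,\mathsf{n}_K))$ is symmetric monoidal closed with internal hom $(A,\nu) \multimap (B,\nu^\prime) := (A \multimap B, \overline{\gamma}_{\nu,\nu^\prime})$ as displayed there, and that $\mathsf{U}^\mathsf{T}$ preserves this structure strictly. Next I would lift the modality: since $\lambda$ is a symmetric monoidal mixed distributive law of $(\mathsf{T}, \mu, \eta, \mathsf{n}, \mathsf{n}_{K})$ over the monoidal coalgebra modality, I apply the implication $(1) \Rightarrow (2)$ of Proposition~\ref{liftmoncoalg} to obtain a lifted monoidal coalgebra modality $(\tilde{\oc}, \tilde{\delta}, \tilde{\varepsilon}, \tilde{\Delta}, \tilde{\mathsf{e}}, \tilde{\mathsf{m}}, \tilde{\mathsf{m}}_{(K,\mathsf{n}_K)})$ on $(\mathbb{X}^\mathsf{T}, \otimes^{\mathsf{n}}, (K,\mathsf{n}_K))$, again strictly preserved by $\mathsf{U}^\mathsf{T}$.

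It then remains to assemble. Both a symmetric monoidal closed structure and a monoidal coalgebra modality now reside on the single symmetric monoidal category $(\mathbb{X}^\mathsf{T}, \otimes^{\mathsf{n}}, (K,\mathsf{n}_K))$; by Definition~\ref{moncoalgmod} this is exactly what it means to be a linear category. For the strictness clause, I note that $\mathsf{U}^\mathsf{T}$ preserves the symmetric monoidal closed structure strictly (first step), the monoidal coalgebra modality strictly (second step), and the common underlying symmetric monoidal structure $\otimes^{\mathsf{n}}$ strictly (by construction of $\mathbb{X}^\mathsf{T}$ as the Eilenberg-Moore category of a symmetric comonoidal monad), so it preserves the whole linear category structure strictly.

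The argument is genuinely just this assembly, the substantive work having been done in Proposition~\ref{liftmoncoalg} and in the cited Hopf monad theorem. The only point that requires care — and the closest thing to an obstacle — is verifying that the two liftings are compatible, that is, that they are built on the same underlying symmetric monoidal category rather than on two a priori distinct ones. This holds because both the closed structure arising from the Hopf monad and the lifted modality arising from $\lambda$ are defined relative to the symmetric comonoidal monad's tensor $\otimes^{\mathsf{n}}$ and unit $(K,\mathsf{n}_K)$; beyond this there is no further coherence between closedness and the modality to check, since the definition of a linear category imposes none.
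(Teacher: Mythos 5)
Your proposal is correct and takes essentially the same approach as the paper, which states Theorem~\ref{mellthm} without a written proof precisely because it is the assembly you describe: the symmetric Hopf monad hypothesis lifts the closed structure via the extension of \cite[Theorem~3.6]{Bruguieres2011hopf}, Proposition~\ref{liftmoncoalg} lifts the monoidal coalgebra modality, and both liftings live on the same $(\mathbb{X}^\mathsf{T}, \otimes^{\mathsf{n}}, (K,\mathsf{n}_K))$ with the forgetful functor preserving each strictly. Your remark that Definition~\ref{moncoalgmod} imposes no further coherence between the closed structure and the modality is exactly the observation that makes the assembly immediate.
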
 

Recalling that cocommutative Hopf monoids in a Cartesian monoidal category are known as groups (Definition \ref{groupdef}), then using Proposition \ref{symmonresult2} one can easily extend Proposition \ref{bigresultprop} to the context of linear categories. 

\begin{thm} \label{bigresultthm} Let $(\oc, \delta, \varepsilon, \Delta, \mathsf{e}, \mathsf{m},\mathsf{m}_K)$ be a monoidal coalgebra modality on a symmetric monoidal closed category $(\mathbb{X}, \otimes, K)$, that is, $(\mathbb{X}, \otimes, K)$ is a linear category. Then the following are in bijective correspondence:
\begin{enumerate}
\item Groups in $(\mathbb{X}^\oc, \otimes^\mathsf{m}, (K, \mathsf{m}_K))$;
\item Cocommutative Hopf monoids $(H, \nabla, \mathsf{u}, \Delta, \mathsf{e}, \mathsf{S})$ in $(\mathbb{X}, \otimes, K)$ equipped with a natural transformation $\lambda_X: H \otimes \oc(X) \to \oc(H \otimes X)$ such that $(H \otimes -, \mu^\nabla, \eta^\mathsf{u}, \mathsf{n}^\Delta, \mathsf{n}^\mathsf{e}_K, \lambda)$ is a $\mathsf{MELL}$ lifting monad (where the symmetric Hopf monad is defined as in Example \ref{hopfmonadex}), $\lambda$ satisfies (\ref{nablastrong}) and (\ref{extraS}). 
\end{enumerate}
Therefore, for each group $((H, \omega), \nabla, \mathsf{u}, \mathsf{S})$ in $(\mathbb{X}^\oc, \otimes^{\mathsf{m}}, (K,\mathsf{m}_K))$, $\mathsf{MOD}(H, \nabla, \mathsf{u})$ is a linear category. 
\end{thm}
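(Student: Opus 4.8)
The plan is to assemble the statement from the machinery already developed, handling the bijective correspondence and the concluding claim separately.

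First I would establish the correspondence between (1) and (2). The key observation is that $(\mathbb{X}^\oc, \otimes^\mathsf{m}, (K, \mathsf{m}_K))$ is a Cartesian monoidal category (as noted in Section \ref{moncoalgsec}), so by Lemma \ref{Cartcom} every object of $\mathbb{X}^\oc$ carries a \emph{unique} cocommutative comonoid structure. Consequently, a group $((H,\omega), \nabla, \mathsf{u}, \mathsf{S})$ in $\mathbb{X}^\oc$ in the sense of Definition \ref{groupdef} is exactly a cocommutative Hopf monoid in $\mathbb{X}^\oc$, whose comonoid part is forced to be the canonical structure $(\Delta^\omega, \mathsf{e}^\omega)$ of (\ref{!coalgcom}). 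Applying Proposition \ref{symmonresult2}, these are in bijective correspondence with cocommutative Hopf monoids $(H, \nabla, \mathsf{u}, \Delta, \mathsf{e}, \mathsf{S})$ in $(\mathbb{X}, \otimes, K)$ equipped with a symmetric monoidal mixed distributive law $\lambda$ of $(H \otimes -, \mu^\nabla, \eta^\mathsf{u}, \mathsf{n}^\Delta, \mathsf{n}^\mathsf{e}_K)$ satisfying both (\ref{nablastrong}) and (\ref{extraS}).

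Next I would verify that this data coincides with the data in (2), i.e. that the clause ``$\lambda$ is a symmetric monoidal mixed distributive law'' can be upgraded to ``$(H \otimes -, \mu^\nabla, \eta^\mathsf{u}, \mathsf{n}^\Delta, \mathsf{n}^\mathsf{e}_K, \lambda)$ is a $\mathsf{MELL}$ lifting monad''. In one direction, since $(H, \nabla, \mathsf{u}, \Delta, \mathsf{e})$ is a cocommutative bimonoid, Example \ref{bimonadex} makes $(H \otimes -, \mu^\nabla, \eta^\mathsf{u}, \mathsf{n}^\Delta, \mathsf{n}^\mathsf{e}_K)$ a symmetric comonoidal monad, and together with $\lambda$ (a symmetric monoidal mixed distributive law satisfying (\ref{nablastrong})) this is an exponential lifting monad by Definition \ref{expliftdef} and Proposition \ref{bigresultprop}. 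Moreover, as $(H, \nabla, \mathsf{u}, \Delta, \mathsf{e}, \mathsf{S})$ is a cocommutative Hopf monoid, Example \ref{hopfmonadex} shows the underlying symmetric comonoidal monad $H \otimes -$ is a symmetric Hopf monad; hence by Definition \ref{mellliftdef} we have a $\mathsf{MELL}$ lifting monad. The converse is immediate, since a $\mathsf{MELL}$ lifting monad is by definition an exponential lifting monad, whose $\lambda$ is a symmetric monoidal mixed distributive law. This yields the bijective correspondence between (1) and (2).

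Finally, for the concluding statement, given a group $((H, \omega), \nabla, \mathsf{u}, \mathsf{S})$ in $\mathbb{X}^\oc$, the correspondence just established produces a $\mathsf{MELL}$ lifting monad $(H \otimes -, \mu^\nabla, \eta^\mathsf{u}, \mathsf{n}^\Delta, \mathsf{n}^\mathsf{e}_K, \lambda)$ on the linear category $(\mathbb{X}, \otimes, K)$. By Theorem \ref{mellthm} its Eilenberg-Moore category is a linear category, and by Example \ref{monoidmonad} that Eilenberg-Moore category is precisely $\mathsf{MOD}(H, \nabla, \mathsf{u})$, so $\mathsf{MOD}(H, \nabla, \mathsf{u})$ is a linear category. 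I expect the only real subtlety to lie in the matching of conditions in the previous paragraph: confirming that ``$\mathsf{MELL}$ lifting monad'' contributes nothing beyond what Proposition \ref{symmonresult2} already encodes, so that condition (\ref{extraS}) and the symmetric Hopf monad condition of Example \ref{hopfmonadex} are genuinely the same extra data on the antipode. All the heavy diagram chasing has already been discharged in Propositions \ref{symmonresult}, \ref{symmonresult2}, and \ref{liftsymmix} together with Theorem \ref{mellthm}, so no further computation is required.
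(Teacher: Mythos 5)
Your proposal is correct and follows essentially the same route the paper intends: the paper gives no separate proof but derives the theorem by identifying groups in the Cartesian monoidal category $(\mathbb{X}^\oc, \otimes^\mathsf{m}, (K,\mathsf{m}_K))$ with cocommutative Hopf monoids there, invoking Proposition \ref{symmonresult2} for the bijection, Example \ref{hopfmonadex} for the symmetric Hopf monad structure, and Theorem \ref{mellthm} for the concluding claim. Your explicit check that the ``$\mathsf{MELL}$ lifting monad'' clause adds nothing beyond what Proposition \ref{symmonresult2} already encodes is exactly the right point to verify, and it holds.
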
 

Finally recall that the Eilenberg-Moore category of a free exponential modality is equivalent to the category of cocommutative comonoids on the base category, it follows that groups in this Eilenberg-Moore category are in bijective correspondence to cocommutative Hopf monoids in the base category. Therefore we can conclude the following for Lafont categories: 

\begin{cor}\label{Lafontcor} If $(H, \nabla, \mathsf{u}, \Delta, \mathsf{e}, \mathsf{S})$ is a cocommutative Hopf monoid in a Lafont category, then $\mathsf{MOD}(H, \nabla, \mathsf{u})$ is a Lafont category. 
\end{cor}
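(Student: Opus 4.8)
The plan is to first invoke Theorem~\ref{bigresultthm} to obtain the linear category structure and then upgrade it to a Lafont structure by checking directly that the lifted monoidal coalgebra modality is a free exponential modality. Since $\mathbb{X}$ is a Lafont category, its Eilenberg--Moore category $\mathbb{X}^\oc$ is equivalent to the category of cocommutative comonoids of $\mathbb{X}$, and under this equivalence groups in $(\mathbb{X}^\oc, \otimes^\mathsf{m}, (K, \mathsf{m}_K))$ correspond to cocommutative Hopf monoids in $(\mathbb{X}, \otimes, K)$. Hence the given $(H, \nabla, \mathsf{u}, \Delta, \mathsf{e}, \mathsf{S})$ determines a group $((H, \omega), \nabla, \mathsf{u}, \mathsf{S})$ in $\mathbb{X}^\oc$, so Theorem~\ref{bigresultthm} shows that $\mathsf{MOD}(H, \nabla, \mathsf{u})$ is a linear category whose monoidal coalgebra modality $\tilde\oc$ is the lifting of $\oc$ along the induced symmetric monoidal mixed distributive law $\lambda$, strictly preserved by the forgetful functor.

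It then remains to verify the universal property of Definition~\ref{freeexpo} in $\mathsf{MOD}(H, \nabla, \mathsf{u})$. Fix an $H$-module $(M, \nu)$. By Proposition~\ref{liftmoncoalg} we have $\tilde\oc(M, \nu) = (\oc(M), \nu^\sharp)$ with $\nu^\sharp$ as in~(\ref{nusharp}), the underlying comonoid of $\tilde\oc(M, \nu)$ is the cofree cocommutative comonoid $(\oc(M), \Delta_M, \mathsf{e}_M)$ of $\mathbb{X}$, and the counit $\tilde\varepsilon_{(M,\nu)}$ has underlying map $\varepsilon_M$. Given any cocommutative comonoid with carrier $(C, \nu_C)$ and structure maps $\Delta_C, \mathsf{e}_C$ in $\mathsf{MOD}(H, \nabla, \mathsf{u})$, together with a module morphism $f \colon (C, \nu_C) \to (M, \nu)$, the underlying object $C$ is a cocommutative comonoid in $\mathbb{X}$, so the free exponential modality of $\mathbb{X}$ supplies a \emph{unique} comonoid morphism $\hat{f} \colon C \to \oc(M)$ in $\mathbb{X}$ with $\varepsilon_M \circ \hat{f} = f$.

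The crux is to show that $\hat{f}$ is automatically an $H$-module morphism. The key observation is that the action map of any comonoid object of $\mathsf{MOD}(H, \nabla, \mathsf{u})$ is a comonoid morphism in $\mathbb{X}$: unwinding $\otimes^{\mathsf{n}^\Delta}$ and the maps $\mathsf{n}^\Delta$, $\mathsf{n}^\mathsf{e}_K$ from Example~\ref{bimonadex}, the requirements that $\Delta_C$ and $\mathsf{e}_C$ be module morphisms say exactly that $\nu_C \colon H \otimes C \to C$ preserves the tensor-product comonoid structure on $H \otimes C$, and likewise $\nu^\sharp \colon H \otimes \oc(M) \to \oc(M)$ does, since $\tilde\oc(M,\nu)$ is a comonoid object. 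Hence both $\hat{f} \circ \nu_C$ and $\nu^\sharp \circ (1_H \otimes \hat{f})$ are comonoid morphisms $H \otimes C \to \oc(M)$ in $\mathbb{X}$, with $H \otimes C$ cocommutative. Post-composing each with $\varepsilon_M$ gives $f \circ \nu_C$ in both cases: for the first because $\varepsilon_M \circ \hat{f} = f$, and for the second because $\varepsilon_M$ is a module morphism and $f$ is. By the uniqueness in the cofree property of $\oc(M)$ in $\mathbb{X}$, the two composites agree, which is precisely the equation asserting that $\hat{f}$ is an $H$-module morphism.

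Being a module morphism whose underlying map is a comonoid morphism, $\hat{f}$ is a comonoid morphism in $\mathsf{MOD}(H, \nabla, \mathsf{u})$ lifting $f$, and it is unique with this property because the forgetful functor is faithful and strict monoidal. Thus $\tilde\oc(M, \nu)$ is the cofree cocommutative comonoid over $(M, \nu)$ in $\mathsf{MOD}(H, \nabla, \mathsf{u})$, so $\tilde\oc$ is a free exponential modality and $\mathsf{MOD}(H, \nabla, \mathsf{u})$ is a Lafont category. The main obstacle is the third paragraph: the lift $\hat{f}$ produced in the base category has no a priori reason to respect the $H$-action, and the whole argument turns on re-expressing the two action-compatibility composites as comonoid morphisms and appealing to the uniqueness built into the cofree comonoid.
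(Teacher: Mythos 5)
Your proposal is correct. For the first half it follows the paper's own route exactly: the paper observes that, since the Eilenberg--Moore category of a free exponential modality is equivalent to the category of cocommutative comonoids, groups in $\mathbb{X}^{\oc}$ are in bijection with cocommutative Hopf monoids in $\mathbb{X}$, and then invokes Theorem \ref{bigresultthm} to get the linear category structure on $\mathsf{MOD}(H, \nabla, \mathsf{u})$. Where you go beyond the paper is the second half: the paper states the Lafont conclusion without spelling out why the lifted modality $\tilde{\oc}$ is again a \emph{free} exponential modality, whereas you verify the couniversal property of Definition \ref{freeexpo} directly. Your key step --- unwinding $\otimes^{\mathsf{n}^\Delta}$ to see that the action maps $\nu_C$ and $\nu^\sharp$ are comonoid morphisms out of the tensor-product comonoids $H \otimes C$ and $H \otimes \oc(M)$, and then forcing $\hat{f} \circ \nu_C = \nu^\sharp \circ (1_H \otimes \hat{f})$ by the uniqueness clause of the cofree comonoid applied to the cocommutative comonoid $H \otimes C$ over the map $f \circ \nu_C$ --- is sound (the identification $\varepsilon_M \circ \nu^\sharp = \nu \circ (1_H \otimes \varepsilon_M)$ holds because $\tilde{\varepsilon}$ is strictly lifted, and $f$ being a module morphism closes the computation). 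So your argument is a strictly more detailed proof of the corollary than the paper provides; what it buys is an explicit witness that freeness, not just the monoidal coalgebra modality structure, survives the lift.
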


\section{Mixed Distributive Laws for Differential Categories}\label{diffsec}

In this final section, we discuss lifting differential category structure and how additive structure induces a source of exponential lifting monads and $\mathsf{MELL}$ lifting monads. 

\subsection{Additive Symmetric Monoidal Categories} Here we mean ``additive'' in the Blute, Cockett, and Seely sense of the term \cite{blute2006differential}, that is, enriched over commutative monoids. In particular, we do not assume negatives (at least not yet... see Theorem \ref{hopfneg}) nor do we assume biproducts which differs from other definitions of an additive category found in the literature \cite{mac2013categories}. Also in this section, we will often use equational representation instead of commutative diagrams as it is easier to work directly with equations when addition is involved. We will use diagrammatic order for composition: this means that the composite maps $f: A \to B$ and $g: B \to C$ is denoted $f;g: A \to C$, which is the map which first does $f$ then $g$.

\begin{defi}\label{ASMC} An \textbf{additive category} \cite{blute2006differential} is a category $\mathbb{X}$ such that for each pair of objects $A$ and $B$, the hom-set $\mathbb{X}(A,B)$ is a commutative monoid (in the classical sense) with binary operation $+: \mathbb{X}(A,B) \times \mathbb{X}(A,B) \to \mathbb{X}(A,B)$ called \textbf{addition} and unit $0_{A,B}: A \to B$ called the \textbf{zero map}, and furthermore that composition preserves the additive structure in the sense that: 
\[k;(f+g);h=(k;f;h)+(k;g;h) \quad \quad \quad 0;f=0=f;0\] 
An \textbf{additive symmetric monoidal category} \cite{blute2006differential} is a symmetric monoidal category $(\mathbb{X}, \otimes, K)$ such that $\mathbb{X}$ is an additive category and the monoidal product $\otimes$ is compatible with the additive structure in the sense that:
\[(f+g)\otimes h= f\otimes h + g\otimes h \quad \quad \quad 0\otimes f=0\] 
\end{defi}

\begin{exas} \normalfont Here are some examples of additive symmetric monoidal categories. Furthermore in these examples, the additive structures are induced from biproducts (see \cite{mac2013categories} for how biproducts induced an additive category structure).
\begin{enumerate}
\item The category of sets $\mathsf{SET}$ \emph{cannot} be made into an additive category. If $\mathsf{SET}$ was an additive category then categorical product (which is the Cartesian product) would coincide with the categorical coproduct (which is the disjoint union) -- which is clearly not the case. Therefore $(\mathsf{SET}, \times, \lbrace \ast \rbrace)$ is not an additive symmetric monoidal category. 
\item On the other hand, $(\mathsf{REL}, \times, \lbrace \ast \rbrace)$ is an additive symmetric monoidal where the sum of relations $R: X \to Y$ and $S: X \to Y$, which recall are subsets $R,S \subseteq X \times Y$, is defined as their union $R + S := R \cup S$, while the zero map $0_{X,Y}: X \to Y$ is the empty relation $0_{X,Y} := \emptyset$. 
\item Let $\mathbb{K}$ be a field. Then $(\mathsf{VEC}_\mathbb{K}, \otimes, \mathbb{K})$ is an additive symmetric monoidal category where the sum of $\mathbb{K}$-linear maps $f: V \to W$ and $g: V \to W$ is the standard sum of linear maps $f+g$ defined pointwise, $(f+g)(v) = f(v) +g(v)$, and where the zero map $0_{V,W}: V \to W$ is the $\mathbb{K}$-linear map which maps every element of $V$ to the zero element of $W$. \end{enumerate}
\end{exas} 

It is worth mentioning that every additive category can be completed to a category with biproducts  \cite{mac2013categories}  (which is itself an additive category), and similarly every additive symmetric monoidal category can be completed to an additive symmetric monoidal category with distributive biproducts. For this reason, it is possible to argue \cite{fiore2007differential} that one should always assume a setting with biproducts. The problem is that arbitrary coalgebra modalities do not necessarily extend to the biproduct completion. However, monoidal coalgebra modalities induce monoidal coalgebra modalities on the biproduct completion (see \cite{Blute2019} for more details). 

\subsection{Exponential Lifting Monads from Additive Structure}\label{addsec} In this section, we turn to monoidal coalgebra modalities over additive symmetric monoidal categories to provide us with a source of monoids in the Eilenberg-Moore category, which by Proposition \ref{bigresultprop} also provides a source of exponential lifting monads. 

If $(\oc, \delta, \varepsilon, \Delta, \mathsf{e}, \mathsf{m},\mathsf{m}_K)$ is a monoidal coalgebra modality on an additive symmetric monoidal category $(\mathbb{X}, \otimes, K)$, then there are two natural transformations $\nabla_A: \oc(A) \otimes \oc(A) \to \oc(A)$ and $\mathsf{u}_A: K \to \oc(A)$ such that $(\oc(A), \nabla_A, \mathsf{u}_A, \Delta, \mathsf{e}_A)$ is a commtuative and cocommutative bimonoid \cite{Blute2019}. Explicitly $\nabla_A$ and $\mathsf{u}_A$ are defined respectively as:
\begin{equation}\label{nablaaddbialg}\begin{gathered}
\nabla_A:=  \xymatrixrowsep{0.5pc}\xymatrixcolsep{5pc}\xymatrix{\oc(A) \otimes \oc(A)  \ar[r]^-{\delta_A \otimes \delta_A} & \oc \oc(A) \otimes \oc \oc(A) \ar[r]^-{\mathsf{m}_{\oc(A),\oc(A)}} &\\
 \oc(\oc(A) \otimes \oc(A)) \ar[rr]^-{\oc\left( \left((\varepsilon_A \otimes \mathsf{e}_A);\rho_A \right) + \left((\mathsf{e}_A \otimes \varepsilon_A);\ell_A \right) \right)} && \oc(A)
  }\\
   \mathsf{u}_A:= \xymatrixcolsep{4pc}\xymatrix{K \ar[r]^-{\mathsf{m}_K} & \oc(K) \ar[r]^-{\oc(0_{K,A})} & \oc(A)
  }\end{gathered}\end{equation}
When post-composing $\nabla$ and $\mathsf{u}$ by $\varepsilon$, it follows immediately from the coherences of a symmetric monoidal comonad (\ref{symcomonad}) that the following diagrams commute \cite{Blute2019}:
  \begin{equation}\label{epsilonnabla}\begin{gathered}
 \xymatrixcolsep{4pc}\xymatrix{\oc(A) \otimes \oc(A) \ar[rr]^-{\left((\varepsilon_A \otimes \mathsf{e}_A);\rho_A \right) + \left((\mathsf{e}_A \otimes \varepsilon_A);\ell_A \right)} \ar[dr]_-{\nabla_A} && A & K \ar[dr]_-{\mathsf{u}_K}  \ar[rr]^-{0_{K,A}} && A  \\
  & \oc(A) \ar[ur]_-{\varepsilon_A} & & & \oc(A) \ar[ur]_-{\varepsilon_A}  }
  \end{gathered}\end{equation}
Furthermore $\nabla_A$ and $\mathsf{u}_A$ are both $\oc$-coalgebra morphisms \cite{Blute2019} that is, the following diagrams commute: 
\begin{equation}\label{nabla!map2} \xymatrixcolsep{3.5pc}\begin{gathered}\xymatrix{\oc(A) \otimes \oc(A) \ar[d]_-{\delta_A \otimes \delta_A} \ar[rr]^-{\nabla_A} & & \oc(A) \ar[d]^-{\delta_A} & K \ar[d]_-{\mathsf{m}_K} \ar[r]^-{\mathsf{u}_A} & \oc(A) \ar[d]^-{\delta_A} \\
    \oc \oc(A) \otimes \oc \oc(A) \ar[r]_-{\mathsf{m}_{\oc(A), \oc(A)}} & \oc(\oc(A) \otimes \oc(A)) \ar[r]_-{\oc(\nabla_A)} & \oc \oc(A) &   \oc(K) \ar[r]_-{\oc(\mathsf{u}_A)} & \oc\oc(A)
  } \end{gathered}\end{equation}
  Therefore, $((\oc(A), \delta_A), \nabla_A, \mathsf{u}_A)$ is a commutative monoid in the Eilenberg-Moore category $(\mathbb{X}^\oc, \otimes^\mathsf{m}, (K, \mathsf{m}_K))$. There is a also a relation between the monoidal coalgebra modality and the additive structure given by bialgebra convolution \cite{majid2000foundations}, that is, the following diagrams commute: 
\begin{equation}\label{addbialgdef}\begin{gathered}
\xymatrixcolsep{4pc}\xymatrix{\oc(A) \ar[d]_-{\Delta_A}  \ar[r]^-{\oc(f+g)} & \oc(B) \ar[d]^-{\Delta_B} & \oc(A) \ar[dr]_-{\mathsf{e}_A} \ar[rr]^-{\oc(0_{A,B})} && \oc(B) \\
\oc(A) \otimes \oc(A) \ar[r]_-{\oc(f) \otimes \oc(g)} & \oc(B) \otimes \oc(B) && K \ar[ur]_-{\mathsf{u}_B}
  }\end{gathered}\end{equation}
In fact, this natural (co)commutative bimonoid structure provides an equivalent definition of a monoidal coalgebra modality on an additive symmetric monoidal category. This equivalent definition is known as an \textbf{additive bialgebra modality} \cite{Blute2019}. We conclude this section with the observation that for a monoidal coalgebra modality on an additive symmetric monoidal category, every object induces an exponential lifting monad. 
  
\begin{prop}\label{!bimonoid} Let $(\oc, \delta, \varepsilon, \Delta, \mathsf{e}, \mathsf{m},\mathsf{m}_K)$ be a monoidal coalgebra on an additive symmetric monoidal category $(\mathbb{X}, \otimes, K)$. Then for every object $A$, $(\oc(A) \otimes -, \mu^{\nabla_A}, \eta^{\mathsf{u}_A}, \mathsf{n}^{\Delta_A}, \mathsf{n}^{\mathsf{e}_K}, \delta_A^\flat)$ is an exponential lifting monad of $(\oc, \delta, \varepsilon, \Delta, \mathsf{e}, \mathsf{m},\mathsf{m}_K)$ where the symmetric comonoidal monads is defined as in Example \ref{bimonadex} and the symmetric monoidal mixed distributive law is defined as in (\ref{alphaomega}). 
\end{prop}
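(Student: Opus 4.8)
The plan is to recognize the stated sextuple as exactly the exponential lifting monad that Proposition \ref{bigresultprop} manufactures out of a monoid in the Eilenberg--Moore category, so that the proof reduces to citing that proposition and matching up notation. The discussion immediately preceding the statement already records that $((\oc(A), \delta_A), \nabla_A, \mathsf{u}_A)$ is a commutative monoid in the Cartesian monoidal category $(\mathbb{X}^\oc, \otimes^\mathsf{m}, (K, \mathsf{m}_K))$: the maps $\nabla_A$ and $\mathsf{u}_A$ of (\ref{nablaaddbialg}) obey the monoid laws, and (\ref{nabla!map2}) shows they are $\oc$-coalgebra morphisms, so they live in $\mathbb{X}^\oc$. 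First, then, I would invoke Proposition \ref{bigresultprop} applied to this particular monoid, which at once produces an exponential lifting monad of $(\oc, \delta, \varepsilon, \Delta, \mathsf{e}, \mathsf{m}, \mathsf{m}_K)$ whose underlying endofunctor is $\oc(A) \otimes -$.

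The remaining work is bookkeeping: checking that the components of this induced exponential lifting monad are precisely those listed in the statement. The monad part $(\mu^{\nabla_A}, \eta^{\mathsf{u}_A})$ is by definition the monad induced by the monoid $(\oc(A), \nabla_A, \mathsf{u}_A)$ via Example \ref{monoidmonad}. For the symmetric comonoidal part, the correspondence of Proposition \ref{bigresultprop} is routed through Proposition \ref{symmonresult}, in which the comonoid data of the bimonoid in the base is the comonoid structure carried by the monoid inside the Cartesian category $\mathbb{X}^\oc$. By Lemma \ref{Cartbim} this is the canonical cocommutative comonoid structure of Lemma \ref{Cartcom} on the object $(\oc(A), \delta_A)$, which, as recorded after (\ref{!coalgcom}) and in Section \ref{moncoalgsec}, is exactly $(\Delta^{\delta_A}, \mathsf{e}^{\delta_A}) = (\Delta_A, \mathsf{e}_A)$. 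Feeding $(\Delta_A, \mathsf{e}_A)$ into Example \ref{bimonadex} yields precisely the comonoidal transformations $\mathsf{n}^{\Delta_A}$ and $\mathsf{n}^{\mathsf{e}_K}$ named in the statement.

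Finally, I would identify the mixed distributive law. Under the $(1) \Rightarrow (2)$ construction of Proposition \ref{symmonresult}, the law attached to a monoid $((A', \omega), \dots)$ is the transformation $\omega^\natural$ of (\ref{alphaomega}); evaluating it at the cofree coalgebra $(\oc(A), \delta_A)$, that is at $\omega = \delta_A$, gives $(\delta_A \otimes 1_{\oc(X)}); \mathsf{m}_{\oc(A), X}$, which is exactly the map the statement records as $\delta_A^\flat$ and declares to be given by (\ref{alphaomega}). Because Proposition \ref{bigresultprop} guarantees that this $\lambda$ is automatically a symmetric monoidal mixed distributive law satisfying (\ref{nablastrong}), no further coherence diagrams need to be verified. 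The only step carrying any mathematical content beyond citation is the identification of the canonical comonoid structure on $(\oc(A), \delta_A)$ in $\mathbb{X}^\oc$ with $(\Delta_A, \mathsf{e}_A)$; everything else is an unwinding of the bijections of Propositions \ref{symmonresult} and \ref{bigresultprop}. I expect no serious obstacle here, precisely because the heavy lifting has already been done in those propositions and in establishing that $(\oc(A), \delta_A, \nabla_A, \mathsf{u}_A)$ is a monoid in the Eilenberg--Moore category.
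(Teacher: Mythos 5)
Your proposal is correct and follows exactly the paper's own route: the paper's proof is the single line ``This follows immediately from Proposition \ref{bigresultprop}.'' Your additional bookkeeping --- identifying the canonical comonoid structure on $(\oc(A),\delta_A)$ with $(\Delta_A,\mathsf{e}_A)$ and evaluating $\omega^\natural$ at $\omega=\delta_A$ to recover the stated distributive law --- is a sound and welcome unwinding of what the paper leaves implicit.
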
   
\begin{proof} This follows immediately from Proposition \ref{bigresultprop}. 
\end{proof} 
  
\subsection{$\mathsf{MELL}$ Lifting Monads from Negatives}\label{addsec2} In this section we add negatives to the story to obtain a source of groups in the Eilenberg-Moore category of a monoidal coalgebra modality and therefore also a source of $\mathsf{MELL}$ lifting monads. In particular, we will show that the cofree coalgebras of monoidal coalgebra modalities on additive symmetric monoidal categories are Hopf monoids precisely when the additive symmetric monoidal category also admits negatives. Here we mean ``admits negatives'' in the sense of being enriched over abelian groups. 

\begin{defi} \normalfont An additive category $\mathbb{X}$ is said to have \textbf{negatives} if for each map admits an additive inverse, that is, for every map $f: A \to B$ there is a map $-f: A \to B$ such that $f+(-f)= 0_{A,B}$. 
\end{defi}

We first note that for an additive symmetric monoidal category to have negatives: it is sufficient for the identity map of the monoidal unit to have an additive inverse. 

\begin{lem}\label{neg} An additive symmetric monoidal category $(X, \otimes, K)$ admits negatives if and only if the identity of the monoidal unit $1_K: K \to K$ admits an additive inverse. 
\end{lem}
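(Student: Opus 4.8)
The plan is to prove both implications, of which the forward one is immediate: if $\mathbb{X}$ admits negatives then \emph{every} map has an additive inverse, in particular $1_K$ does. The substance is the converse, so I would suppose $1_K \colon K \to K$ has an additive inverse $-1_K \colon K \to K$, meaning $1_K + (-1_K) = 0_{K,K}$, and manufacture an additive inverse for an arbitrary map $f \colon A \to B$.

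The key idea is to transport the negative on $K$ along the left unit isomorphism. For $f \colon A \to B$ I would define
\[ -f \;:=\; \ell^{-1}_A ; \big((-1_K) \otimes f\big) ; \ell_B \colon A \to B. \]
The point of this choice is that, by naturality of $\ell$ (which gives $(1_K \otimes f) ; \ell_B = \ell_A ; f$, hence $f = \ell^{-1}_A ; (1_K \otimes f) ; \ell_B$), the map $f$ itself can be written in exactly the same shape $\ell^{-1}_A ; (1_K \otimes f) ; \ell_B$. Thus $f$ and the proposed $-f$ differ only in the left tensor factor, $1_K$ versus $-1_K$, which is precisely where the hypothesis lives.

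With both maps in this common form, I would compute $f + (-f)$ by pushing the sum inside. First, additive compatibility of composition lets me factor the common pre- and post-composites with $\ell^{-1}_A$ and $\ell_B$, reducing the problem to $(1_K \otimes f) + \big((-1_K) \otimes f\big)$. Next, left additivity of the monoidal product, $(g + g') \otimes h = g \otimes h + g' \otimes h$, collapses this to $\big(1_K + (-1_K)\big) \otimes f = 0_{K,K} \otimes f$, and the axiom $0 \otimes f = 0$ makes this the zero map. Finally $0 ; g = 0$ yields $f + (-f) = \ell^{-1}_A ; 0 ; \ell_B = 0_{A,B}$, so $-f$ is the required additive inverse.

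There is no serious obstacle here; the only thing to get right is the construction of $-f$, and the verification is just the orchestration of the three compatibility properties of an additive symmetric monoidal category — additivity of composition, left additivity of $\otimes$, and naturality of the unitor — together with the single hypothesis on $1_K$. One could equally well use the right unitor $\rho$ in place of $\ell$; in particular the symmetry of the monoidal structure is not needed for this argument.
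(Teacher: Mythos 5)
Your proposal is correct and matches the paper's own proof essentially verbatim: the same candidate $-f := \ell^{-1}_A ; ((-1_K) \otimes f) ; \ell_B$, the same use of naturality of $\ell$ to rewrite $f$ in the matching form, and the same appeal to the additivity axioms of Definition \ref{ASMC} to collapse the sum to zero. Nothing further is needed.
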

\begin{proof} One direction is automatic by definition. For the converse, suppose that $1_K$ admits an additive inverse $-1_K: K \to K$ such that $1_K + (-1_K) = 0$. For a map $f: A \to B$, define the map $-f: A \to B$ as follows: 
  \[ -f := \xymatrixcolsep{5pc}\xymatrix{ A \ar[r]^-{\ell^{-1}_A} & K \otimes A \ar[r]^-{(-1_K) \otimes f} & K \otimes B \ar[r]^-{\ell_B} & B 
  } \]
Then by the additive symmetric monoidal structure, it follows that $f + (-f)=0$.  
  \begin{align*}
f+(-f) &=~ (\ell^{-1}_A; \ell_A; f) + \left( \ell^{-1}_A; ((-1_K) \otimes f) ; \ell_B \right) \\
&=~ (\ell^{-1}_A; (1_K \otimes f) ; \ell_B) + \left( \ell^{-1}_A; ((-1_K) \otimes f) ; \ell_B \right) \tag{Nat. of $\ell$} \\
&=~ \ell^{-1}_A; \left((1_K + (-1_K) \otimes f \right); \ell_B \tag{Definition \ref{ASMC}} \\
&=~ \ell^{-1}_A; \left(0_{K,K} \otimes f \right); \ell_B \\
&=~ \ell^{-1}_A; 0_{K \otimes A, K \otimes B}; \ell_B \tag{Definition \ref{ASMC}} \\
&=~ 0_{A,B} \tag{Definition \ref{ASMC}}
\end{align*}
\end{proof} 

Furthermore, in an additive symmetric monoidal category which has negatives, the tensor product is compatible with negatives in the sense that $-(f \otimes g)=(-f) \otimes g = f \otimes (-g)$. 

\begin{prop}\label{hopfneg} Let  $(\oc, \delta, \varepsilon, \Delta, \mathsf{e}, \mathsf{m},\mathsf{m}_K)$ be a monoidal coalgebra modality on an additive symmetric monoidal category $(\mathbb{X}, \otimes, K)$. Then there exists a natural transformation $\mathsf{S}_A: \oc(A) \to \oc(A)$ such that for each object $A$, the septuple $(\oc(A), \nabla_A, \mathsf{u}_A, \Delta_A, \mathsf{e}_A, \mathsf{S}_A)$ is a cocommutative Hopf monoid (where $\nabla_A$ and $\mathsf{u}_A$ are defined as in (\ref{nablaaddbialg})) if and only if the additive symmetric monoidal category $(\mathbb{X}, \otimes, K)$ is enriched over abelian groups. 
\end{prop}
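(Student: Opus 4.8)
The plan is to prove the two implications of the biconditional separately: for the ``if'' direction I would produce an explicit natural antipode $\mathsf{S}_A := \oc(-1_A)$ built from negatives, and for the ``only if'' direction I would extract an additive inverse of $1_K$ from the given antipode and invoke Lemma \ref{neg}. Throughout I may use that $(\oc(A), \nabla_A, \mathsf{u}_A, \Delta_A, \mathsf{e}_A)$ is already known to be a commutative and cocommutative bimonoid, so that only the antipode axiom (\ref{hopf}) is at stake.

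\emph{($\Leftarrow$)} First I would assume $(\mathbb{X}, \otimes, K)$ is enriched over abelian groups, so $-1_A \colon A \to A$ exists for every $A$, and set $\mathsf{S}_A := \oc(-1_A)$. Naturality is immediate, since $\oc(-1_A); \oc(f) = \oc(-f) = \oc(f); \oc(-1_B)$ for each $f \colon A \to B$ by additivity of composition. To verify (\ref{hopf}) I would compute the convolution $\Delta_A; (1_{\oc(A)} \otimes \mathsf{S}_A); \nabla_A$. Writing $1_{\oc(A)} = \oc(1_A)$ and applying the left diagram of (\ref{addbialgdef}) with $f = 1_A$, $g = -1_A$ gives $\Delta_A; (\oc(1_A) \otimes \oc(-1_A)) = \oc(0_{A,A}); \Delta_A$, while the right diagram of (\ref{addbialgdef}) rewrites $\oc(0_{A,A}) = \mathsf{e}_A; \mathsf{u}_A$. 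Post-composing with $\nabla_A$ then reduces everything to the identity $\mathsf{u}_A; \Delta_A; \nabla_A = \mathsf{u}_A$, which follows from the unit laws of the bimonoid: the comonoid-morphism property of $\mathsf{u}_A$ gives $\mathsf{u}_A; \Delta_A = \ell_K^{-1}; (\mathsf{u}_A \otimes \mathsf{u}_A)$, and the monoid unit law together with naturality of $\ell$ gives $(\mathsf{u}_A \otimes \mathsf{u}_A); \nabla_A = \ell_K; \mathsf{u}_A$. Hence $\Delta_A; (1 \otimes \mathsf{S}_A); \nabla_A = \mathsf{e}_A; \mathsf{u}_A$, and the symmetric identity with $\mathsf{S}_A \otimes 1$ follows by interchanging $f$ and $g$, exhibiting a cocommutative Hopf monoid.

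\emph{($\Rightarrow$)} Conversely, I would assume each $(\oc(A), \nabla_A, \mathsf{u}_A, \Delta_A, \mathsf{e}_A, \mathsf{S}_A)$ is a cocommutative Hopf monoid and produce an additive inverse of $1_K$. Starting from $\Delta_A; (1 \otimes \mathsf{S}_A); \nabla_A = \mathsf{e}_A; \mathsf{u}_A$ and post-composing with $\varepsilon_A$, the right-hand side becomes $\mathsf{e}_A; \mathsf{u}_A; \varepsilon_A = 0$ by the right identity of (\ref{epsilonnabla}). For the left-hand side I would substitute the additive expression for $\nabla_A; \varepsilon_A$ from the left identity of (\ref{epsilonnabla}) and split along $+$ using bilinearity of composition. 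Since the Hopf monoid is cocommutative, Lemma \ref{lemmahopf}(ii) makes $\mathsf{S}_A$ a comonoid morphism, so $\mathsf{S}_A; \mathsf{e}_A = \mathsf{e}_A$; combining this with the two counit laws collapses the two summands to $\varepsilon_A$ and $\mathsf{S}_A; \varepsilon_A$, yielding $\varepsilon_A + \mathsf{S}_A; \varepsilon_A = 0$. Specialising to $A = K$ and pre-composing with $\mathsf{m}_K$ gives $\mathsf{m}_K; \varepsilon_K + \mathsf{m}_K; \mathsf{S}_K; \varepsilon_K = 0$, and since $\mathsf{m}_K; \varepsilon_K = 1_K$ by the comonad axiom (\ref{symcomonad}), the map $\mathsf{m}_K; \mathsf{S}_K; \varepsilon_K$ is an additive inverse of $1_K$. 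Lemma \ref{neg} then concludes that $(\mathbb{X}, \otimes, K)$ admits negatives.

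The main obstacle is the extraction step in the ($\Rightarrow$) direction: the antipode axiom a priori relates only $\nabla$, $\Delta$, $\mathsf{u}$, $\mathsf{e}$ and offers no direct grip on the additive structure, so the crux is recognising that post-composing with $\varepsilon_A$ and exploiting the additive description (\ref{epsilonnabla}) of $\nabla_A; \varepsilon_A$ and $\mathsf{u}_A; \varepsilon_A$ is exactly the bridge to negatives. Verifying that the two summands collapse precisely to $\varepsilon_A + \mathsf{S}_A; \varepsilon_A$, rather than some twisted variant, is where cocommutativity and the counit-preservation of $\mathsf{S}$ are genuinely needed. By comparison the ($\Leftarrow$) direction is routine once $\oc(-1_A)$ is identified as the antipode, the only real input being the convolution identities (\ref{addbialgdef}).
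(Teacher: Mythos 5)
Your overall strategy coincides with the paper's: in the forward direction the antipode is $\mathsf{S}_A := \oc(-1_A)$ and, by cocommutativity, only one half of (\ref{hopf}) needs checking; in the converse direction you post-compose the antipode identity with $\varepsilon$, expand $\nabla_A;\varepsilon_A$ and $\mathsf{u}_A;\varepsilon_A$ via (\ref{epsilonnabla}), use Lemma \ref{lemmahopf}(ii) to reduce $\mathsf{S}_K;\mathsf{e}_K$ to $\mathsf{e}_K$, and land on $-1_K = \mathsf{m}_K;\mathsf{S}_K;\varepsilon_K$ before invoking Lemma \ref{neg}. The $(\Rightarrow)$ half is correct and is essentially the paper's computation (you derive $\varepsilon_A + \mathsf{S}_A;\varepsilon_A = 0$ for all $A$ and then specialize to $K$, whereas the paper works at $K$ from the start -- an immaterial difference).

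There is, however, one step in your $(\Leftarrow)$ direction that does not hold as stated: the equation $\Delta_A;(\oc(1_A)\otimes\oc(-1_A)) = \oc(0_{A,A});\Delta_A$. No identity of this shape can be valid: naturality of $\Delta$ gives $\oc(f+g);\Delta = \Delta;(\oc(f+g)\otimes\oc(f+g))$, so the putative square would force $\Delta;(\oc(f)\otimes\oc(g)) = \Delta;(\oc(f+g)\otimes\oc(f+g))$, which already fails for $f=1_A$, $g=0_{A,A}$ and fails concretely in $\mathsf{VEC}_{\mathbb{K}}$ on primitive elements. The actual relation between $\Delta$, $\nabla$ and addition is the bialgebra convolution identity $\Delta_A;(\oc(f)\otimes\oc(g));\nabla_A = \oc(f+g)$; the left square of (\ref{addbialgdef}) is evidently intended to express this (the arrow attached to $\oc(B)$ should be $\nabla_B$ \emph{out of} $\oc(B)\otimes\oc(B)$), and it is in this form that the paper uses it. With the convolution identity the antipode axiom is immediate: $\Delta_A;(1_{\oc(A)}\otimes\oc(-1_A));\nabla_A = \oc(1_A+(-1_A)) = \oc(0_{A,A}) = \mathsf{e}_A;\mathsf{u}_A$, and your detour through $\mathsf{u}_A;\Delta_A;\nabla_A = \mathsf{u}_A$ becomes unnecessary. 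So the forward direction is salvageable with a one-line repair, but as written the justification of that intermediate equation is not sound.
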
 
\begin{proof} $\Leftarrow$: Suppose that $(\mathbb{X}, \otimes, K)$ is enriched over abelian groups. Define the natural transformation $\mathsf{S}_A: \oc(A)\to \oc(A)$ as $\mathsf{S}_A := \oc(-1_A)$. Since the bimonoid is (co)commutative $(\oc(A), \nabla_A, \mathsf{u}_A, \Delta_A, \mathsf{e}_A)$, we only need to check one half of (\ref{hopf}) to show that \\ \noindent $(\oc(A), \nabla_A, \mathsf{u}_A, \Delta_A, \mathsf{e}_A, \mathsf{S}_A)$ is a Hopf monoid. Then that $\mathsf{S}$ is an antipode for \\ \noindent  $(\oc(A), \nabla_A, \mathsf{u}_A, \Delta_A, \mathsf{e}_A)$ follows from commutativity of the following diagram: 
\[ \xymatrix{& \oc(A) \otimes \oc(A) \ar[rr]^-{1_{\oc(A)} \otimes \oc(-1_A)} & &  \oc(A) \otimes \oc(A) \ar[dr]^-{\nabla_A}  \\
\oc(A) \ar[ur]^-{\Delta_A}  \ar[drr]_-{\mathsf{e}_A} \ar[rrrr]^-{\oc(0_{A,A})} &   &  & & \oc(A) \\
 && K \ar@{}[u]|-{(\ref{addbialgdef})} \ar@{}[uu]|(0.8){(\ref{addbialgdef})} \ar[urr]_-{\mathsf{u}_A}
  } \]
Furthermore note that by definition, $\mathsf{S}_A: (\oc(A), \delta_A) \to (\oc(A), \delta_A)$ is a $\oc$-coalgebra morphism. 

\noindent $\Rightarrow$: Suppose there exists a natural transformation $\mathsf{S}_A: \oc(A)\to \oc(A)$ such that for each object $A$, the septuple $(\oc(A), \nabla_A, \mathsf{u}_A, \Delta_A, \mathsf{e}_A, \mathsf{S}_A)$ is a cocommutative Hopf monoid. By Lemma \ref{neg} it suffices to show that $1_K: K \to K$ has an additive inverse. Define $-1_K: K \to K$ as follows: 
  \[  \xymatrixcolsep{5pc}\xymatrix{K \ar[r]^-{\mathsf{m}_K} & \oc(K)  \ar[r]^-{\mathsf{S}_K} & \oc(K) \ar[r]^-{\varepsilon_K} & K 
  } \]
That $1_K + (-1_K) = 0_{K,K}$ follows from the following equalities: 
\begin{align*}
0_{K,K} &=~\mathsf{u}_K; \varepsilon_K \tag{\ref{epsilonnabla}} \\
&=~ \mathsf{m}_K; \mathsf{e}_K; \mathsf{u}_K; \varepsilon_K \tag{\ref{Deltamonoidal}} \\
&=~  \mathsf{m}_K; \Delta_K; (\mathsf{S}_K \otimes 1_{\oc(K)}); \nabla_K; \varepsilon_K \tag{\ref{hopf}} \\
&=~ \mathsf{m}_K; \Delta_K; (\mathsf{S}_K \otimes 1_{\oc(K)}); \left[\left((\varepsilon_K \otimes \mathsf{e}_K);\rho_K \right) + \left((\mathsf{e}_K \otimes \varepsilon_K);\ell_K \right) \right] \tag{\ref{epsilonnabla}} \\
&=~ \left[\mathsf{m}_K; \Delta_K; (\mathsf{S}_K \otimes 1_{\oc(K)}); (\varepsilon_K \otimes \mathsf{e}_K); \rho_K \right] + \left[ \mathsf{m}_K; \Delta_K; (\mathsf{S}_K \otimes 1_{\oc(K)}); (\mathsf{e}_K \otimes \varepsilon_K);\ell_K \right] \tag{Definition \ref{ASMC}} \\
&=~ \left[\mathsf{m}_K; \Delta_K; (\mathsf{S}_K \otimes 1_{\oc(K)}); (\varepsilon_K \otimes \mathsf{e}_K); \rho_K \right] + \left[ \mathsf{m}_K; \Delta_K; (\mathsf{e}_K \otimes \varepsilon_K);\ell_K \right] \tag{Lemma \ref{lemmahopf} (ii)} \\ 
&=~ \left(\mathsf{m}_K; \mathsf{S}_K; \varepsilon_K \right) + \left( \mathsf{m}_K; \varepsilon_K \right) \tag{\ref{comonoid}} \\ 
&=~ (-1_K) + 1_K \\
&=~ 1_K + (-1_K)
                      \tag*{\qedhere}
\end{align*} 
\end{proof}

Therefore we obtain the following: 

\begin{thm}\label{!hopf}  Let $(\mathbb{X}, \otimes, K)$ be a linear category with monoidal coalgebra modality \\ \noindent $(\oc, \delta, \varepsilon, \Delta, \mathsf{e}, \mathsf{m},\mathsf{m}_K)$ such that $(\mathbb{X}, \otimes, K)$ is also an additive symmetric monoidal category with negatives. Then for every object $A$, $(\oc(A) \otimes -, \mu^{\nabla_A}, \eta^{\mathsf{u}_A}, \mathsf{n}^{\Delta_A}, \mathsf{n}^{\mathsf{e}_K}, \delta_A^\flat)$ is a $\mathsf{MELL}$ lifting monad and therefore $\left( \mathsf{MOD}(\oc(A), \nabla_A, \mathsf{u}_A), \otimes^{\mathsf{n}^{\Delta_A}}, (K, \mathsf{n}^{\mathsf{e}_A}_K) \right)$ is a linear category. 
\end{thm}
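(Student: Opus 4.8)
The plan is to assemble the statement from the machinery already developed, the single piece of genuine content being the equivalence established in Proposition~\ref{hopfneg}. First I would observe that an additive symmetric monoidal category has negatives precisely when it is enriched over abelian groups, so the hypotheses place us in the situation of the ``$\Leftarrow$'' direction of Proposition~\ref{hopfneg}. That proposition supplies, for each object $A$, the natural transformation $\mathsf{S}_A := \oc(-1_A)$ making $(\oc(A), \nabla_A, \mathsf{u}_A, \Delta_A, \mathsf{e}_A, \mathsf{S}_A)$ a cocommutative Hopf monoid in $(\mathbb{X}, \otimes, K)$, where $\nabla_A$ and $\mathsf{u}_A$ are as in (\ref{nablaaddbialg}). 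Thus, once we know that $\oc(A)$ is a cocommutative Hopf monoid, the remaining work is purely a matter of invoking the correspondences of the previous sections.

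The proof then proceeds in two steps matching the two clauses of Definition~\ref{mellliftdef}. For the exponential lifting monad clause I would cite Proposition~\ref{!bimonoid} verbatim: it already states that $(\oc(A) \otimes -, \mu^{\nabla_A}, \eta^{\mathsf{u}_A}, \mathsf{n}^{\Delta_A}, \mathsf{n}^{\mathsf{e}_K}, \delta_A^\flat)$ is an exponential lifting monad, with the symmetric comonoidal monad built from the cocommutative bimonoid $(\oc(A), \nabla_A, \mathsf{u}_A, \Delta_A, \mathsf{e}_A)$ as in Example~\ref{bimonadex} and with $\delta_A^\flat$ the symmetric monoidal mixed distributive law (which in particular satisfies (\ref{nablastrong})). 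For the symmetric Hopf monad clause I would apply Example~\ref{hopfmonadex}: since the bimonoid $\oc(A)$ has now been promoted to a cocommutative Hopf monoid with antipode $\mathsf{S}_A$, the induced symmetric comonoidal monad $\oc(A) \otimes -$ is in fact a symmetric Hopf monad, with explicit inverse fusion operators assembled from $\mathsf{S}_A$. Combining the two clauses yields that $(\oc(A) \otimes -, \mu^{\nabla_A}, \eta^{\mathsf{u}_A}, \mathsf{n}^{\Delta_A}, \mathsf{n}^{\mathsf{e}_K}, \delta_A^\flat)$ is a $\mathsf{MELL}$ lifting monad, and then Theorem~\ref{mellthm} immediately gives that its Eilenberg-Moore category $\mathsf{MOD}(\oc(A), \nabla_A, \mathsf{u}_A)$, equipped with the monoidal structure $\otimes^{\mathsf{n}^{\Delta_A}}$ and unit $(K, \mathsf{n}^{\mathsf{e}_A}_K)$ of Example~\ref{bimonadex}, is a linear category.

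The only points requiring care — and the closest thing to an obstacle — are the logical joints between these citations. First, one must note that the symmetric Hopf monad condition of Definition~\ref{Hopfmonaddef} depends only on the underlying symmetric comonoidal monad $(\oc(A) \otimes -, \mu^{\nabla_A}, \eta^{\mathsf{u}_A}, \mathsf{n}^{\Delta_A}, \mathsf{n}^{\mathsf{e}_K})$ and not on the mixed distributive law $\delta_A^\flat$, so that Example~\ref{hopfmonadex} applies unchanged to the monad produced by Proposition~\ref{!bimonoid}. Second, one should confirm that the antipode used in Example~\ref{hopfmonadex} really is the $\mathsf{S}_A$ produced by Proposition~\ref{hopfneg}; this is automatic by uniqueness of antipodes (Lemma~\ref{lemmahopf}(i)). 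As an alternative packaging, one could instead verify that $((\oc(A), \delta_A), \nabla_A, \mathsf{u}_A, \mathsf{S}_A)$ is a group in the Cartesian monoidal category $(\mathbb{X}^\oc, \otimes^\mathsf{m}, (K, \mathsf{m}_K))$ — using that $\nabla_A$ and $\mathsf{u}_A$ are $\oc$-coalgebra morphisms by (\ref{nabla!map2}), that $\mathsf{S}_A$ is one by the proof of Proposition~\ref{hopfneg}, and that the canonical comonoid on the cofree coalgebra $(\oc(A), \delta_A)$ is $(\Delta_A, \mathsf{e}_A)$ — and then apply Theorem~\ref{bigresultthm} directly. That route makes conditions (\ref{nablastrong}) and (\ref{extraS}) the items to confirm, but both are already guaranteed by Propositions~\ref{!bimonoid} and~\ref{hopfneg}, so either packaging reduces the theorem to a bookkeeping assembly of the established results.
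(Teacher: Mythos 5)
Your proposal is correct and matches the paper's proof, which is simply the one-line observation that the theorem follows from Proposition~\ref{hopfneg} together with Theorem~\ref{bigresultthm} --- exactly your ``alternative packaging.'' Your primary route (Proposition~\ref{!bimonoid} for the exponential lifting clause, Example~\ref{hopfmonadex} for the Hopf monad clause, then Theorem~\ref{mellthm}) is just an unfolding of what Theorem~\ref{bigresultthm} already encapsulates, so the two packagings are the same argument.
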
 
\begin{proof} This theorem follows directly from Proposition \ref{hopfneg} and Theorem \ref{bigresultthm}.
\end{proof} 

\subsection{Lifting Differential Category Structure} 

We briefly give the definition of a differential category. For more details on differential categories see \cite{blute2006differential, blute2015cartesian, Blute2019}.

\begin{defi} A differential category \cite{blute2006differential} is an additive symmetric monoidal category $(\mathbb{X}, \otimes, K)$ with a coalgebra $(\oc, \delta, \varepsilon, \Delta, \mathsf{e})$ equipped with a deriving transformation, that is, a natural transformation $\mathsf{d}_A: \oc(A) \otimes A \to \oc(A)$ satisfying the identities found in \cite[Definition 2.5]{blute2006differential}. 
\end{defi} 

In order to be able to lift differential category structure, we will need that the Eilenberg-Moore category of our symmetric comonoidal monad is an additive symmetric monoidal category. In order to achieve this, we will need that the underlying endofunctor preserves the additive structure. 

\begin{defi} An \textbf{additive functor} between additive categories is a functor which preserves the additive structure strictly, that is, a functor $\mathsf{T}$ such that $\mathsf{T}(f+g)=\mathsf{T}(f) + \mathsf{T}(g)$ and $\mathsf{T}(0)=0$. 
\end{defi}

One can easily check that for a monad on an additive category whose underlying endofunctor is additive, that its Eilenberg-Moore category is also an additive category such that the forgetful functor preserves the additive structure strictly. Similarly, for a symmetric comonoidal monad on an additive symmetric monoidal category whose underlying endofunctor is additive, its Eilenberg-Moore category is also an additive category such that the forgetful functor preserves the additive symmetric monoidal structure strictly. Luckily, for any additive symmetric monoidal category our favourite endofunctor $A \otimes -$ is additive for any object $A$. 

\begin{defi}\label{difflift} Let $(\mathbb{X}, \otimes, K)$ be a differential category with coalgebra modality\\\noindent $(\oc, \delta, \varepsilon, \Delta, \mathsf{e})$ equipped with deriving transformation $\mathsf{d}$, and let $(\mathsf{T}, \mu, \eta, \mathsf{n}, \mathsf{n}_{K})$ be a symmetric comonoidal monad on $(\mathbb{X}, \otimes, K)$ whose underlying endofunctor $\mathsf{T}$ is additive. A \textbf{differential mixed distributive law of $(\mathsf{T}, \mu, \eta, \mathsf{n}, \mathsf{n}_{K})$ over $(\oc, \delta, \varepsilon, \Delta, \mathsf{e}, \mathsf{d})$} is a coalgebra mixed distributive law $\lambda$ of $(\mathsf{T}, \mu, \eta, \mathsf{n}, \mathsf{n}_{K})$ over $(\oc, \delta, \varepsilon, \Delta, \mathsf{e})$ such that the following diagram commutes: 
\begin{equation}\label{distderive} \begin{gathered}\xymatrixcolsep{5pc}\xymatrix{\mathsf{T}(\oc(A) \otimes A) \ar[d]_-{\mathsf{T}(\mathsf{d}_A)} \ar[r]^-{\mathsf{n}_{\oc(A),A}} & \mathsf{T} \oc(A) \otimes \mathsf{T}(A) \ar[r]^-{\lambda_A \otimes 1_{\mathsf{T}(A)}}& \oc \mathsf{T}(A) \otimes \mathsf{T}(A) \ar[d]^-{\mathsf{d}_{\mathsf{T}(A)}} \\
\mathsf{T} \oc(A) \ar[rr]_-{\lambda_A} & & \oc \mathsf{T}(A)
  } \end{gathered}\end{equation}
\end{defi}

\begin{prop}\label{liftingdiff} Let $(\mathbb{X}, \otimes, K)$ be a differential category with coalgebra modality \\ \noindent $(\oc, \delta, \varepsilon, \Delta, \mathsf{e})$ equipped with deriving transformation $\mathsf{d}$, and let $(\mathsf{T}, \mu, \eta, \mathsf{n}, \mathsf{n}_{K})$ be a symmetric comonoidal monad on $(\mathbb{X}, \otimes, K)$ whose underlying endofunctor is additive. Then the following are in bijective correspondence:
\begin{enumerate}
\item Differential mixed distributive laws $(\mathsf{T}, \mu, \eta, \mathsf{n}, \mathsf{n}_{K})$ over $(\oc, \delta, \varepsilon, \Delta, \mathsf{e}, \mathsf{d})$;
\item Liftings of $\mathsf{d}$ to $(\mathbb{X}^\mathsf{T}, \otimes^{\mathsf{n}}, (K,\mathsf{n}_K))$, that is, a deriving transformation $\tilde{\mathsf{d}}$ for the induced lifted coalgebra modality $(\tilde{\oc}, \tilde{\delta}, \tilde{\varepsilon}, \tilde{\Delta}, \tilde{\mathsf{e}})$ on $(\mathbb{X}^\mathsf{T}, \otimes^{\mathsf{n}}, (K,\mathsf{n}_K))$ from Proposition \ref{liftingcoalgmod} such that for every $\oc$-colagbera $(A, \omega)$, $\tilde{\mathsf{d}}_{(A, \omega)} = \mathsf{d}_A$. 
\end{enumerate}
Therefore if $\lambda$ is a differential mixed distributive law of $(\mathsf{T}, \mu, \eta, \mathsf{n}, \mathsf{n}_{K})$ over $(\oc, \delta, \varepsilon, \Delta, \mathsf{e}, \mathsf{d})$, then $(\mathbb{X}^\mathsf{T}, \otimes^{\mathsf{n}}, (K,\mathsf{n}_K))$ is a differential category with coalgebra modality $(\tilde{\oc}, \tilde{\delta}, \tilde{\varepsilon}, \tilde{\Delta}, \tilde{\mathsf{e}})$ and deriving transformation $\tilde{\mathsf{d}}$.  
\end{prop}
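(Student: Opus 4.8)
The plan is to reduce Proposition~\ref{liftingdiff} to Proposition~\ref{liftingcoalgmod} in precisely the way the earlier lifting results were reduced to Theorem~\ref{monadcomonad}. By Definition~\ref{difflift} a differential mixed distributive law is a coalgebra mixed distributive law $\lambda$ satisfying the single additional coherence~(\ref{distderive}), and a lifting of $\mathsf{d}$ is by hypothesis a deriving transformation $\tilde{\mathsf{d}}$ for the already-lifted coalgebra modality $(\tilde\oc,\tilde\delta,\tilde\varepsilon,\tilde\Delta,\tilde{\mathsf{e}})$ of Proposition~\ref{liftingcoalgmod} whose components underlie $\mathsf{d}$. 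Hence the bijection on the underlying data is already in hand, and the task is to show that~(\ref{distderive}) corresponds exactly to the requirement that each component $\mathsf{d}_A$ lift to a $\mathsf{T}$-algebra morphism $\tilde\oc(A,\nu)\otimes^{\mathsf{n}}(A,\nu)\to\tilde\oc(A,\nu)$, and then that the lifted $\tilde{\mathsf{d}}$ really is a deriving transformation.

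For $(1)\Rightarrow(2)$ I would start from a differential mixed distributive law $\lambda$ and take the lifted coalgebra modality from Proposition~\ref{liftingcoalgmod}. For a $\mathsf{T}$-algebra $(A,\nu)$ the lifted exponential is $\tilde\oc(A,\nu)=(\oc(A),\nu^\sharp)$ with $\nu^\sharp$ as in~(\ref{nusharp}), and the candidate is $\tilde{\mathsf{d}}_{(A,\nu)}:=\mathsf{d}_A$. Post-composing~(\ref{distderive}) with $\oc(\nu)$ and using the naturality of $\mathsf{d}$ at $\nu\colon\mathsf{T}(A)\to A$ together with the definition $\nu^\sharp=\lambda_A;\oc(\nu)$ and the tensor algebra structure $\nu^\sharp\otimes^{\mathsf{n}}\nu=\mathsf{n}_{\oc(A),A};(\nu^\sharp\otimes\nu)$ from~(\ref{eilentensor}), one obtains exactly the $\mathsf{T}$-algebra morphism law $\mathsf{T}(\mathsf{d}_A);\nu^\sharp=(\nu^\sharp\otimes^{\mathsf{n}}\nu);\mathsf{d}_A$, so $\tilde{\mathsf{d}}$ is well defined. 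The deriving transformation axioms for $\tilde{\mathsf{d}}$ then hold automatically: $\mathsf{U}^{\mathsf{T}}$ is faithful, strict symmetric monoidal, strictly preserves the lifted coalgebra modality, and—because $\mathsf{T}$ is additive—strictly preserves the additive structure, so every axiom, being an equation between composites of $\tilde{\mathsf{d}},\tilde\Delta,\tilde{\mathsf{e}},\tilde\delta,\tilde\varepsilon$, the structural isomorphisms and addition, is carried by $\mathsf{U}^{\mathsf{T}}$ to the corresponding axiom for $\mathsf{d}$ in $\mathbb{X}$ and then reflected back by faithfulness.

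For the converse $(2)\Rightarrow(1)$ I would take a lifting $\tilde{\mathsf{d}}$, whose underlying coalgebra modality lifting yields by Proposition~\ref{liftingcoalgmod} a coalgebra mixed distributive law $\lambda$ given by~(\ref{lambdabuild}). Specializing the $\mathsf{T}$-algebra morphism condition for $\tilde{\mathsf{d}}$ to a free algebra $(\mathsf{T}(A),\mu_A)$, where $\tilde\oc(\mathsf{T}(A),\mu_A)=(\oc\mathsf{T}(A),\mu_A^\sharp)$, gives an identity involving $\mu_A^\sharp$; then starting from the left-hand side $\mathsf{T}(\mathsf{d}_A);\lambda_A$ of~(\ref{distderive}), substituting $\lambda_A=\mathsf{T}\oc(\eta_A);\mu_A^\sharp$ and applying the naturality of $\mathsf{d}$ at $\eta_A$, this free-algebra instance, the naturality of $\mathsf{n}$, and the monad unit law $\mathsf{T}(\eta_A);\mu_A=1$ collapses it to exactly the right-hand side of~(\ref{distderive}). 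This is the same free-algebra bookkeeping used in the converse directions of Propositions~\ref{liftsymmix} and~\ref{liftingcoalgmod}. The two constructions are mutually inverse because the underlying coalgebra-modality correspondence already is and the deriving transformations match by construction, and the closing ``Therefore'' of the statement is then immediate.

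I expect the main obstacle to be precisely this free-algebra reduction in $(2)\Rightarrow(1)$: threading the free-algebra structure map $\mu_A^\sharp$ and the unit $\eta_A$ through~(\ref{distderive}) so that the general algebra-morphism condition for $\tilde{\mathsf{d}}$ reduces cleanly to the stated coherence. The verification of the deriving transformation axioms, which might at first look like the delicate step, is in fact disposed of purely formally by the strictness and faithfulness of the forgetful functor and requires no computation.
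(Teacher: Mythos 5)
Your proposal is correct and follows essentially the same route as the paper: both directions reduce to Proposition \ref{liftingcoalgmod}, with $(1)\Rightarrow(2)$ showing $\mathsf{d}_A$ is a $\mathsf{T}$-algebra morphism via (\ref{distderive}), naturality of $\mathsf{d}$ at $\nu$, and the definition of $\nu^\sharp$, and with $(2)\Rightarrow(1)$ recovering (\ref{distderive}) from the free-algebra instance of the $\mathsf{T}$-algebra morphism condition together with $\lambda_A=\mathsf{T}\oc(\eta_A);\mu_A^\sharp$, naturality of $\mathsf{d}$ and $\mathsf{n}$, and the monad unit law. Your explicit remark that the deriving-transformation axioms transfer by strictness and faithfulness of $\mathsf{U}^{\mathsf{T}}$ (using additivity of $\mathsf{T}$) is left implicit in the paper but is the intended justification.
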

\begin{proof} The bijective correspondence will follow immediately from Proposition \ref{liftingcoalgmod}. Therefore, it remains to show that we can obtain one from the other. 

\noindent $(1) \Rightarrow (2)$: Let $\lambda$ be a differential mixed distributive law of $(\mathsf{T}, \mu, \eta, \mathsf{n}, \mathsf{n}_{K})$ over $(\oc, \delta, \varepsilon, \Delta, \mathsf{e}, \mathsf{d})$. Consider the induced lifting of $(\oc, \delta, \varepsilon, \Delta, \mathsf{e})$ from Proposition \ref{liftingcoalgmod}. To prove that we have a lifting of the deriving transformation it suffices to show that $\mathsf{d}$ is a $\mathsf{T}$-algebra morphism. Then if $(A,\nu)$ is a $\mathsf{T}$-algebra, commutativity of the following diagram shows that $\mathsf{d}_A$ is a $\mathsf{T}$-algebra morphism:
 \[\xymatrixcolsep{5pc}\xymatrix{\mathsf{T}(\oc(A) \otimes A) \ar@{}[drr]|-{(\ref{distderive})} \ar[d]_-{\mathsf{T}(\mathsf{d}_A)} \ar[r]^-{\mathsf{n}_{\oc(A),A}} & \mathsf{T} \oc(A) \otimes \mathsf{T}(A) \ar[r]^-{\lambda_A \otimes 1_{\mathsf{T}(A)}}& \oc \mathsf{T}(A) \otimes \mathsf{T}(A) \ar[d]^-{\mathsf{d}_{\mathsf{T}(A)}} \ar[r]^-{\oc(\nu) \otimes \nu} \ar@{}[dr]|-{\text{Nat. of. } \mathsf{d}_A} & \oc(A) \otimes A \ar[d]^-{\mathsf{d}} \\
\mathsf{T} \oc(A) \ar[rr]_-{\lambda_A} & & \oc \mathsf{T}(A) \ar[r]_-{\oc(\nu)} & \oc(A)
  }\]
$(2) \Rightarrow (1)$: Let $\tilde{\mathsf{d}}$ be a lifting of $\mathsf{d}$ to $(\mathbb{X}^\mathsf{T}, \otimes^{\mathsf{n}}, (K,\mathsf{n}_K))$. This implies that $\mathsf{d}$ is a $\mathsf{T}$-algebra morphism which in particular for free $\mathsf{T}$-algebras $(\mathsf{T}(A), \mu_A)$ the following diagram commutes:
\begin{equation}\label{distderive2} \begin{gathered}  \xymatrixcolsep{5pc}\xymatrix{\mathsf{T}(\oc \mathsf{T}(A) \otimes \mathsf{T}(A)) \ar[d]_-{\mathsf{T}(\mathsf{d})} \ar[r]^-{\mathsf{n}_{\oc(A),A}} & \mathsf{T}\oc \mathsf{T}(A) \otimes \mathsf{T} \mathsf{T}(A) \ar[r]^-{\mu_A^\sharp \otimes \mu_A} & \oc \mathsf{T}(A) \otimes \mathsf{T}(A) \ar[d]^-{\mathsf{d}} \\
\mathsf{T} \oc \mathsf{T}(A) \ar[rr]_-{\mu_A^\sharp} & & \oc \mathsf{T}(A)
  } \end{gathered}\end{equation}
Consider now the induced coalgebra mixed distributive law $\lambda$ of $(\mathsf{T}, \mu, \eta, \mathsf{n}, \mathsf{n}_{K})$ over \\ \noindent $(\oc, \delta, \varepsilon, \Delta, \mathsf{e})$ from Proposition \ref{liftingcoalgmod}. Then that $\lambda$ satisfies the extra necessary condition follows from commutativity of the following diagram: 
\[ \xymatrixcolsep{3.5pc}\xymatrix{\mathsf{T}(\oc(A) \otimes A) \ar[dd]_-{\mathsf{T}(\mathsf{d}_A)} \ar[dr]_-{\mathsf{T}(\oc(\eta_A) \otimes \eta_A)~~} \ar[r]^-{\mathsf{n}_{\oc(A),A}} & \mathsf{T}\oc(A) \otimes \mathsf{T}(A)  \ar@{}[d]|-{\text{Nat. of. } \mathsf{n}} \ar[r]^-{\mathsf{T}\oc(\eta_A) \otimes 1_{\mathsf{T}(A)}}& \mathsf{T} \oc \mathsf{T}(A) \otimes  \mathsf{T}(A) \ar@{}[dr]|(0.35){(\ref{monadeq})}  \ar[d]_-{1_{\mathsf{T}\oc\mathsf{T}(A)} \otimes \mathsf{T}(\eta_A)}  \ar[r]^-{\mu_A^\sharp \otimes 1_{\mathsf{T}(A)}} & \oc \mathsf{T}(A) \otimes \mathsf{T}(A)  \ar[dd]^-{\mathsf{d}_{\mathsf{T}(A)}} \\
 & \mathsf{T}(\oc \mathsf{T}(A) \otimes \mathsf{T}(A)) \ar@{}[drr]|-{(\ref{distderive2})} \ar[d]_-{\mathsf{T}(\mathsf{d}_{\mathsf{T}(A)})} \ar[r]_-{\mathsf{n}_{A,B}}  & \mathsf{T}\oc \mathsf{T}(A) \otimes \mathsf{T} \mathsf{T}(A) \ar[ur]_-{\mu_A^\sharp \otimes \mu_A} &  \\
\mathsf{T}(\oc(A)) \ar@{}[ur]|-{\text{Nat. of. } \mathsf{d}} \ar[r]_-{\mathsf{T}\oc(\eta_A)}  & \mathsf{T} \oc \mathsf{T}(A) \ar[rr]_-{\mu_A^\sharp}  & & \oc \mathsf{T}(A)
  }\]
\end{proof} 

In a differential category whose coalgebra modality is also a monoidal coalgebra modality, the deriving transformation $\mathsf{d}$ and the monoidal coalgebra modality $(\oc, \delta, \varepsilon, \Delta, \mathsf{e}, \mathsf{m}_{A,B}.\mathsf{m}_K)$ are compatible in the sense that the following diagram commutes \cite[Theorem 25]{Blute2019}: 
\begin{equation}\label{monoidalrule} \begin{gathered}  \xymatrixcolsep{4pc} \xymatrix{ \oc(A) \otimes \left(\oc(B) \otimes B \right) \ar[rr]^-{\Delta_A \otimes (1_{\oc(B)} \otimes 1_B)} \ar[dddd]_-{1_{\oc(A)} \otimes \mathsf{d}_B} && (\oc(A) \otimes A) \otimes (\oc(B) \otimes B) \ar[d]^-{\tau_{\oc(A), A, \oc(B), B}} \\
&& \left(\oc(A) \otimes \oc(B) \right) \otimes \left(\oc(A) \otimes B\right) \ar[d]^-{(1_{\oc(A)} \otimes 1_{\oc(B)}) \otimes (\varepsilon_A \otimes 1_B)} \\
  & & \left(\oc(A) \otimes \oc(B) \right) \otimes (A \otimes B) \ar[d]^-{\mathsf{m}_{A,B} \otimes (1_A \otimes 1_B)} \\
  && \oc (A \otimes B) \otimes (A \otimes B) \ar[d]^-{\mathsf{d}_{A \otimes B}} \\
 \oc(A) \otimes \oc(B) \ar[rr]_-{\mathsf{m}_{A,B}} & & \oc (A \otimes B)
  }  \end{gathered}\end{equation}
  where $\mathsf{n}^{\Delta_A}_{\oc(B) \otimes B}$ is defined as in (\ref{bimonadn}). This coherence allows one to easily lift differential category structure to categories of modules. 

\begin{prop} Let $(\mathbb{X}, \otimes, K)$ be a differential category with a monoidal coalgebra modality $(\oc, \delta, \varepsilon, \Delta, \mathsf{e}, \mathsf{m}, \mathsf{m}_K)$ equipped with deriving transformation $\mathsf{d}$. Then for every monoid $((A, \omega), \nabla, \mathsf{u})$ in $(\mathbb{X}^\mathsf{T}, \otimes^{\mathsf{n}}, (K,\mathsf{n}_K))$, the induced symmetric monoidal mixed distributive law from Proposition \ref{bigresultprop} satisfies (\ref{distderive}) and therefore the category of modules over $(A, \nabla, \mathsf{u})$ is a differential category. \end{prop}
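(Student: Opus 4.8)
The plan is to verify the remaining hypothesis of Proposition \ref{liftingdiff} and then invoke it directly. By Proposition \ref{bigresultprop}, the monoid $((A,\omega),\nabla,\mathsf{u})$ in $(\mathbb{X}^\oc,\otimes^\mathsf{m},(K,\mathsf{m}_K))$ determines an exponential lifting monad $(A\otimes -,\mu^\nabla,\eta^\mathsf{u},\mathsf{n}^\Delta,\mathsf{n}^\mathsf{e}_K,\lambda)$ whose symmetric monoidal mixed distributive law is $\lambda=\omega^\natural$ from (\ref{alphaomega}), that is $\lambda_X=(\omega\otimes 1_{\oc(X)});\mathsf{m}_{A,X}$, and whose comonoidal structure $\mathsf{n}^\Delta$ is assembled from the induced comultiplication $\Delta^\omega$ on $A$ (as in (\ref{!coalgcom}) and (\ref{bimonadn})). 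Since the coalgebra modality is monoidal, Lemma \ref{biglemma} already guarantees that $\lambda$ is a coalgebra mixed distributive law, and $A\otimes -$ is additive because $\mathbb{X}$ is additive symmetric monoidal. Hence the only remaining obligation is to show that $\lambda$ satisfies the deriving-transformation coherence (\ref{distderive}); once this is done, $\lambda$ is a differential mixed distributive law and Proposition \ref{liftingdiff} yields that $\mathbb{X}^{A\otimes -}=\mathsf{MOD}(A,\nabla,\mathsf{u})$ is a differential category.

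To establish (\ref{distderive}), where the monad $\mathsf{T}$ is now $A\otimes -$ and the generic object appearing there is renamed $X$ (to avoid clashing with the monoid object $A$), I would take as the main input the monoidal rule (\ref{monoidalrule}), which is precisely the compatibility between $\mathsf{d}$ and the monoidal coalgebra modality. The key move is to instantiate (\ref{monoidalrule}) with its first object taken to be the monoid object $A$, regarded merely as an object of $\mathbb{X}$ so that the relevant comonoid there is $(\oc(A),\Delta_A,\mathsf{e}_A)$ supplied by the coalgebra modality, and its second object taken to be $X$. I would then precompose the whole square with $\omega\otimes 1_{\oc(X)\otimes X}$. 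On the left-hand leg, bifunctoriality of $\otimes$ lets me slide $\omega$ past $\mathsf{d}_X$, collapsing $(\omega\otimes 1_{\oc(X)\otimes X});(1_{\oc(A)}\otimes \mathsf{d}_X);\mathsf{m}_{A,X}$ into $(1_A\otimes \mathsf{d}_X);\lambda_X$, which is exactly the left leg of (\ref{distderive}).

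The substantive step is the right-hand leg, and this is where I expect the main obstacle to lie. After precomposition, (\ref{monoidalrule}) supplies a composite built from $\mathsf{n}^{\Delta_A}_{\oc(X),X}$, i.e. from the \emph{coalgebra-modality} comultiplication $\Delta_A$ on $\oc(A)$, whereas the right leg of (\ref{distderive}) is built from $\mathsf{n}^{\Delta^\omega}_{\oc(X),X}$, i.e. from the \emph{induced} comultiplication $\Delta^\omega$ on $A$. The reconciling identity is the fact that $\omega$ is a comonoid morphism, recorded in (\ref{omegacomon}), namely $\Delta^\omega;(\omega\otimes\omega)=\omega;\Delta_A$, together with the counit law $\omega;\varepsilon_A=1_A$. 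Concretely, I would isolate the portion of the precomposed right leg that produces the two tensor copies of the first factor and show $\Delta^\omega;(\omega\otimes 1_A)=(\omega;\Delta_A);(1_{\oc(A)}\otimes\varepsilon_A)$, which follows immediately by substituting $\omega;\Delta_A=\Delta^\omega;(\omega\otimes\omega)$ and then cancelling $\omega;\varepsilon_A=1_A$ in the second factor. The remaining bookkeeping is routine naturality of the interchange map $\tau$ and of $\mathsf{m}$, matching the placements of $\varepsilon_A$, $\omega$, and $\mathsf{m}_{A,X}$ so that the precomposed right leg of (\ref{monoidalrule}) becomes exactly $\mathsf{n}^{\Delta^\omega}_{\oc(X),X};(\lambda_X\otimes 1_{A\otimes X});\mathsf{d}_{A\otimes X}$. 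With both legs identified, (\ref{distderive}) holds, $\lambda$ is a differential mixed distributive law, and the statement follows from Proposition \ref{liftingdiff}.
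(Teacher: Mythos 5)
Your proposal is correct and follows essentially the same route as the paper: the paper's proof is a single diagram that precomposes the monoidal rule (\ref{monoidalrule}), instantiated at the monoid object $A$ and a generic $X$, with $\omega$ on the first tensor factor, then reconciles $\Delta_A$ with $\Delta^\omega$ via the comonoid-morphism identity (\ref{omegacomon}) and cancels $\omega;\varepsilon_A = 1_A$ using (\ref{!coalg}), with naturality of $\tau$ and $\mathsf{m}$ handling the rest — exactly the steps you describe. The reduction to Proposition \ref{liftingdiff} and the additivity of $A \otimes -$ are likewise as in the paper.
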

\begin{proof} Let $((A, \omega), \nabla, \mathsf{u})$ be a monoid in $(\mathbb{X}^\mathsf{T}, \otimes^{\mathsf{n}}, (K,\mathsf{n}_K))$ and consider the induced symmetric monoidal mixed distributive law satisfying (\ref{nablastrong}) $\omega^\natural$ as defined in (\ref{alphaomega}). Then the desired result follows from commutativity of the following diagram: 
{\scriptsize \[\xymatrixcolsep{1.05pc}\xymatrixrowsep{5pc} \xymatrix{A \otimes (\oc(X) \otimes X) \ar[ddrrrr]|-{\omega \otimes (1_{\oc(X)} \otimes 1_X)} \ar[rrrrr]^-{1_A \otimes \mathsf{d}_X} \ar[d]|-{\Delta^\omega \otimes (1_{\oc(X)} \otimes 1_X)} &&&&& A \otimes \oc(X) \ar[dd]|-{\omega \otimes 1_{\oc(X)}} \\
    (A \otimes A) \otimes (\oc(X) \otimes X)  \ar@{}[rr]|-{(\ref{omegacomon})}  \ar[d]|-{\tau_{A,A,\oc(X),X}} \ar[ddrrrr]|-{(\omega \otimes \omega) \otimes (1_{\oc(X)} \otimes 1_X)} &&  &&&  \\
     (A \otimes \oc(X)) \otimes (A \otimes X)  \ar[d]|-{(\omega \otimes 1_{\oc(X)}) \otimes (1_A \otimes 1_X)} && && \oc(A) \otimes (\oc(X) \otimes X) \ar[d]^-{\Delta_A \otimes (1_{\oc(X)} \otimes 1_X)}  \ar[r]^-{1_{\oc(A)} \otimes \mathsf{d}_X}& \oc(A) \otimes \oc(X) \ar[dd]|-{\mathsf{m}_{A,X}}\\
     ( \oc(A) \otimes \oc(X) ) \otimes (A \otimes X) \ar@{}[urr]|-{(\ref{!coalg})} \ar[d]|-{\mathsf{m}_{A,X} \otimes (1_A \otimes 1_X)} &&\ar[ll]_-{(1_{\oc(A)} \otimes 1_{\oc(X)}) \otimes (\varepsilon_A \otimes 1_X)} (\oc(A) \otimes \oc(X) \otimes \oc(A) \otimes X)&& \ar[ll]^-{\tau_{\oc(A),\oc(A),\oc(X),X}} (\oc(A) \otimes \oc(A)) \otimes (\oc(X) \otimes X) \\
      \oc(A \otimes X) \otimes (A \otimes X)  \ar@{}[urrrr]|-{(\ref{monoidalrule})} \ar[rrrrr]_-{\mathsf{d}_{A \otimes X}} &&&&& \oc(A \otimes X)}\] }%
      \end{proof}  

\section{Conclusion}

The main goal of this paper was to study for which monoids in a linear category was the category of modules of said monoid again a linear category. We showed that the answer to this question was groups in the Eilenberg-Moore category of the monoidal coalgebra modality of a linear category, that is, cocommutative Hopf monoids in the linear category which come equipped with a special natural transformation (Theorem \ref{bigresultthm}). Therefore categories of modules over these cocommutative Hopf monoids are linear categories and therefore a model of $\mathsf{MELL}$. Thus the work of this paper extends Blute and Scott's work \cite{blute2004category, blute1996hopf} by adding to the study of categories of modules over Hopf monoids as models of linear logic. This work should also hopefully pave the way for studying other applications of Hopf monoids in linear logic. For example, the bimonoid version of a coalgebra modality is known as a bialgebra modality, which is used in the study of differential categories \cite{blute2006differential, Blute2019}. As such one could study the notion of a  ``Hopf algebra modality'', a Hopf monoid version of a coalgebra modality, and look at what the antipode adds to the story. In particular, Proposition \ref{hopfneg} implies that a monoidal coalgebra modality over an additive symmetric monoidal category with negatives should be an example of a ``Hopf algebra modality''.  

This paper also introduces the notion of a symmetric monoidal mixed distributive law (Definition \ref{mixedmondef}) and showed these are in bijective correspondence with liftings of symmetric monoidal comonads to the Eilenberg-Moore category of symmetric comonoidal monads and vice-versa (Proposition \ref{liftsymmix}). While in this paper we focused mainly on symmetric comonoidal monads of the form $A \otimes -$, it would be interesting to find and study more exotic examples of symmetric monoidal mixed distributive laws (whether they are related to $\mathsf{MELL}$ or not). One other possible direction would be to study symmetric monoidal mixed distributive laws from a higher category theory perspective. 


\bibliographystyle{alpha}      
\bibliography{LiftingPaper}   

\end{document}